\documentclass[twocolumn,pre,floatfix,superscriptaddress,nofootinbib]{revtex4-2}
\usepackage[T1]{fontenc}
\usepackage{mathrsfs}
\usepackage{amsmath}
\usepackage{amssymb}
\usepackage{graphicx}
\usepackage{braket}
\usepackage[svgnames]{xcolor}
\usepackage{comment}
\usepackage{natbib}
\usepackage{pifont}
\usepackage{algorithmic}
\usepackage{algorithm} 
\usepackage{multirow}
\usepackage{tabularx}
\usepackage{bm}
\usepackage{amsthm}
\usepackage{float}
\usepackage[utf8]{inputenc}
\usepackage{thm-restate}
\usepackage{booktabs}

\usepackage{listings}

\makeatletter
\newcommand\blfootnote[1]{%
  \begingroup
    \renewcommand\thefootnote{}%
    \renewcommand\@makefntext[1]{\noindent ##1}%   ← no marker box, no indent
    \footnotetext{#1}%
  \endgroup
}
\makeatother

\usepackage[normalem]{ulem}
\usepackage{hyperref}
\usepackage{dsfont}

\hypersetup{
  colorlinks=true,
  linkcolor=blue,
  citecolor=ForestGreen,
  urlcolor=DarkOrchid
}

\newcommand{\HP}{\mathbb{H}}

\newcommand{\etal}{\textit{et~al.}}

\begin{document}

\title{Quantum-Informed Portfolio Selection:\\ An End-to-End Pipeline Validated on Trapped-Ion Hardware with Real Market Data}

\author{Romina Yalovetzky\textsuperscript{$\mathsection$}}
\thanks{Corresponding author: romina.yalovetzky@jpmchase.com.}
\affiliation{Global Technology Applied Research, JPMorgan Chase, New York, NY 10017 USA}

\author{Martin J. A. Schuetz\textsuperscript{$\mathsection$}}
\affiliation{Amazon Advanced Solutions Lab, Seattle, Washington 98170, USA}
\affiliation{AWS Center for Quantum Computing, Pasadena, CA 91125, USA}

\author{Zichang He}
\affiliation{Global Technology Applied Research, JPMorgan Chase, New York, NY 10017 USA}

\author{Jiayu Shen}
\affiliation{Global Technology Applied Research, JPMorgan Chase, New York, NY 10017 USA}

\author{Yue~Sun}
\affiliation{Global Technology Applied Research, JPMorgan Chase, New York, NY 10017 USA}

\author{Rudy~Raymond}
\affiliation{Global Technology Applied Research, JPMorgan Chase, New York, NY 10017 USA}

\author{Shauna Sahay}
\affiliation{JPMorgan Chase, New York, NY 10017 USA}

\author{Kishore Perla}
\affiliation{JPMorgan Chase, New York, NY 10017 USA}

\author{Ruben S. Andrist}
\affiliation{Amazon Advanced Solutions Lab, Seattle, Washington 98170, USA}

\author{Grant Salton}
\thanks{Work performed while at Amazon Advanced Solutions Lab.}
\noaffiliation

\author{Helmut G. Katzgraber}
\thanks{Work performed while at Amazon Advanced Solutions Lab.}
\affiliation{55 North Management ApS, K{\o}benhavn, Denmark}

\author{Roger Bongiovanni}
\thanks{Work performed while at JPMorgan Chase, New York, NY 10017 USA.}
\noaffiliation

\author{Niraj Kumar}
\affiliation{Global Technology Applied Research, JPMorgan Chase, New York, NY 10017 USA}

\author{Rob Otter}
\affiliation{Global Technology Applied Research, JPMorgan Chase, New York, NY 10017 USA}

\date{\today}

\begin{abstract}
Portfolio diversification---a cornerstone of modern investment management---can be formulated as a Maximum Independent Set (MIS) problem on asset correlation graphs. Solving this problem at scale is computationally challenging, motivating the exploration of quantum algorithms for practical financial optimization. We propose an end-to-end pipeline leveraging qReduMIS, a recursive hybrid quantum-classical algorithm. Rather than using quantum optimization to directly produce a final solution, qReduMIS leverages independent set measurements from the Quantum Approximate Optimization Algorithm (QAOA) to identify frozen nodes---vertices likely to belong to optimal solutions---thereby guiding and unblocking subsequent (provably optimal) classical reductions on the remaining graph. We benchmark qReduMIS on real financial data from four major market indices with up to 225 assets, executing experiments on Quantinuum's 98-qubit trapped-ion Helios system, with QAOA circuits acting on kernels of up to 78 qubits and 1016 two-qubit gates. While standalone QAOA fails to find the optimal solution for two of the largest indices (S\&P~100 and Nikkei~225), qReduMIS achieves success probabilities of $0.40$ and $0.95$, respectively, with average approximation ratios $\geq 0.96$ across all four indices. We perform a systematic benchmark on the Quantinuum H2-1 noisy emulator over 73 asset correlation graphs of varying size showing that, for $p=2$ QAOA layers, the optimal time-to-solution scaling exponent of qReduMIS is $3.2$ times smaller than that of standalone QAOA. 
\end{abstract}

\maketitle
\blfootnote{ $^{\mathsection}$ These authors contributed equally.}

\section{Introduction}
Diversified portfolio construction is fundamental to balancing risk and 
return in finance~\cite{markowitz1990foundations, elton2009modern, 
fabozzi2008portfolio}. A central challenge is selecting a subset of 
assets from a vast and ever-changing universe such that the resulting 
portfolio is robust to market fluctuations and avoids excessive exposure 
to correlated risks~\cite{boginski:05, merton1972analytic, 
bouchaud2003theory}. A powerful abstraction for this task is the 
\emph{Maximum Independent Set} (MIS) problem, a paradigmatic NP-hard 
combinatorial optimization problem~\cite{feo:94, gemsa:16, hale:80, 
dong:22, boginski:05, kalra:08, hidaka2023correlation}. In the resulting 
formulation---which we term the \emph{portfolio selection problem} 
(PSP)---assets are represented as nodes in an asset correlation graph, with edges 
connecting pairs whose correlation exceeds a given threshold; the MIS 
then identifies the largest subset of mutually uncorrelated 
assets~\cite{kalra:08, hidaka2023correlation, boginski:05}. Crucially, these asset correlation graphs are dense, exhibit non-local 
connectivity, and have heterogeneous degree distributions---structural 
properties that differ fundamentally from the sparse, regular, or 
geometrically constrained graphs typically studied in combinatorial 
optimization.

Traditional methods for solving MIS, including exact algorithms and heuristics, often struggle to scale to the non-local graphs arising in financial applications \cite{cazals2025identifying, xiao2017exact}, where edges encode pairwise correlations that do not respect any
underlying local structure. Recent advances in quantum computing have opened new avenues for tackling such hard combinatorial optimization problems. 
Quantum optimization algorithms such as Quantum Annealing 
(QA)~\cite{kadowaki:98, farhi:00, farhi:01, das:08, hauke:20} and Quantum Approximate Optimization Algorithm 
(QAOA)~\cite{farhi:14, zhou:20} offer potential speedups, but face 
significant challenges: both exhibit suppression of success probability 
with increasing instance hardness---of which problem size is one 
driver. While there has been great efforts in protecting quantum optimization algorithms with error mitigation and correction~\cite{he2025performance, perlin2026fault, omanakuttan2025threshold, jin2025iceberg}, this is a challenge that persists even without hardware 
noise~\cite{quiroz2025quantifying, mandl2024amplitude, ebadi:22} and 
can drop to zero for hard 
problems~\cite{ebadi:22, schuetz2025qredumis}. Previous experimental demonstrations of QAOA have spanned a range of
hardware platforms and tackled a variety of combinatorial problems---MaxCut
on superconducting processors~\cite{otterbach:17, harrigan2021quantum},
long-range Ising and weighted MaxCut problems on trapped-ion
hardware~\cite{pagano:20, sciorilli:25, decross:23, montanez:25}, and
MIS on unit-disk graphs using Rydberg atom arrays~\cite{ebadi:22, byun:22}, as well as more recent gate-based MIS demonstrations
on superconducting processors~\cite{dasgupta:26}. Despite this breadth, the \emph{structure} of most of the instances considered has been limited: the graph families considered are either hardware-native (e.g.,
unit-disk graphs on neutral-atom geometries) or low-density and regular (e.g., random
$3$-regular MaxCut). None of these classes captures the
dense, non-local, heterogeneous-degree structure characteristic of
correlation-based graphs arising in real-world applications such as
portfolio selection.

Hybrid quantum-classical algorithms have emerged as a promising alternative
~\cite{bravyi:20, finvzgar2024quantum, acharya:24, he2026regularized}, combining classical
logic with quantum subroutines that leverage the 
sampling capabilities of quantum hardware. For MIS, Brady and Hadfield~\cite{brady2023iterative} introduced the Iterative Quantum Algorithms paradigm, generalizing recursive QAOA (RQAOA) to accommodate hard problem constraints and thereby enabling its application to constrained problems such as MIS. Building on this recursive paradigm, Fin\v{z}gar~\etal{}~\cite{finvzgar2024quantum} proposed the Quantum-Informed Recursive Optimization (QIRO) framework, and Wybo~\etal{}~\cite{wybo2026scalable} introduced a quantum-enhanced greedy algorithm for regular graphs, confining each quantum call to light-cone 
subgraphs on a 20-qubit superconducting device. By exploiting locality, it scales to arbitrarily large instances (up to 5k nodes in their experiments). Exploiting locality works well for sparse graphs but does not readily extend to dense, non-local instances.

The recently proposed qReduMIS algorithm~\cite{schuetz2025qredumis} integrates exact classical reduction techniques with a Quantum Processing Unit (QPU) acting as a co-processor that identifies so-called ``frozen'' nodes---nodes with high (or low) marginal probability of belonging to a maximum independent set---thereby unblocking subsequent classical 
reductions. This framework has demonstrated the potential to overcome the exponential suppression of success probability observed in QA and Simulated Annealing (SA) for MIS on random unit-disk graphs, though its applicability to structured, 
real-world graphs remained an open question for two reasons: first, the original implementation relied on quantum annealing hardware
whose limited qubit connectivity~\cite{schuetz2025quantum} cannot natively
embed the dense, non-local topology of real-world graphs such as market
correlation networks~\cite{boginski:05}; and second, the performance of the frozen-node identification mechanism was validated only on synthetic unit-disk instances, whose geometric structure and sparsity differ fundamentally from the heterogeneous, dense graphs arising in practical applications.

In this work, we directly address this gap by presenting an end-to-end 
pipeline for quantum-informed portfolio selection that builds on and 
extends the qReduMIS framework to gate-based QAOA. We demonstrate this 
pipeline on real asset correlation data across four major stock indices 
with up to 225 assets, on Quantinuum's 98-qubit trapped-ion Helios 
system \cite{ransford202698}---with QAOA circuits acting on kernels of up to 78 qubits. Our 
contributions are three-fold:
\begin{itemize}
\item \textbf{End-to-end demonstration on hardware.} We deploy the full 
pipeline---from asset correlation data to diversified portfolio 
selection---on trapped-ion hardware across four major stock indices. Notably, standalone QAOA fails to find the optimal solution for the two
largest indices, while qReduMIS (powered by QAOA) achieves success probabilities of up to
$0.95$ with average approximation ratios $\geq 0.96$ across all four
indices.
\item \textbf{Algorithmic extension to gate-based devices.} Market graphs 
exhibit dense, non-local connectivity that exceeds the capabilities of typical quantum annealing hardware. At the algorithmic level, we extend qReduMIS from a quantum annealing algorithm \cite{schuetz2025qredumis} to gate-based QAOA. At the hardware level, we deploy the resulting pipeline on trapped-ion devices, whose all-to-all qubit connectivity is naturally suited for the non-local structure of market graphs---in contrast to Rydberg annealers with local (unit-disk) connectivity. 

\item \textbf{Rigorous benchmark.} We empirically evaluate the pipeline 
across multiple figures of merit on Quantinuum's noisy emulator of the $H2$-1 device, demonstrating that qReduMIS scales 
significantly better with problem size than standalone 
QAOA---reducing the optimal TTS scaling
exponent by $3.4$ times for $p=2$ QAOA layers.
\end{itemize}

The remainder of this paper is organized as follows. In Section~\ref{section:problem_specification}, we formalize the portfolio selection problem as the MIS problem, Section~\ref{section:pipeline} introduces the quantum-informed pipeline to tackle these problems, Section~\ref{section:numerical} discusses the main numerical results, Section~\ref{sec:methodologoy} details the methodology, and Section~\ref{conclusion} concludes with a discussion of implications and future directions.

\section{Problem formulation \label{section:problem_specification}}
\label{sec:problem}

We first formalize our problem statement of selecting multiple financial assets within the framework of the Maximum Independent Set (MIS), which is a well-known graph problem in theoretical computer science. 

\textbf{Definition: Portfolio Selection Problem (PSP).} \textit{For a given universe of $N$ assets with inter-asset correlations $C = [c_{ij}] \in \mathbb{R}^{N \times N}$, and correlation sensitivity (or threshold) $0 \leq \lambda \leq 1$, find the largest (diversified) basket of assets, with no correlation between any pair of selected assets exceeding, in absolute value, a maximum level $\lambda$.} 

The use of the absolute value reflects the role of this step as
\emph{diversification}, not hedging: two strongly anti-correlated assets
are not independent sources of return but mirror images of each other,
and treating them symmetrically with strongly positively correlated pairs
ensures that the selected basket carries genuinely independent
information.
If desired, it is straightforward to adjust this choice in our pipeline as to allow the selection of anti-correlated assets in our basket. 

It is straightforward to see that solving PSP is equivalent to solving MIS on an unweighted graph $\mathcal{G}=(\mathcal{V}, \mathcal{E})$ derived from a \textit{market graph} \cite{kalra:08, hidaka2023correlation, macMahon:15}, where nodes $\mathcal{V}$ represent financial assets and edges are determined by pairwise correlations. Specifically, an edge $(i,j) \in \mathcal{E}$ is placed between assets $i$ and $j$ whenever their correlation $c_{ij}$ satisfies $|c_{ij}| > \lambda$, for a given threshold $\lambda$. Hereafter, we use ``asset correlation graph,'' ``asset graph,'' and ``market graph'' interchangeably to refer to this unweighted graph.

\begin{figure*}
    \centering
    \includegraphics[width=\linewidth]{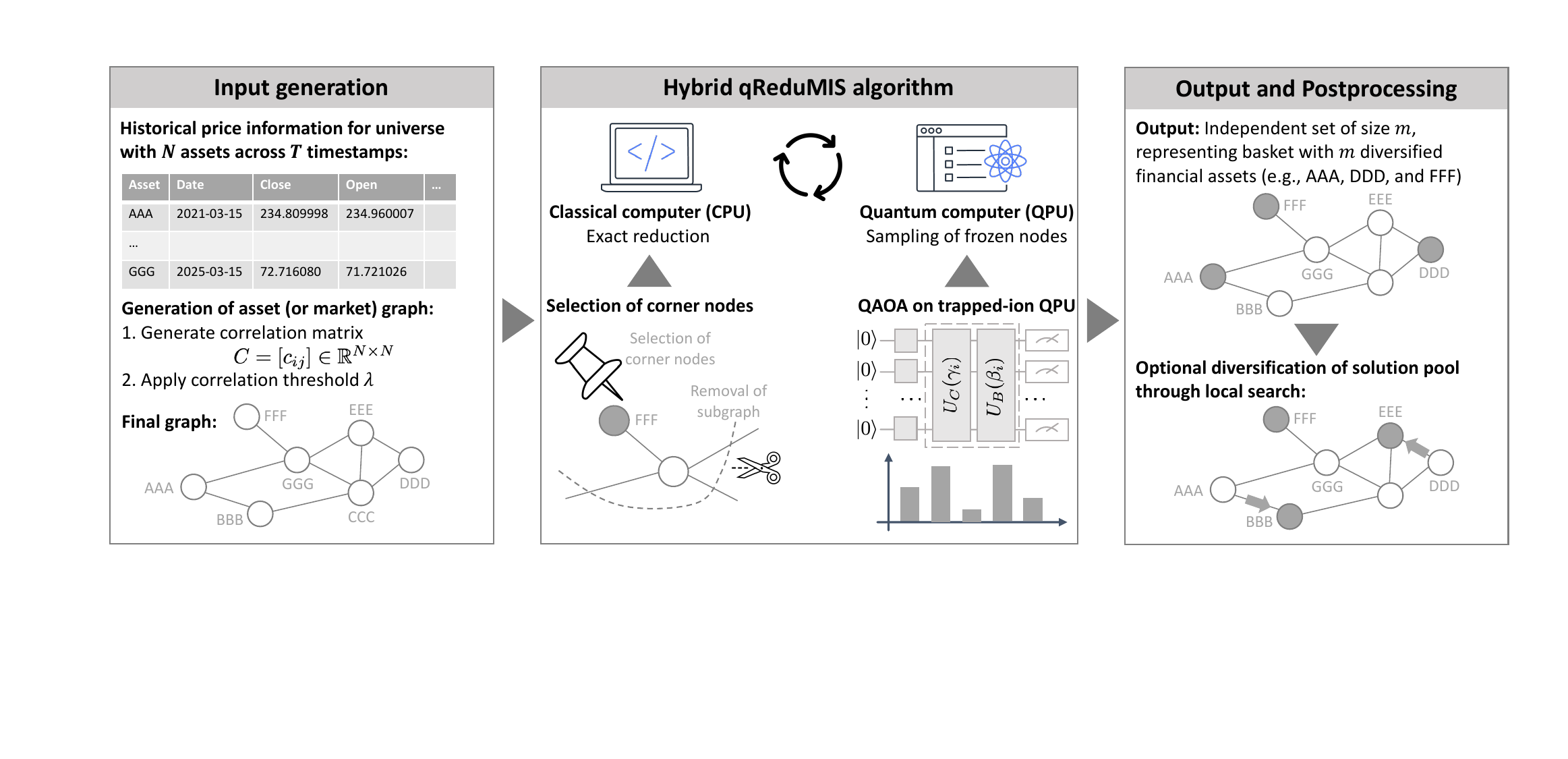}
    \caption{End-to-end pipeline for quantum-informed portfolio selection given a universe of $N$ assets. The input asset (or market) graph is generated from the correlation matrix $C$ based on the assets' returns across $T$ timestamps (diagrammatically shown in the table) and the correlation threshold $\lambda$. This graph defines the input to the hybrid (recursive) qReduMIS algorithm to solve the PSP-MIS problem. qReduMIS intermixes exact reduction (in polynomial time) executed on classical hardware (CPU) and sampling of frozen nodes on the QPU, as to unblock the classical exact reduction wrapper when necessary. The set of frozen nodes is identified from quantum measurements (indicated as the output histogram) obtained through execution of QAOA circuits with the MIS Hamiltonian as cost Hamiltonian run on trapped-ion quantum hardware. The output is given by an independent set of size $m$ that defines a diversified basket of $m$ uncorrelated assets. Colored nodes represent one candidate solution,  with the option to generate additional solutions via simple local search (as indicated by the arrows swapping selection of neighboring nodes).}
    \label{fig:pipeline}
\end{figure*}

\textbf{The portfolio selection problem as MIS (PSP-MIS) on market graphs.} Let $x_{i}=1$ if asset (node) $i=1, \dots, N$ is selected as part of the diversified basket, and $x_{i}=0$ otherwise. 
Given the market graph $\mathcal{G}$, we can then formalize our PSP optimization problem as 
an optimization problem in Quadratic Unconstrained Binary Optimization (QUBO) form, with a Hamiltonian that includes a (soft) penalty to
non-independent configurations as  
\begin{equation}
H = -\sum_{i} x_{i} + U \sum_{(i,j) \in \mathcal{E}} x_{i}x_{j},
\label{eq:hamiltonian_mis}
\end{equation}
with a negative sign in front of the first term because the largest independent set is searched for within a minimization problem, and where the penalty parameter $U$ enforces the independence constraints. Throughout this paper, we take the same setup as Ref.~\cite{farhi2025lower} and set $U=1$. For $U > 1$ the ground state of $H$ is guaranteed to be a MIS; at $U = 1$ the maximum independent sets remain ground states but are degenerate with certain non-independent configurations. Since our post-processing step (Appendix~\ref{appendix:numerical_experiments}) repairs independence-constraint violations, feasibility is enforced after sampling rather than through the penalty itself, making $U = 1$ a sufficient choice.

By doing this, the PSP is formulated as the well-known (NP-hard) MIS problem \cite{garey:90}, where the goal is to find the largest independent set in a graph (i.e., the largest subset of vertices such that no two vertices in this set are connected by an edge), with the independence constraint ensuring the absence of strongly correlated assets (relative to the threshold parameter $\lambda$) in the diversified basket. From now on, we refer to this problem as PSP-MIS.

Note that the correlation among the assets can be estimated by different methods. In this work, we focus on the pair-wise correlation of the daily returns, as explained in detail in Appendix~\ref{appendix:PSP}.

\textbf{Portfolio management.} The PSP framework enables risk diversification by identifying the largest subset of low-correlated assets for portfolio inclusion, which can be integrated into a multi-stage management pipeline where portfolio selection precedes composition~\citep{hidaka2023correlation}. Maximizing this set can potentially enhance flexibility, allowing managers to optimize allocations, adjust sector exposures, and respond to market changes. The PSP solver also reduces the problem size for downstream optimization, potentially addressing scales beyond standard mixed-integer programming. Furthermore, we derive and discuss explicit bounds on the portfolio quality for two allocation strategies---uniform and inverse variance---which are commonly used and have been validated by backtesting on real-world data in previous work~\cite{hidaka2023correlation}; see Theorem~\ref{theo:bound_risk} in Appendix~\ref{appendix:PSP}.

\section{End-to-end pipeline for quantum-informed PSP-MIS \label{section:pipeline}} 

We introduce an end-to-end pipeline for quantum-informed PSP-MIS which consists of three components as shown in Fig.~\ref{fig:pipeline}. First, the inputs are PSP-MIS instances, which are represented by market graphs. Given a universe of $N$ assets, their correlations expressed in a matrix $C$, and a correlation sensitivity (or threshold) $\lambda$, the asset (or market) graph is constructed. This is input into the qReduMIS algorithm. The qReduMIS algorithm is a recursive algorithm to solve the MIS problem for general graphs. qReduMIS performs a number of iterations invoking iteratively the classical computer for the (exact) reduction and then executing QAOA circuits for the resulting kernel graphs on the QPU, in this case a trapped-ion quantum computer. Frozen nodes are sampled from the QAOA measurements and their removal updates the kernel graph that is input to the next algorithm iteration. The final output is an MIS containing $m$ diversified assets. These constitute the largest basket of assets whose pairwise correlations are, in absolute value, below the sensitivity $\lambda$. One can find other alternative independent sets (portfolios) via simple postprocessing for portfolio diversification through local search. In practice this may be interesting because we can generate different portfolios of the same quality with little overhead. These diversified baskets can then serve as input to downstream portfolio construction and allocation stages, as discussed above.

\textbf{The qReduMIS algorithm.} Throughout this paper we use specific terminology to distinguish the
different execution levels within qReduMIS (e.g., \emph{trial},
\emph{iteration}, \emph{QPU call}, \emph{qshot}); precise definitions
are provided in the glossary in Table~\ref{tab:glossary} in Appendix \ref{appendix:glossary}. A diagram illustrating how qReduMIS solves the MIS problem on an example graph is shown in Fig.~\ref{fig:algo_diagram}. qReduMIS begins by applying the classical reduction
techniques to the input graph $\mathcal{G}$, producing a smaller kernel graph
$\mathcal{K}$ while maintaining sets of selected and removed nodes. We primarily
use exposed corner node removal, a simple yet effective kernelization method that
efficiently reduces $\mathcal{G}$ to an irreducible kernel by removing exposed
corner nodes (also known as simplicial vertices), which are guaranteed to be part
of some maximum independent set (MIS)~\citep{butenko:02, strash:16, schuetz2025quantum}.
This process preserves optimality and can be generalized heuristically for weighted
problems~\citep{schuetz2025qredumis}. In Fig.~\ref{fig:algo_diagram}~(b), node~$1$ is a so-called corner
node and it is selected. Thus, its neighbor (node~2) cannot be in
the solution and it is added to the set of removed
nodes. With this, the reduced kernel $\mathcal{K}$ is identified, indicated in
orange in Fig.~\ref{fig:algo_diagram}~(c). Then, this kernel is input to the quantum hardware, as
indicated in violet in Fig.~\ref{fig:algo_diagram}~(d). A quantum optimization algorithm, in this case
QAOA, is executed to generate independent set measurements. In this example, the
measurements (after postprocessing to fix up violations) are shown in the inset table, indicating the nodes belonging to each
independent set (bitstrings) and the respective number of counts corresponding to
those sets.

\begin{figure*}
    \centering
    \includegraphics[width=0.8\linewidth]{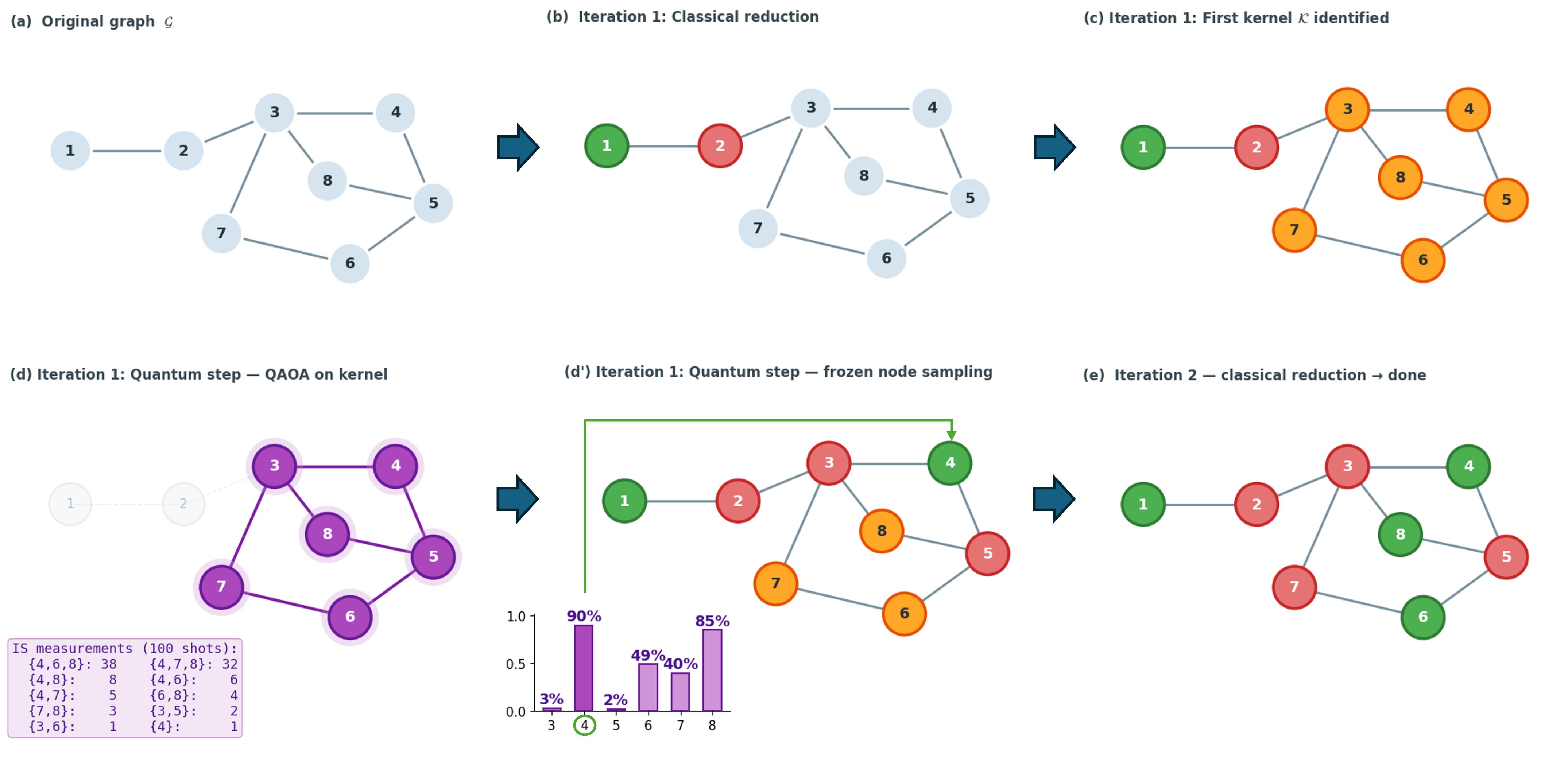}
    \caption{Schematic of the qReduMIS algorithm on an 8-node graph
    $\mathcal{G}$ (illustrative; not experimental data).
    \textbf{(a)}~Original graph~$\mathcal{G}$.
    \textbf{(b)}~Classical reduction: vertex~$1$ (degree~1) is selected;
    its neighbour~$2$ is removed.
    \textbf{(c)}~The kernel $\mathcal{K}$ (orange) is forwarded to the
    quantum subroutine.
    \textbf{(d)}~QAOA is executed on~$\mathcal{K}$ (violet); measured
    independent sets and counts are shown in the inset.
    \textbf{(d')}~Marginal probabilities computed from the measurements
    reveal node~4 as the top-ranked node. Following a greedy strategy over the marginals, it is selected and its
    neighbours~$\{3,5\}$ removed, triggering a new classical reduction.
    In practice, ties in marginal probability are broken at random.
    \textbf{(e)}~Classical reduction fully resolves the remaining subgraph,
    yielding the maximum independent set $I^{*}=\{1,4,6,8\}$ of size~4. A simple postprocessing step based on local search (e.g., swapping
    node~$1$ with~$2$ or node~$6$ with~$7$) can further enhance solution
    diversification while preserving solution quality.}
    \label{fig:algo_diagram}
\end{figure*}

The standard way of using QAOA, and most of the state-of-the-art quantum
algorithms, is to perform an analysis over the independent set measurements with
the goal of finding the optimal solution. With qReduMIS we propose to not only
perform this to keep track of the best solution output by the hardware, but more
importantly, we leverage these measurements to identify \textit{frozen nodes}.
These are nodes with a high probability of being either included in or excluded
from optimal solutions, corresponding to what we define as the in-set and out-set
criteria, respectively. This work focuses specifically on the in-set criterion and
the selection of a single frozen node, though the approach can be extended to
handle multiple frozen nodes simultaneously. To identify a frozen node, we observe
the distribution of the marginal probability of each of the nodes from the QAOA
measurements, as shown in the histogram in Fig.~\ref{fig:algo_diagram}~(d'). In this case,
node~$4$ has the highest marginal probability (90\%) and it is
identified as a frozen node and is selected to be
in the solution and it is removed from the updated kernel graph. As a consequence of its selection, its neighbors (nodes 3 and 5) are removed as they cannot be part of the
solution. With this, the kernel is updated for the next iteration and it consists of the three orange nodes in Fig.~\ref{fig:algo_diagram}~(d'). Then, the next iteration starts with this updated kernel as input and
the classical reduction is performed. Node~8 is now isolated and is selected, and node~6 is a pendant
and is also selected while its neighbour~7 is removed. With this, the algorithm
concludes as shown in Fig.~\ref{fig:algo_diagram}~(e), and an optimal solution is found (nodes in green). Note that in the ideal setting, nodes $4$ and $8$ would have the same marginal probability. 

While several variants of qReduMIS are conceivable, ranging from greedy heuristics to exact branch-and-reduce-type algorithms, here we focus on a semi-greedy implementation thereof. It generates a subset of the top-ranked nodes and then randomly selects one of them.

\textbf{Frozen node sampling.}
We propose a fundamentally different way of leveraging the independent set
measurements, with the goal of correctly identifying frozen nodes whose removal
produces an updated kernel that can be further reduced by the classical reduction.
By keeping track of the largest independent set measured on the QPU, the case in
which just one classical reduction, followed by standard QAOA, solves the problem
to optimality, can be recovered. But most interestingly, our approach can also find
the optimal solution in the cases in which standalone QAOA over the first kernel graph cannot. Imagine that
instead of measuring optimal solutions, the QPU outputs suboptimal solutions (e.g., with size less than three in the example of Fig.~\ref{fig:algo_diagram} these are solutions of size less than three). Still, it is
likely that by looking at the marginal probabilities, we
could identify node~$4$ as a frozen node, or
any other ground-truth frozen node (i.e.,
one that belongs to at least one MIS). Note that the
probability of sampling a frozen node is strictly larger than the probability of
sampling the optimal solution (refer to Theorem~\ref{theo:prob_frozen} in
Appendix~\ref{appendix:runtime}). Thus, unlike in the example, the optimal solution is not observed in the measurement; however, steps in Fig.~\ref{fig:algo_diagram}~(d') and (e) proceed identically, and the algorithm still outputs an optimal solution at the end. In this case, qReduMIS would succeed in finding
the optimal solution while standalone QAOA only finds suboptimal solutions. 

We emphasize that qReduMIS is not merely a preprocessing technique; rather, it introduces a novel way to leverage the output of quantum optimization algorithms. qReduMIS leverages a hybrid \textit{reduce-and-sample} paradigm; it goes
through iterations of classical reduction followed by the routine executed on
the QPU. The number of iterations depends on the problem itself and on the trial, as
this is a semi-greedy algorithm.

The number of independent set measurements used to build the distribution of marginal probabilities for each node is a hyperparameter. We call this value $\mathrm{qshots}$, which represents the number of quantum shots (measurements) performed in each call to the QPU. The optimal number of $\mathrm{qshots}$ can vary depending on the specific problem instance and the stage of the algorithm (iteration) when the QPU is called. In practice, one could implement a scheduler to adjust $\mathrm{qshots}$ dynamically. For more details, see Appendix~\ref{appendix:empirical_tts}.

Market graphs tend to be highly interconnected, as most assets exhibit
some degree of correlation, and many implicit correlations exist (e.g.,
two assets may be correlated through a shared intermediary). After applying the correlation sensitivity (or threshold) $\lambda$ to obtain the (unweighted) market graphs, these become sparser but still substantially connected. Crucially, the
resulting edges do not necessarily reflect local graph structure, producing a non-trivial topology that lacks regularity or locality. This non-local edge structure of market graphs places specific demands on the quantum hardware: a backend with high qubit connectivity is strongly preferred, as limited connectivity would require additional overhead to implement the necessary gate operations between non-neighboring qubits. This motivates our choice of trapped-ion quantum processors, which offer all-to-all qubit connectivity and are thus naturally suited to the dense, non-local structure of unweighted (thresholded) market graphs. To generate independent-set samples for identifying frozen nodes, we use QAOA \citep{farhi:14, zhou:20}, which is a gate-based variational algorithm, as the quantum subroutine in qReduMIS. QAOA is one of the most widely studied quantum algorithms for combinatorial optimization on near-term, gate-based hardware \citep{farhi:14}, with experimental demonstrations on both superconducting qubits \citep{otterbach:17, harrigan2021quantum} and trapped ions \citep{pagano:20}.

Note that QAOA parameters can, in principle, be optimized independently via a classical-quantum variational loop and subsequently incorporated into the qReduMIS framework as fixed circuit parameters. Although such optimization is in general a challenging task~\cite{bittel2021training}, pre-trained parameters combined with established scaling techniques~\citep{galda2021transferability, brandao2018fixed} can be employed to mitigate this difficulty. In this work, we bypass the optimization step entirely by employing pre-trained parameters scaled analytically to each kernel's average degree~\citep{vcepaite2025quantum}. This is sufficient because qReduMIS does not rely on the QPU to produce exact solutions but rather leverages it to detect signatures of frozen nodes---a task for which QAOA performs remarkably well even without fully optimized parameters. This choice extends the qReduMIS framework to gate-based quantum devices, complementing prior implementations on analog quantum architectures.

\section{Experimental results on trapped-ion quantum hardware \label{section:numerical}}

We now assess the performance of qReduMIS experimentally. Our evaluation is organized in two parts. In Section~\ref{subsection:large_PSP}, we demonstrate the end-to-end pipeline at scale on Quantinuum's 98-qubit trapped-ion Helios system, solving PSP-MIS instances derived from four major stock indices with up to 225 assets. In Section~\ref{subsection:benchmark_size}, we perform a systematic benchmark on the Quantinuum H2-1 noisy emulator, studying how performance scales with the problem size and hardness. Throughout, we compare standalone QAOA against a semi-greedy implementation of qReduMIS powered by QAOA with the in-set frozen-node selection strategy.

To ensure a fair comparison, both methods are evaluated over the first kernel graph $\mathcal{K}$---the reduced graph obtained after the initial classical reduction. This choice grants standalone QAOA the same classical preprocessing benefit used by qReduMIS, so that any observed advantage cannot be attributed to the reduction alone. It also enables comparison at larger problem sizes, since the kernel fits within the qubit budget of current devices. Importantly, all problem instances considered in this work are not fully reducible, meaning that the classical reduction rule alone is insufficient and at least one QPU call is required to find the optimal solution; this ensures that the quantum component plays a non-trivial role in every instance. As shown in Ref.~\cite{schuetz2025qredumis}, qReduMIS with a QPU subroutine outperforms random frozen-node selection, confirming that the quantum measurements provide strategically useful information beyond a purely random baseline. Despite this favorable baseline for standalone QAOA, we show that qReduMIS achieves superior performance across multiple figures of merit, confirming that its advantage stems from the iterative interplay between classical reduction and quantum subroutine calls, and from its novel use of QPU measurements to identify frozen nodes rather than to directly sample optimal solutions. All QAOA measurements---both in standalone mode and within
qReduMIS---undergo identical postprocessing that repairs
independence-constraint violations, ensuring
that any observed performance difference is attributable solely to the
algorithmic framework. More details are provided in Section~\ref{sec:methodologoy}. 

Our benchmark focuses on comparing qReduMIS against standalone QAOA to assess the value of the hybrid reduce-and-sample paradigm relative to direct quantum optimization; a comparison against state-of-the-art classical solvers is included for completeness in Section~\ref{subsection:benchmark_size}, though the primary objective is to establish that qReduMIS extracts significantly more useful information from limited QPU resources than standard quantum approaches.

\textbf{PSP-MIS instances.} We consider PSP-MIS instances constructed from assets belonging to major stock indices. We utilize the weekly returns to construct the correlation matrices using the datasets from Chang \etal{}~\cite{chang2000heuristics}, made available
via the open-source repository of Knazakis~\cite{knazakis_portfolio}. The sizes of these indices are reported in Table~\ref{tab:instance_characteristics}. From these indices, we consider instances of varying hardness to study the performance of the proposed algorithm across the difficulty spectrum. Problem hardness is
driven by structural properties of the instance---primarily problem size,
graph density, and tree width~\cite{cazals2025identifying}. However,
hardness is not solely an intrinsic property of the instance: it also
depends on the solver, since an instance that is challenging for one
algorithm may be easily solved by another. It has been empirically observed that, for local-update Markov-chain Monte Carlo (MCMC) methods such as simulated annealing and, through a related mechanism tied to the instance's degeneracy structure, for quantum annealing, the conductance-like hardness parameter $\mathbb{H}$ captures instance
difficulty \cite{andrist:23, ebadi:22, schuetz2025qredumis}. It is defined as $\HP =D_{\mathrm{|MIS|-1}}/(|\mathrm{MIS}|\cdot D_{|\mathrm{MIS}|})$, where $D_{\alpha}$ denotes the degeneracy of the independent sets of size $\alpha$. We include an analysis of performance as a function of $\mathbb{H}$ in
Appendix~\ref{appendix:benchmark_extensive}.

\begin{figure*}
    \centering
    \includegraphics[width=\linewidth]{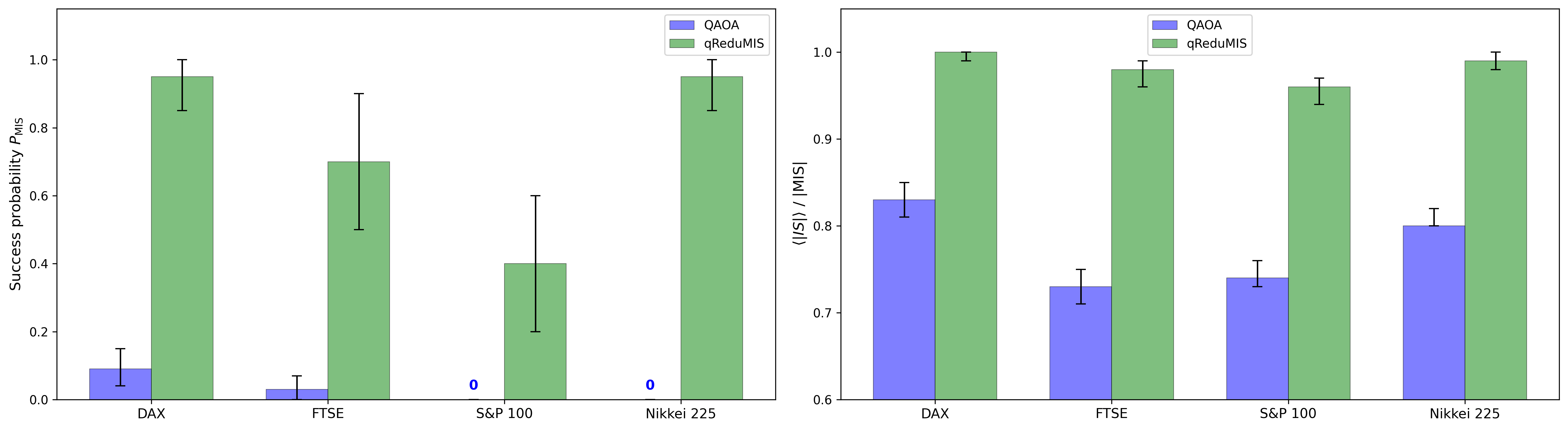}
    \caption{Performance comparison of standalone QAOA (blue) and qReduMIS (green)
    on the first kernel $\mathcal{K}$ for four stock market indices
    (DAX, FTSE, S\&P~100, Nikkei~225).
    \textbf{Left:} Success probability;
    \textbf{Right:} Average approximation ratio $\langle \text{AR} \rangle$.
    Error bars indicate 95\% bootstrap confidence intervals obtained via
    10,000 resamples. For S\&P~100 and Nikkei~225, standalone QAOA achieves
    $P_{\text{MIS}} = 0$. qReduMIS consistently achieves
    higher success probability, and near-optimal approximation
    ratio across all indices. Standalone QAOA uses 100 quantum shots;
    qReduMIS uses 20 classical trials with 10 quantum shots per QPU call, requiring no more than five QPU calls. 
    All experiments use $p=2$ QAOA layers on Quantinuum's Helios hardware
    (98 qubits).}
    \label{fig:main_results_hw}
\end{figure*}

\textbf{Figures of merit.} We evaluate three complementary figures of merit. First we consider the end-to-end success probability $P_{\text{MIS}}$ defined as the fraction of independent trials returning a maximum independent set, which captures the success rate on a single trial of the algorithm. The second metric is the average
approximation ratio $\langle |\text{IS}| \rangle / |\text{MIS}|$, quantifying
average solution quality. For 
standalone QAOA, the weighted average is over shot counts; for 
qReduMIS, it is over the outputs of each classical trial. The third is the optimal time-to-solution
(optTTS)~\cite{PhysRevX.8.031016,Kowalsky_2022,mohseni2022ising,leng2023quantum},
which is used extract runtime scaling. This metric captures the trade-off between per-trial runtime and success
probability, defined as
\begin{equation}
\mathrm{optTTS} = \min_{t_f}\, t_f \cdot \frac{\ln(1-p_d)}{\ln(1-p_S(t_f))},
\label{eq:def_opttts}
\end{equation}
where $t_f$ is the runtime of a single trial, $p_S(t_f)$ is the
probability of finding the optimal solution within that runtime, and
$p_d=0.99$ is the target confidence. We measure $t_f$ in cumulative
quantum shots, treating each shot as a single unit of time; this
uniform-cost model makes the reported optTTS for qReduMIS an upper bound
on the wall-clock optTTS in practice. Full details on the runtime model for both standalone
QAOA and qReduMIS are provided in Section~\ref{sec:methodologoy}. Ground-truth solutions were obtained with OR-Tools~\cite{cpsatlp}.

\begin{table}
    \centering
    \begin{tabular}{l|c|c|cc|ccc}
        \hline
        \multirow{2}{*}{Stock Market} 
            & \multirow{2}{*}{Size}
            & \multirow{2}{*}{$\lambda$} 
            & \multicolumn{2}{c|}{Kernel $\mathcal{K}$} 
            & \multicolumn{3}{c}{QAOA Circuit} \\
        \cline{4-8}
            & & & Nodes & Density & Qubits & Gate & Depth \\
        \hline
        DAX 100    & 85  & 0.24 & 49 & 0.29 & 49 & 692     & 104 \\
        FTSE 100   & 89  & 0.32 & 45 & 0.28 & 45 & 574     & 82  \\
        S\&P 100   & 98  & 0.24 & 68 & 0.20 & 68 & 914     & 96  \\
        Nikkei 225 & 225 & 0.62 & 78 & 0.17 & 78 & 1016 & 110 \\
        \hline
    \end{tabular}
    \caption{PSP-MIS instances and QAOA circuit characteristics for the
    four stock market indices. For each index, the table reports the
    total number of assets (Size), the correlation threshold $\lambda$,
    the number of nodes and edge density of the first kernel graph
    $\mathcal{K}$ obtained after classical reduction, and the QAOA
    circuit characteristics (number of qubits, 2-qubit gate count, and
    2-qubit circuit depth) corresponding to $\mathcal{K}$. The number of
    qubits equals the kernel size $|\mathcal{K}|$, since QAOA acts
    directly on the kernel graph. All circuits use $p=2$ QAOA layers.
    This circuit is used in the first QPU call of qReduMIS; subsequent
    iterations operate on progressively smaller kernels and thus require
    strictly smaller circuits.}
    \label{tab:instance_characteristics}
\end{table}

\subsection{Large-scale quantum-informed PSP-MIS on hardware \label{subsection:large_PSP}} 

We evaluate the end-to-end pipeline on Quantinuum's 98-qubit trapped-ion Helios system. To the best of our knowledge, this constitutes one of the largest gate-based QAOA demonstrations on real-world combinatorial optimization instances to date. We consider the four stock market indices
and, for each one, select a correlation threshold $\lambda$ such that the
first kernel graph $\mathcal{K}$ fits within the qubit limit and its density
keeps the 2-qubit gate count below 1000.
Table~\ref{tab:instance_characteristics} reports the characteristics of each
kernel and the corresponding QAOA circuit. We set $p=2$ QAOA layers for both
standalone QAOA and the QAOA component of qReduMIS, with parameters scaled
following Eq.~\ref{eq:scaling_params}. Note that the same circuit is used for
both standalone QAOA and the first QPU call of qReduMIS; subsequent qReduMIS
iterations operate on progressively smaller kernels, requiring strictly
shallower circuits.

Importantly, the classical reduction step is not merely a convenience for fair comparison: in the case of the Nikkei~225 (225 assets), encoding the full input graph would require $225$ qubits, exceeding the $98$-qubit capacity of the Helios device, so the instance is addressable on this hardware \emph{only} because qReduMIS operates on the reduced kernel ($|\mathcal{K}|=78$ qubits). For the remaining indices, the reduction yields substantially smaller and shallower QAOA circuits than would be required to encode the input graph directly.

\begin{figure*}
    \centering
    \includegraphics[width=\linewidth]{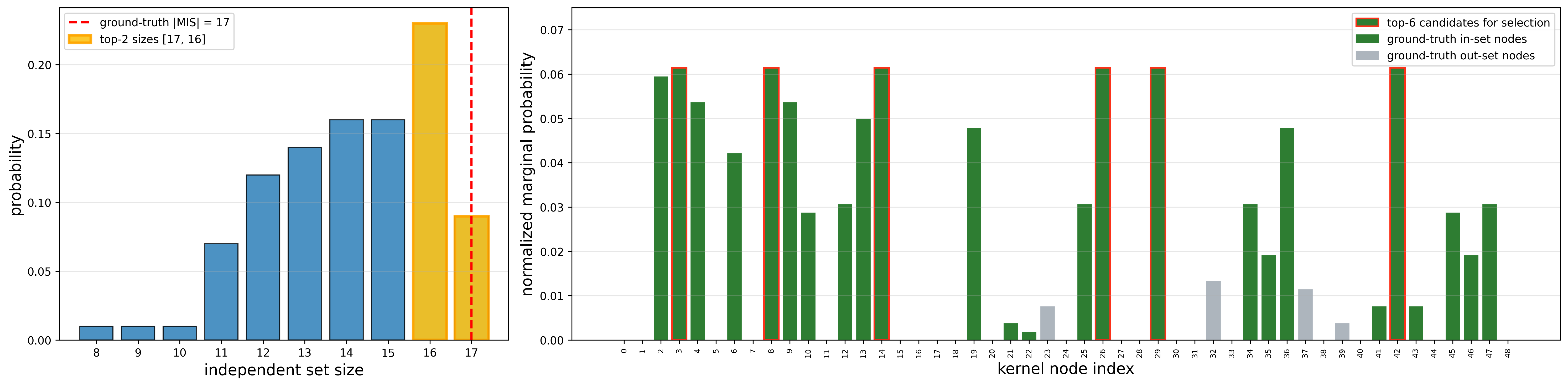}
    \caption{QAOA measurements on the DAX 100 kernel ($|\mathcal{K}|=49$ nodes, 
    100 shots on Quantinuum Helios) reveal a strong frozen-node signal despite low 
    success probability. 
    \textbf{Left:} Probability distribution over independent-set sizes from 
    standalone QAOA; the dashed red line marks the ground-truth $|\text{MIS}|$ 
    and yellow bins indicate the top-2 largest sizes measured. The probability 
    of sampling the optimal solution is below 10\%. 
    \textbf{Right:} Normalized marginal probability of each kernel node, 
    computed over the top-2 largest independent sets (yellow bins on left panel). Green bars indicate 
    in-set nodes (belonging to at least one MIS) and grey bars indicate 
    out-set nodes (absent from all MIS solutions). Approximately 96\% 
    of the marginal probability mass concentrates on ground-truth in-set  nodes, confirming that QAOA measurements carry sufficient signal 
    for qReduMIS even when the optimal solution is rarely sampled directly. The six bars with red edges mark the top-6 in-set frozen nodes, candidates for selection by marginal probability; all six are ground-truth in-set frozen nodes.} %\ms{Explain (27) vs (22) in inset. Among green bars one could furthe color-highlight the top-$K$ candidates for selection, e.g., $K=5$ in orange.}}
    \label{fig:comparison_counts}
\end{figure*} 

For standalone QAOA we use 100 quantum shots. Due to practical constraints on
quantum hardware access, we set the number of classical shots of qReduMIS to
$M = 20$ and report all figures of merit with 95\% bootstrap confidence
intervals (10,000 resamples with replacement) to quantify the resulting
statistical uncertainty. Each QPU call uses $\mathrm{qshots} = 10$ quantum
shots. As discussed, $\mathrm{qshots}$ is a hyperparameter; we have observed
that for the problem instances considered, $\mathrm{qshots} = 10$ achieves a
good trade-off between the total number of QPU calls to solution~($t_f$) and
the end-to-end success probability. This trade-off is captured by the optTTS
calculation, where $R(t_f)$ (Eq.~\ref{eq:TTS_tf_1}) balances the two
quantities.

Fig.~\ref{fig:main_results_hw} presents two figures of
merit---end-to-end success probability and average approximation ratio
$\langle \text{AR} \rangle$---for each stock market index. All point
estimates are accompanied by 95\% bootstrap confidence intervals. We employ an enhanced postprocessing strategy
that not only removes constraint-violating nodes but also greedily reinserts
non-conflicting ones. This grants standalone QAOA an additional advantage,
since qReduMIS does not rely on sampling the optimal solution directly but
rather on identifying frozen nodes via measurement statistics. Despite this
enhanced postprocessing, standalone QAOA fails to sample the optimal solution
within 100 shots for the S\&P~100 and Nikkei~225 indices, yielding
$P_{\text{MIS}} = 0$. In contrast, the noisy signal produced by the hardware remains sufficient for
qReduMIS to identify frozen nodes and iteratively solve the instance to optimality. Across all four indices, qReduMIS achieves a dramatically higher success
probability than standalone QAOA. 

The confidence intervals on $P_{\text{MIS}}$ are obtained by a nonparametric
percentile bootstrap over the independent trials (resampling the $n$ trials
with replacement and taking the $2.5$--$97.5$ percentiles of the resulting
$\hat{P}_{\text{MIS}}$). Their width therefore reflects binomial sampling
uncertainty: it scales as $n^{-1/2}$ and is largest for instances with
intermediate success probability, where the per-trial variance
is maximal. Because qReduMIS is evaluated
over $n=20$ independent runs, its intervals are correspondingly coarser than
those of QAOA. For completeness, we report the $\text{optTTS}$ in Table~\ref{tab:performance_results} of Appendix~\ref{appendix:benchmark_extensive}.

The average approximation ratio further underscores the gap. As shown in Table~\ref{tab:performance_results} in Appendix~\ref{appendix:benchmark_extensive}, qReduMIS
consistently achieves $\langle \text{AR} \rangle \geq 0.96$, whereas
standalone QAOA ranges between $0.73$ and $0.83$, indicating that even
after postprocessing, the independent sets produced by standalone QAOA tend
to be substantially smaller than the optimum.

Regarding the number of QPU calls required for qReduMIS, a single trial of qReduMIS runs until the stopping criterion is met or the kernel becomes fully reducible. Refer to Appendix~\ref{appendix:mkt_graphs_hw} for a discussion.
Across the four indices considered, no more than five QPU calls are required for all their respective trials to reach the optimal solution. With $\mathrm{qshots}=10$ quantum shots per QPU call, this means that no
more than a total of $50$ quantum shots are required for an end-to-end
execution of qReduMIS.

\textbf{Frozen-node signal from QAOA measurements.} We illustrate the quantum mechanism underlying qReduMIS with the experimental results corresponding to the DAX 100 index. Standard QAOA is performed to produce a distribution of independent set measurements with the hope that the largest solution sampled corresponds to the optimal solution. We observe on the left panel of Fig.~\ref{fig:comparison_counts} that the optimal solution is indeed being measured but its probability is very low, below $10\%$. In qReduMIS, we instead observe the marginal probability distribution over the nodes in the kernel to identify candidates for selection. We identify them by recognizing frozen nodes via their marginal probabilities: in-set frozen nodes exhibit high marginal probability, indicating a strong likelihood of belonging to the independent set, while out-set frozen nodes exhibit low marginal probability, indicating consistent exclusion. A correct identification means that the selected in-set (out-set) frozen nodes coincide with ground-truth in-set (out-set) nodes---that is, nodes present in at least one MIS solution and nodes absent from every MIS solution, respectively.

We enumerated all degenerate MIS solutions in order to identify the ground-truth in-set and out-set nodes. We observe that although standalone QAOA success probability is low, approximately 96\% of the normalized marginal probability mass is concentrated on ground-truth frozen (in-set) nodes (as shown on the right panel with the green bars). qReduMIS exploits this concentration: rather than requiring a full MIS to
be sampled in a single shot, it suffices that one ground-truth
in-set (out-set) is selected as randomly sampled from the top-ranked marginal nodes (the ones with the red edges). Specifically, we
select the $M=4$ highest-marginal-probability nodes and randomly choose one
as the frozen node to fix, removing it together with its neighbors. Note that $M$ should be adaptive: for small kernels, a large $M$ may exceed the number of candidate nodes. This
semi-greedy selection introduces diversity across trials while still
leveraging the strong frozen-node signal in the marginal distribution.

These results demonstrate that qReduMIS effectively solves PSP-MIS at practical scale on current hardware; we next investigate how its performance scales systematically with the problem size and hardness.

\subsection{Systematic benchmark on PSP-MIS instances \label{subsection:benchmark_size}} 

We assess the performance of the end-to-end pipeline to solve PSP-MIS instances of different hardness using the Quantinuum's noisy emulator of the $H2$-1 device. We generate PSP-MIS instances of sizes ranging from 26 to 44 in increments of two, and for each size, we randomly select assets from the Nikkei 225 index to make 20 problem instances per problem size. We consider different problem sizes and set $\lambda$ equal to the mean of all entries of the correlation matrix $C$:
\begin{equation}
\lambda = \frac{1}{N^2}\sum_{i,j} |C_{ij}|, 
\label{eq:average_corr}
\end{equation}

where $N$ is the number of assets. Note that the value of $\lambda$ depends on the problem instance.

The PSP-MIS instances are reduced, resulting in $73$ instances that are not fully reducible, spanning both different sizes and first kernels of different densities. We discuss the reduction factor over these instances as well as densities of both the input graph and the first kernel in Appendix~\ref{appendix:benchmark_extensive}. These 73 non-trivially-reducible instances form the testbed for both standalone QAOA and qReduMIS. Both methods use the same QAOA parameters, scaled following Eq.~\ref{eq:scaling_params}, and the same number of layers $p$. We consider $p=2$ and $p=6$. 
The number of quantum shots employed to assess the performance of standard QAOA is $500$. The number of $\mathrm{qshots}$ in qReduMIS is a hyperparameter that can, in principle, be fine-tuned and depend on the kernel at each iteration. For this experiment, the $\mathrm{qshots}$ in qReduMIS was fixed to $5$. The classical shots (or trials) performed for qReduMIS for each problem instance was $20$.  

In the following, we assess the performance of qReduMIS powered by QAOA
against standalone QAOA, evaluating both the optTTS and the expected
approximation ratio. Across both metrics, qReduMIS consistently outperforms
standalone QAOA. For $p=2$, qReduMIS exhibits a significantly improved
optTTS scaling exponent and a higher average approximation ratio.
Increasing to $p=6$ improves both methods. This is a promising trend, as advances in hardware
fidelity will progressively enable the reliable execution of deeper QAOA
circuits, further benefiting qReduMIS and standalone QAOA alike. In the setting of $p=6$, qReduMIS continues to achieve
a significantly higher approximation ratio; however, in this regime the
optTTS of qReduMIS shows no clear scaling trend over the problem-size
range considered, precluding a meaningful fit.

\begin{figure*}[!t]
    \centering
    \includegraphics[width=\linewidth]{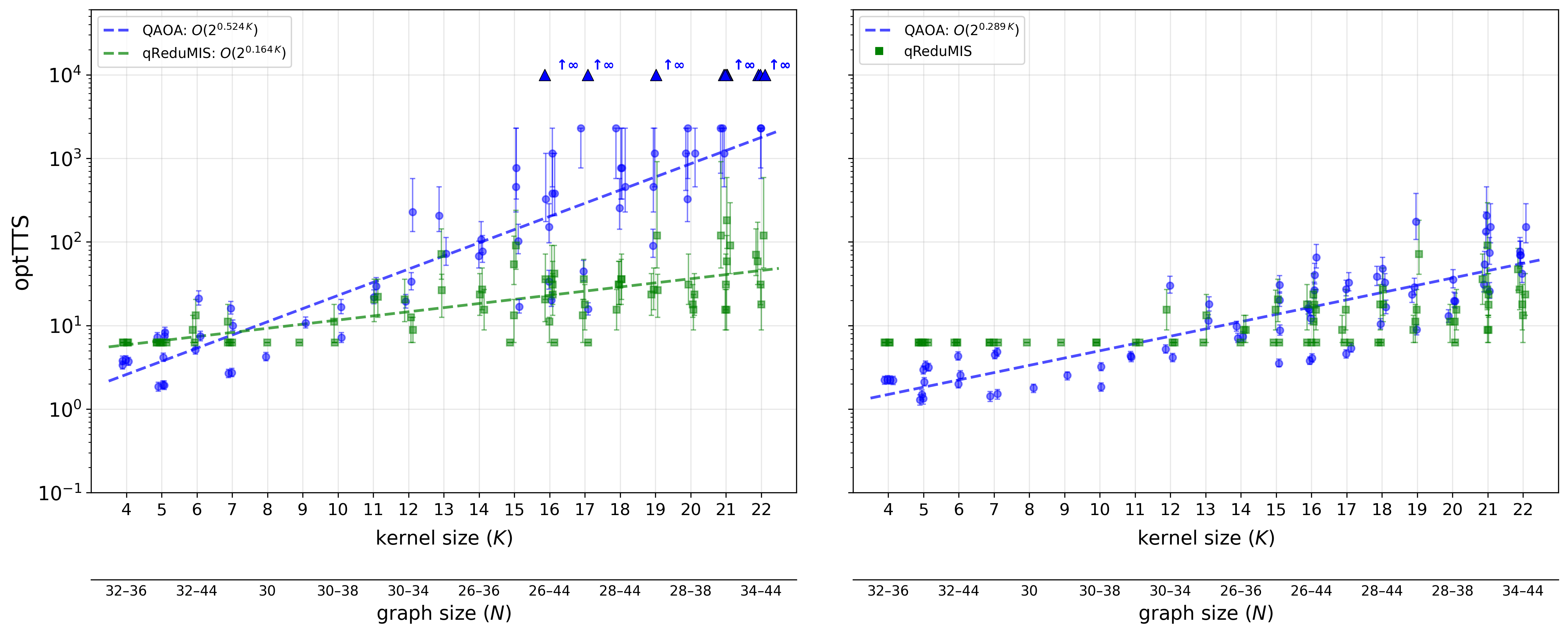}
    \caption{Optimal time-to-solution (optTTS) as a function of the first 
    kernel size $|\mathcal{K}|$ for standalone QAOA (blue circles) and 
    qReduMIS (green squares) on PSP-MIS instances from the Nikkei 225 index. 
    \textbf{Left:} $p=2$; \textbf{Right:} $p=6$. Error bars denote 95\% 
    bootstrap confidence intervals on the median optTTS per problem instance. 
    Dash-dotted lines show ordinary least squares exponential fits $\mathcal{O}(2^{\beta K})$, 
    with scaling exponents reported in each legend. For $p=2$, the fitted exponents are $\beta_{\mathrm{QAOA}} = 0.524 \pm 0.033$ 
    and $\beta_{\mathrm{qReduMIS}} = 0.164 \pm 0.018$, a $3.2$ times reduction (see Appendix~\ref{appendix:benchmark_extensive} for statistical tests). For $p=6$, qReduMIS 
    exhibits no statistically significant exponential scaling over the tested 
    range. Upward triangles mark instances where QAOA failed to find any MIS 
    (optTTS$=\infty$); these are excluded from the fits, making the QAOA 
    exponent a lower bound on its true value. Data points are 
    horizontally offset by $\pm0.25$ for visual clarity.}
    \label{fig:tts_qredumis_vs_qaoa}
\end{figure*}

The optTTS as a function of the (first) kernel size is shown in Fig.~\ref{fig:tts_qredumis_vs_qaoa}. We employed the bootstrap method with 10,000 resamples to estimate a $95\%$ confidence interval for the optTTS calculations of the two methods. Each panel corresponds to a number of layers in the QAOA implementations, both as standalone and as part of qReduMIS. We compare the performance of qReduMIS powered by QAOA to standalone QAOA. The number of quantum shots employed by QAOA as part of qReduMIS is a hyperparameter. We benchmarked optTTS for $\mathrm{qshots} \in \{5, 10, 25, 50, 250, 500\}$ 
and report the results for $\mathrm{qshots}=5$, which yields the best 
optTTS scaling. Refer to Fig.~\ref{fig:quantum_shots_qredumis} in Appendix for a detailed analysis for one particular problem instance.

For $p=2$ both standalone QAOA and qReduMIS exhibit an exponential scaling of optTTS as a function of the size of the (first) kernel $|\mathcal{K}|$, as shown with the fittings in the figure. The Ordinary Least Squares (OLS) regression is over the median values of each single problem instance, corresponding to 95$\%$ bootstrap confidence intervals. Refer to Appendix~\ref{appendix:benchmark_extensive} for details of the fits. The exponent of the scaling for QAOA is significantly reduced as $p$ is increased from two to six, as expected. Note that for $p=2$, for some instances and trials, the success probability of standalone QAOA is null ($P_{\text{MIS}}=0$), causing the optTTS to be infinite, which is indicated with the triangles showing $\infty$. We report the fraction of these among intervals of problem sizes in a heatmap in Fig.~\ref{fig:heatmap_inf_optTTS} in Appendix~\ref{appendix:benchmark_extensive}. These points are therefore excluded from the fitting. Since the excluded instances correspond to the hardest cases---where QAOA fails entirely---the fitted scaling exponent for QAOA represents a lower bound on its true scaling difficulty; the actual performance gap between QAOA and qReduMIS is therefore likely larger than what the fits suggest. qReduMIS reduces the scaling exponent by a factor of $3.2$ relative to standalone QAOA ($\beta_{\text{qReduMIS}} = 0.164$ versus $\beta_{\text{QAOA}} = 0.524$). Regarding the number of QPU calls in qReduMIS, Fig.~\ref{fig:qpu_to_solution_distribution_emulator} (left panel) in Appendix shows the number of QPU calls to solution remains consistently low, not exceeding $2$ across all problem instances considered. 

In the case of $p=6$, qReduMIS shows a nearly constant optTTS. This is because, for most of the kernel sizes considered, qReduMIS solves the problem to optimality by performing a classical reduction followed by only a single QPU call and a subsequent reduction. This behavior is expected: although the success probability of standalone QAOA decays exponentially with kernel size, for kernels below size $18$ the median success probability remains above $\sim 20\%$, and even a modest number of quantum shots suffices to either find the optimal solution directly or correctly identify an in-set frozen node whose removal, together with its neighbors, unblocks the next classical reduction to reach optimality. Accordingly, the QPU calls to solution are concentrated around one, as shown in Fig.~\ref{fig:qpu_to_solution_distribution_emulator} (right panel) in the Appendix. It is only for larger kernel sizes that we observe a dispersion towards higher values, producing a slight increase in the optTTS. Notably, we identify a crossing point beyond which qReduMIS begins to outperform standalone QAOA for some instances, though this advantage is less pronounced than in the $p=2$ case. One potential explanation is that the quantum-shot budget was kept the same as for $p=2$; given the increased expressiveness of the deeper circuit, fewer shots may be needed to achieve comparable or better frozen-node identification, and reducing the shot count would directly improve the optTTS. Regarding the scaling, we expect that a clearer decay in optTTS will emerge for larger problem sizes, enabling a more robust fitting of the scaling behavior.

Significant dispersion in optTTS is observed for a given kernel size. This arises because each kernel-size bin aggregates instances from 
different input graphs $\mathcal{G}$ with varying kernel densities, 
and density is an additional driver of problem hardness \cite{cazals2025identifying}. The distribution of the kernel densities per kernel size is shown in the lower panel in Fig.~\ref{fig:testbed_emulator} in the Appendix. 

To assess the quality of the solution estimation, we consider the average approximation ratio. Fig.~\ref{fig:expected_AR_qaoa_qredumis} presents the average approximation ratio as a function of the (first) kernel size, with the input graph size shown on the secondary x-axis. As before, results are reported for both $p=2$ and $p=6$. While QAOA exhibits a significant decline in performance as the problem size increases, qReduMIS remains robust. For $p=2$, a slight—though not statistically significant—decay is observed at larger sizes; however, the median per kernel size remains consistently equal to $1$.

\begin{figure*}
    \centering
    \includegraphics[width=\linewidth]{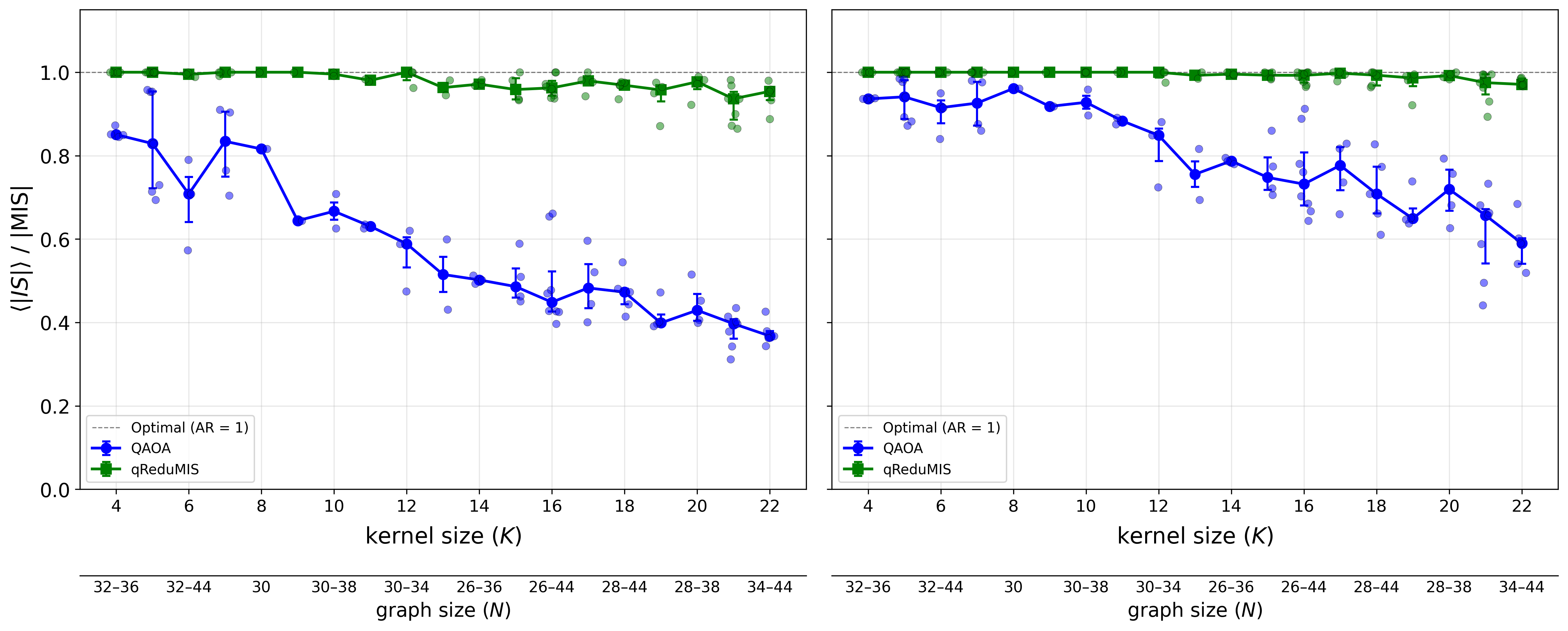}
    \caption{Average approximation ratio $\langle|\text{IS}|\rangle / 
|\text{MIS}|$ as a function of kernel size $|\mathcal{K}|$ for standalone 
QAOA and qReduMIS on PSP-MIS instances from the Nikkei 225 index. 
\textbf{Left:} $p=2$; \textbf{Right:} $p=6$. Translucent scatter points 
show individual instance values; markers with error bars indicate the 
median and interquartile range per kernel size. Methods are horizontally 
offset for clarity, and the secondary $x$-axis shows the corresponding 
input graph size $N$. While standalone QAOA exhibits a significant decline 
in solution quality with increasing kernel size, qReduMIS maintains 
$\langle|\text{IS}|\rangle / |\text{MIS}| \approx 1$ across all sizes for 
both $p=2$ and $p=6$, demonstrating robust near-optimal performance 
independent of problem scale.}
    \label{fig:expected_AR_qaoa_qredumis}
\end{figure*}

\textbf{Benchmark against classical solver.} To contextualize the performance of our quantum-classical hybrid approach, we perform a benchmark over the same problem instances with simulated annealing (SA), a well-established classical metaheuristic for combinatorial optimization. We emphasize that the primary contribution of this work is not to claim advantage over state-of-the-art classical solvers---indeed, SA is exceptionally effective on the instance sizes
accessible to current quantum hardware. The SA results, reported in detail in Appendix~\ref{appendix:mkt_graphs_hw},
confirm that SA trivially solves most Nikkei market-graph instances at
the scales considered here (kernel sizes $K \leq 22$), with optTTS
values clustering at the estimator floor of $\approx 2/3$. As shown in Fig.~\ref{fig:fig_with_sa}, only a
small fraction of instances yield non-trivial optTTS, which is
insufficient to extract a reliable scaling exponent for SA at these
sizes.

We include this benchmark to establish a concrete classical baseline
and to delineate the regime in which quantum-classical hybrid methods
must improve to become competitive.  At the instance sizes accessible
to current quantum hardware (kernel sizes $K \leq 22$), SA solves the
vast majority of problems trivially, precluding a meaningful scaling
comparison. As hardware matures and larger, harder instances become
accessible, classical heuristics such as SA are known to
struggle---precisely the regime in which the favorable scaling of
qReduMIS over standalone QAOA may translate into a practical
advantage.

\textbf{Performance with conductance-like hardness.} We further investigate performance as a function of the conductance-like hardness 
parameter $\mathbb{H}$ in Appendix~\ref{appendix:benchmark_extensive}, 
confirming that qReduMIS degrades gracefully with increasing hardness 
while standalone QAOA deteriorates sharply.

\textbf{qReduMIS for arbitrary graph topologies. }Beyond the structured market graphs studied here, we also benchmark 
qReduMIS on unstructured random 3-regular graphs in 
Appendix~\ref{appendix:3regular}. On these instances—where the 
classical reduction is initially ineffective due to the graph's 
regularity—qReduMIS still achieves $P_{\text{MIS}}=1$ across all 
problem sizes tested, confirming that the framework's advantage is 
not specific to the topology of financial correlation graphs.

\section{Methods \label{sec:methodologoy}}

\textbf{QAOA implementation both as standalone and in qReduMIS.} At each QPU call, we execute QAOA over the kernel graph. For this, we encode the MIS problem in Eq.~\ref{eq:hamiltonian_mis} by first mapping the classical binary variables $x_i \in \{0,1\}$, which indicate asset selection, to quantum variables $z_i \in \{-1,1\}$ via the relation $x_i = (1-z_i)/2$. This transformation allows us to express the problem in terms of Pauli-$Z$ operators, which are natural for quantum circuit implementations. The resulting cost Hamiltonian, $\hat{H}_{\mathrm{cost}}$, penalizes configurations that violate the independence constraint and rewards the inclusion of assets in the independent set. Explicitly, the cost Hamiltonian takes the form
\begin{align}
\hat{H}_{\mathrm{cost}}  =\ &  \sum_{i} \frac{z_i}{2} + \frac{1}{4} \sum_{(i,j) \in \mathcal{E}} \left(- z_i - z_j + z_i z_j \right)
\label{eq:hamiltonian_qaoa}
\end{align}
where the first term encourages the selection of assets, and the second term enforces the independence constraint by assigning a higher energy cost to pairs of adjacent assets that are both selected. The unitary that implements the desired dynamics in Eq.~\ref{eq:hamiltonian_qaoa} is given by $\mathcal{U}(\theta)=\Pi_{l=1}^{p}U_{\mathrm{mix}}(\beta_{l}) U_{\mathrm{cost}}(\gamma_{l})$, involving a series of $p$ layers alternating cost and mixing unitaries, $U_{\mathrm{cost}}(\gamma_{l})=\exp(-i\gamma_{l}\hat{H}_{\mathrm{cost}})$ and $U_{\mathrm{mix}}(\beta_{l})=\exp(-i\beta_{l}\hat{H}_{\mathrm{mix}})$, respectively, which are generated by $\hat{H}_{\mathrm{cost}}$ and $\hat{H}_{\mathrm{mix}}=\sum_{i}\hat{X}_{i}$, respectively. The parameters $\theta=(\boldsymbol{\gamma}, \boldsymbol{\beta})$ can either be optimized in an outer (closed) loop, inferred from parameter setting heuristics \citep{zhou:20, he:24, wurtz:21,hao2025end}, or pre-optimized \citep{augustino:24}. 

A recent work by \v{C}epait\.{e} \etal~\cite{vcepaite2025quantum} tackled the MIS problem with QAOA and reported an analytic format of setting QAOA parameters $\gamma, \beta$. Given a $p$-layer QAOA, the $j$-th parameter $\theta_j^p$ with $\theta \in \{\gamma, \beta\}$ is set as follows:
\begin{equation}
    \theta_j^p = c^\theta_1(j,p) + \frac{c^\theta_2(j,p)}{\braket{d}^{c^\theta_3(j,p)}+c^\theta_4(j,p)},
\label{eq:scaling_params}
\end{equation}
where $\{c^\theta_i(j,p)\}_{i=1,\ldots,4}$ are four fitted parameters and $\braket{d}$ is the average degree of the input graph. 

For the experiments we used these fitted parameters $\{c^\theta_i(j,p)\}_{i=1,\ldots,4}$. While in standalone QAOA, this scaling is performed only once for the input graph, in qReduMIS, this scaling is performed for the kernel $\mathcal{K}_{i}$ at each $i$-th iteration. Note that these kernels may have different degree distributions and thus the QAOA circuit may have different parameters. However, this is a simple calculation to scale the same given four fitted parameters. Hyperparameters of the semi-greedy qReduMIS implementation used in the benchmarks (in-set criterion, restricted-candidate-list size, frozen-node selection) are reported in Appendix~\ref{appendix:info_numerical_sim}.

\textbf{Backends.} We perform experiments of standalone QAOA and qReduMIS powered by QAOA on the Quantinuum H2-1 emulator with a statevector backend that closely models the noise of the real device. For real hardware, we perform our experiments on Quantinuum Helios with 98 trapped-ion qubits.

\textbf{Figures of merit.} The success probability $P_{\mathrm{MIS}}$ is defined as the fraction of independent trials that return a maximum independent set of the target graph. For QAOA, the success probability is the fraction of the quantum shots that return a maximum independent set. For qReduMIS, it is the fraction of the classical shots---each comprising one or more
iterative reduce-and-sample rounds---whose final output is a maximum independent set of the target graph.

Another figure of merit we study is the scaling of the \textit{time-to-solution} (TTS), defined as in Refs.~\cite{PhysRevX.8.031016,Kowalsky_2022,mohseni2022ising,leng2023quantum}:
\begin{equation} 
\mathrm{TTS}(t_f) = t_f R(t_f), \quad R(t_f) = \frac{\ln (1 - p_d) }{\ln (1 - p_S (t_f))}, 
\label{eq:TTS_tf_1} 
\end{equation}
where $t_f$ is the runtime for each run, $p_S(t_f)$ is the success probability of hitting the target solution given $t_f$, and $p_d$ is the desired probability that at least one run yields the target solution. Here we consider $p_d=0.99$. We adopt the optimal TTS, defined as~\cite{PhysRevX.8.031016}
\begin{equation} \mathrm{optTTS} = \min_{t_f} \mathrm{TTS}(t_f). \label{eq:def_opttts_1} \end{equation}
We quantify $t_f$ in terms of the cumulative number of quantum shots across all iterations. Although the wall-clock time per shot decreases as the kernel shrinks, this variation scales at most polynomially with the original problem size and does not affect the superpolynomial scaling analysis. We therefore adopt a uniform unit-cost model, treating each quantum shot as a single unit of time. Note that by doing this, the estimated optTTS for qReduMIS is an upper bound to the optTTS we will expect in practice.  

To compute optTTS for qReduMIS, we estimate the success probability of a single trial given a maximum runtime $t_f$ as 
\begin{equation}
\begin{split}
    p_S(t_f) &= \mathbb{E}_{t} \left[ \mathds{1} \left( \mathrm{runtime}_t \leq t_f \right) \right].
\end{split}
\label{eq:prob_qredumis}
\end{equation}

The runtime for the $t$-th trial of qReduMIS is
\begin{equation}
    \mathrm{runtime}_t = 
    \begin{cases} 
        D_t^\star \cdot \mathrm{qshots}, & \text{if optimal} \\
        \infty, & \text{otherwise}
    \end{cases}
\label{eq:runtime_qredumis}
\end{equation}
where $D_t^\star$ is the number of QPU calls required to reach the optimal solution, and $\mathrm{qshots}$ is the fixed number of quantum shots per iteration.

For standalone QAOA, the quantum circuit is fixed for a given problem instance, and each quantum shot can be regarded as an independent trial with a single iteration. Thus, the optTTS simplifies to:
\begin{equation}
\mathrm{optTTS} = \ln (1 - p_d) \min_{t_f} \frac{t_f}{\ln (1 - p_S (t_f))} = \frac{\ln (1 - p_d)}{\ln (1 - p_S)} 
\end{equation}
where $p_S$ is the success probability of a single QAOA quantum shot. For SA, we estimate the optTTS in the same way. 

We note that, in principle, the runtime and optTTS carry a wall-clock time
unit (e.g., seconds). However, wall-clock times depend on hardware and
implementation details that are not the focus of this work.
We therefore treat each quantum shot as a single unit of time, so that
runtime and optTTS are reported in units of quantum shots rather than
seconds. This convention makes the analysis hardware-agnostic and
emphasizes the algorithmic scaling behavior, which is the central object
of study. Full details on the runtime model, the quantum-shot schedule, and the empirical estimation of optTTS are provided in Appendix~\ref{appendix:empirical_tts}. The ground truth solutions for each of the problem instances considered were obtained with OR-Tools \cite{cpsatlp}.

\textbf{Postprocessing of independent set measurements.} Independent 
set violations in the QPU output are repaired via a postprocessing 
fixup applied identically to both standalone QAOA and the QAOA 
component of qReduMIS. By default, we use a simple removal heuristic; 
for the hardware experiments (Section~\ref{subsection:large_PSP}), we 
employ a stronger technique that also greedily adds non-conflicting 
nodes. This stronger
postprocessing is needed under hardware noise to give standalone QAOA a
fair chance of sampling the optimal solution; without it, no MIS is
recovered for all the four instances considered from the four major indices. The reinsertion step disproportionately
benefits standalone QAOA, since qReduMIS only requires marginal
distributions for frozen-node identification, so this choice provides a
favorable handicap to standalone QAOA in the comparison. This is further discussed in Appendix~\ref{appendix:numerical_experiments}. 

\section{Conclusion and Outlook}
\label{conclusion}

In this work, we have presented an end-to-end pipeline for quantum-informed portfolio selection, leveraging the hybrid qReduMIS framework \cite{schuetz2025qredumis} that combines classical (exact) reduction techniques with quantum optimization, to sample frozen nodes. By extending qReduMIS to gate-based QAOA on trapped-ion hardware, we solved PSP-MIS instances with up to 225 assets on Quantinuum's 98-qubit Helios system, with QAOA circuits acting on kernels of up to 78 qubits. This dual property---addressing input graphs whose direct encoding would exceed the device's qubit budget (as in the Nikkei~225 with $225$ assets vs.\ $98$ qubits), while also outperforming standalone QAOA on the instances both methods can encode---is, to our view, the most distinctive practical consequence of the reduce-and-sample paradigm. These problems correspond to asset universes of
major stock indices, demonstrating that the pipeline can operate at scales
relevant to practical portfolio construction. 

Our empirical benchmarks confirm the scalability and versatility of the proposed pipeline. We show that using the quantum computer in an inherently different way---to sample over frozen nodes rather than expecting it to output the optimal solution---successfully unblocks the next classical reduction, yielding robust performance across multiple figures of merit for PSP-MIS instances of varying sizes and hardness. qReduMIS consistently
outperforms standalone QAOA, both on hardware---where standalone QAOA
fails to find the optimal solution for the two largest indices---and on
the noisy emulator, where qReduMIS achieves a $3.2\times$ smaller optimal
time-to-solution scaling exponent at $p=2$. We further observe that
increasing the QAOA depth from $p=2$ to $p=6$ improves the performance
of both methods, an encouraging trend that suggests further gains are
within reach as advances in hardware fidelity progressively enable the
reliable execution of deeper QAOA circuits. Moreover, the qReduMIS framework is broadly applicable beyond structured market graphs, as demonstrated on unstructured 3-regular graphs.

We note that at the problem sizes accessible to current quantum hardware, classical MIS solvers---including simulated annealing---remain highly competitive, solving most instances
trivially and leaving insufficient variation for a meaningful scaling
analysis. Our comparison is therefore intentionally quantum-vs-quantum: it
demonstrates that leveraging QPU measurements to identify frozen
nodes---rather than expecting the quantum processor to directly output
optimal solutions---and iteratively combining this with exact (polynomial) classical
reductions constitutes a fundamentally more effective use of limited
quantum resources. qReduMIS
is a flexible framework poised to benefit from continued advances in both
classical and quantum components, including improvements in hardware,
algorithms, and software---an expectation already supported by the
performance gains observed when moving from $p=2$ to $p=6$ QAOA layers.

Looking ahead, the qReduMIS framework could be extended to alternative quantum subroutines and other combinatorial optimization problems. Additionally, designing an adaptive quantum-shot schedule could further improve the runtime scaling. As quantum hardware continues to mature, hybrid algorithms such as qReduMIS are well positioned to tackle large-scale optimization problems at scales where classical heuristics begin to struggle. \\

\section{Author contributions}
All authors participated in the development and discussion of the use case for the pipeline and the algorithmic development. RB, SS, and KP helped to frame and refine the use case. ZH contributed in setting up the experiments. MJAS devised the original project and algorithm idea. RY developed the implementation and performed the experiments, and was a major contributor in writing the manuscript.  RR contributed in deriving the risk bounds. JS contributed in developing the figures of merit. JS, MS, and RA contributed in analyzing the results. All authors contributed to the paper writing and approved the final manuscript.

\begin{acknowledgments}
We thank Ruslan Shaydulin, Pranav Deshpande, Akshay Seshadri, and Giuseppe
Di Cera for valuable discussions and support. 
\end{acknowledgments}

\section{Disclaimer}

This paper was prepared for informational purposes with contributions from the Global Technology Applied Research center of JPMorgan Chase \& Co and AWS Center for Quantum Computing. This paper is not a product of the Research Department of JPMorgan Chase \& Co., AWS Center for Quantum Computing, or any of their affiliates. Neither JPMorgan Chase \& Co. nor AWS Center for Quantum Computing nor any of their affiliates makes any explicit or implied representation or warranty and none of them accept any liability in connection with this paper, including, without limitation, with respect to the completeness, accuracy, or reliability of the information contained herein and the potential legal, compliance, tax, or accounting effects thereof. This document is not intended as investment research or investment advice, or as a recommendation, offer, or solicitation for the purchase or sale of any security, financial instrument, financial product or service, or to be used in any way for evaluating the merits of participating in any transaction.

\bibliography{bibliography} 

@article{markowitz1990foundations,
  author  = {Markowitz, H. M.},
  title   = {Foundations of portfolio theory},
  journal = {Harry Markowitz: Selected Works},
  series  = {S},
  pages   = {481--490},
  year    = {1990}
}

@book{elton2009modern,
  author    = {Elton, E. J. and Gruber, M. J. and Brown, S. J. and Goetzmann, W. N.},
  title     = {Modern portfolio theory and investment analysis},
  publisher = {John Wiley \& Sons},
  year      = {2009}
}

@article{fabozzi2008portfolio,
  author  = {Fabozzi, F. J. and Markowitz, H. M. and Gupta, F.},
  title   = {Portfolio selection},
  journal = {Handb. Financ.},
  volume  = {2},
  pages   = {3},
  year    = {2008}
}

@article{boginski:05,
  author  = {Boginski, V. and Butenko, S. and Pardalos, P. M.},
  title   = {Statistical analysis of financial networks},
  journal = {Comput. Stat. Data Anal.},
  volume  = {48},
  pages   = {431},
  year    = {2005}
}

@article{merton1972analytic,
  author  = {Merton, R. C.},
  title   = {An analytic derivation of the efficient portfolio frontier},
  journal = {J. Financ. Quant. Anal.},
  volume  = {7},
  pages   = {1851},
  year    = {1972}
}

@book{bouchaud2003theory,
  author    = {Bouchaud, J.-P. and Potters, M.},
  title     = {Theory of financial risk and derivative pricing: from statistical physics to risk management},
  publisher = {Cambridge University Press},
  year      = {2003}
}

@article{feo:94,
  author  = {Feo, T. A. and Resende, M. G. C. and Smith, S. H.},
  title   = {A greedy randomized adaptive search procedure for maximum independent set},
  journal = {Oper. Res.},
  volume  = {42},
  pages   = {860},
  year    = {1994},
  doi     = {10.1287/opre.42.5.860}
}

@article{gemsa:16,
  author  = {Gemsa, A. and N{\"o}llenburg, M. and Rutter, I.},
  title   = {Evaluation of labeling strategies for rotating maps},
  journal = {ACM J. Exp. Algorithmics},
  volume  = {21},
  year    = {2016},
  issn    = {1084-6654},
  doi     = {10.1145/2851493}
}

@article{hale:80,
  author  = {Hale, W. K.},
  title   = {Frequency assignment: Theory and applications},
  journal = {Proc. IEEE},
  volume  = {68},
  pages   = {1497},
  year    = {1980}
}

@misc{dong:22,
  author       = {Dong, Y. and Goldberg, A. V. and Noe, A. and Parotsidis, N. and Resende, M. G. C. and Spaen, Q.},
  title        = {A metaheuristic algorithm for large {Maximum Weight Independent Set} problems},
  year         = {2022},
  howpublished = {arXiv:2203.15805}
}

@misc{kalra:08,
  author       = {Kalra, A. and Qureshi, F. and Tisi, M.},
  title        = {Portfolio asset identification using graph algorithms on a quantum annealer},
  year         = {2018},
  howpublished = {SSRN},
  url          = {https://ssrn.com/abstract=3333537}
}

@article{hidaka2023correlation,
  author  = {Hidaka, R. and Hamakawa, Y. and Nakayama, J. and Tatsumura, K.},
  title   = {Correlation-diversified portfolio construction by finding maximum independent set in large-scale market graph},
  journal = {IEEE Access},
  year    = {2023}
}

@misc{cazals2025identifying,
  author       = {Cazals, P. and Fran\c{c}ois, A. and Henriet, L. and Leclerc, L. and Marin, M. and Naghmouchi, Y. and Coelho, W. d. S. and Sikora, F. and Vitale, V. and Watrigant, R. and others},
  title        = {Identifying hard native instances for the maximum independent set problem on neutral atoms quantum processors},
  year         = {2025},
  howpublished = {arXiv:2502.04291}
}

@article{he2026regularized,
  title={Regularized Warm-Started Quantum Approximate Optimization and Conditions for Surpassing Classical Solvers on the Max-Cut Problem},
  author={He, Zichang and Apte, Anuj and Augustino, Brandon and Babakhani, Arman and Khan, Abid and Omanakuttan, Sivaprasad and Shaydulin, Ruslan},
  journal={arXiv:2603.10191},
  year={2026}
}

@article{xiao2017exact,
  author  = {Xiao, M. and Nagamochi, H.},
  title   = {Exact algorithms for maximum independent set},
  journal = {Inf. Comput.},
  volume  = {255},
  pages   = {126},
  year    = {2017}
}

@article{kadowaki:98,
  author  = {Kadowaki, T. and Nishimori, H.},
  title   = {Quantum annealing in the transverse {Ising} model},
  journal = {Phys. Rev. E},
  volume  = {58},
  pages   = {5355},
  year    = {1998},
  doi     = {10.1103/PhysRevE.58.5355}
}

@misc{farhi:00,
  author       = {Farhi, E. and Goldstone, J. and Gutmann, S. and Sipser, M.},
  title        = {Quantum computation by adiabatic evolution},
  year         = {2000},
  howpublished = {arXiv:quant-ph/0001106}
}

@article{farhi:01,
  author  = {Farhi, E. and Goldstone, J. and Gutmann, S. and Lapan, J. and Lundgren, A. and Preda, D.},
  title   = {A quantum adiabatic evolution algorithm applied to random instances of an {NP}-complete problem},
  journal = {Science},
  volume  = {292},
  pages   = {472},
  year    = {2001},
  doi     = {10.1126/science.1057726}
}

@article{das:08,
  author  = {Das, A. and Chakrabarti, B. K.},
  title   = {Colloquium: Quantum annealing and analog quantum computation},
  journal = {Rev. Mod. Phys.},
  volume  = {80},
  pages   = {1061},
  year    = {2008},
  doi     = {10.1103/RevModPhys.80.1061}
}

@article{hauke:20,
  author  = {Hauke, P. and Katzgraber, H. G. and Lechner, W. and Nishimori, H. and Oliver, W.},
  title   = {Perspectives of quantum annealing: methods and implementations},
  journal = {Rep. Prog. Phys.},
  volume  = {83},
  pages   = {054401},
  year    = {2020}
}

@misc{farhi:14,
  author       = {Farhi, E. and Goldstone, J. and Gutmann, S.},
  title        = {A quantum approximate optimization algorithm},
  year         = {2014},
  howpublished = {arXiv:1411.4028}
}

@article{zhou:20,
  author  = {Zhou, L. and Wang, S.-T. and Choi, S. and Pichler, H. and Lukin, M. D.},
  title   = {Quantum approximate optimization algorithm: performance, mechanism, and implementation on near-term devices},
  journal = {Phys. Rev. X},
  volume  = {10},
  pages   = {021067},
  year    = {2020},
  doi     = {10.1103/PhysRevX.10.021067}
}

@article{quiroz2025quantifying,
  author  = {Quiroz, G. and Titum, P. and Lotshaw, P. and Lougovski, P. and Schultz, K. and Dumitrescu, E. and Hen, I.},
  title   = {Quantifying the impact of precision errors on quantum approximate optimization algorithms},
  journal = {Phys. Rev. Res.},
  volume  = {7},
  pages   = {023240},
  year    = {2025}
}

@article{mandl2024amplitude,
  author  = {Mandl, A. and Barzen, J. and Bechtold, M. and Leymann, F. and Wild, K.},
  title   = {Amplitude amplification-inspired {QAOA}: improving the success probability for solving {3SAT}},
  journal = {Quantum Sci. Technol.},
  volume  = {9},
  pages   = {015028},
  year    = {2024}
}

@article{ebadi:22,
  author  = {Ebadi, S. and Keesling, A. and Cain, M. and Wang, T. T. and Levine, H. and Bluvstein, D. and Semeghini, G. and Omran, A. and Liu, J.-G. and Samajdar, R. and others},
  title   = {Quantum optimization of {Maximum Independent Set} using {Rydberg} atom arrays},
  journal = {Science},
  volume  = {376},
  pages   = {1209},
  year    = {2022},
  doi     = {10.1126/science.abo6587}
}

@misc{schuetz2025qredumis,
  author       = {Schuetz, M. J. and Yalovetzky, R. and Andrist, R. S. and Salton, G. and Sun, Y. and Raymond, R. and Chakrabarti, S. and Acharya, A. and Shaydulin, R. and Pistoia, M. and others},
  title        = {{qReduMIS}: A quantum-informed reduction algorithm for the maximum independent set problem},
  year         = {2025},
  howpublished = {arXiv:2503.12551}
}

@misc{otterbach:17,
  author       = {Otterbach, J. S. and Manenti, R. and Alidoust, N. and Bestwick, A. and Block, M. and Bloom, B. and Caldwell, S. and Didier, N. and Fried, E. S. and Hong, S. and others},
  title        = {Unsupervised machine learning on a hybrid quantum computer},
  year         = {2017},
  howpublished = {arXiv:1712.05771}
}

@article{harrigan2021quantum,
  author  = {Harrigan, M. P. and Sung, K. J. and Neeley, M. and Satzinger, K. J. and Arute, F. and Arya, K. and Atalaya, J. and Bardin, J. C. and Barends, R. and Boixo, S. and others},
  title   = {Quantum approximate optimization of non-planar graph problems on a planar superconducting processor},
  journal = {Nat. Phys.},
  volume  = {17},
  pages   = {332},
  year    = {2021},
  doi     = {10.1038/s41567-020-01105-y}
}

@article{pagano:20,
  author  = {Pagano, G. and Bapat, A. and Becker, P. and Collins, K. S. and De, A. and Hess, P. W. and Kaplan, H. B. and Kyprianidis, A. and Tan, W. L. and Baldwin, C. and others},
  title   = {Quantum approximate optimization of the long-range {Ising} model with a trapped-ion quantum simulator},
  journal = {Proc. Natl. Acad. Sci. U.S.A.},
  volume  = {117},
  pages   = {25396},
  year    = {2020},
  doi     = {10.1073/pnas.2006373117}
}

@article{bravyi:20,
  author  = {Bravyi, S. and Kliesch, A. and Koenig, R. and Tang, E.},
  title   = {Obstacles to variational quantum optimization from symmetry protection},
  journal = {Phys. Rev. Lett.},
  volume  = {125},
  pages   = {260505},
  year    = {2020},
  doi     = {10.1103/PhysRevLett.125.260505}
}

@article{finvzgar2024quantum,
  author  = {Fin\v{z}gar, J. R. and Kerschbaumer, A. and Schuetz, M. J. and Mendl, C. B. and Katzgraber, H. G.},
  title   = {Quantum-informed recursive optimization algorithms},
  journal = {PRX Quantum},
  volume  = {5},
  pages   = {020327},
  year    = {2024}
}

@misc{acharya:24,
  author       = {Acharya, A. and Yalovetzky, R. and Minssen, P. and Chakrabarti, S. and Shaydulin, R. and Raymond, R. and Sun, Y. and Herman, D. and Andrist, R. S. and Salton, G. and others},
  title        = {Decomposition pipeline for large-scale portfolio optimization with applications to near-term quantum computing},
  year         = {2024},
  howpublished = {arXiv:2409.10301}
}

@misc{wybo2026scalable,
  author       = {Wybo, E. and R{\"o}nkk{\"o}, J. and Hirviniemi, O. and Fin\v{z}gar, J. R. and Leib, M.},
  title       = {A scalable quantum-enhanced greedy algorithm for maximum independent set problems},
  year         = {2026},
  howpublished = {arXiv:2601.21923}
}

@article{schuetz2025quantum,
  author  = {Schuetz, M. J. and Andrist, R. S. and Salton, G. and Yalovetzky, R. and Raymond, R. and Sun, Y. and Acharya, A. and Chakrabarti, S. and Pistoia, M. and Katzgraber, H. G.},
  title   = {Quantum compilation toolkit for {Rydberg} atom arrays with implications for problem hardness and quantum speedups},
  journal = {Phys. Rev. Res.},
  volume  = {7},
  pages   = {033107},
  year    = {2025}
}

@article{macMahon:15,
  author  = {MacMahon, M. and Garlaschelli, D.},
  title   = {Community detection for correlation matrices},
  journal = {Phys. Rev. X},
  volume  = {5},
  pages   = {021006},
  year    = {2015},
  doi     = {10.1103/PhysRevX.5.021006}
}

@misc{farhi2025lower,
  author       = {Farhi, E. and Gutmann, S. and Ranard, D. and Villalonga, B.},
  title        = {Lower bounding the {MaxCut} of high girth 3-regular graphs using the {QAOA}},
  year         = {2025},
  howpublished = {arXiv:2503.12789}
}

@book{garey:90,
  author    = {Garey, M. R. and Johnson, D. S.},
  title     = {Computers and intractability: A guide to the theory of {NP}-completeness},
  publisher = {W. H. Freeman \& Co.},
  address   = {New York, NY, USA},
  year      = {1990},
  isbn      = {0716710455}
}

@inproceedings{butenko:02,
  author    = {Butenko, S. and Pardalos, P. and Sergienko, I. and Shylo, V. and Stetsyuk, P.},
  title     = {{Branch-and-bound} algorithm for finding the maximum independent set},
  booktitle = {Proceedings of the 2002 ACM Symposium on Applied Computing},
  series    = {SAC '02},
  pages     = {542--546},
  year      = {2002},
  publisher = {Association for Computing Machinery},
  address   = {New York, NY, USA},
  doi       = {10.1145/508791.508897}
}

@incollection{strash:16,
  author    = {Strash, D.},
  title     = {On the power of simple reductions for the maximum independent set problem},
  booktitle = {Computing and Combinatorics},
  editor    = {Dinh, T. N. and Thai, M. T.},
  publisher = {Springer International Publishing},
  address   = {Cham},
  pages     = {345--356},
  year      = {2016},
  isbn      = {978-3-319-42634-1}
}

@article{bittel2021training,
  author  = {Bittel, L. and Kliesch, M.},
  title   = {Training variational quantum algorithms is {NP}-hard},
  journal = {Phys. Rev. Lett.},
  volume  = {127},
  pages   = {120502},
  year    = {2021}
}

@inproceedings{galda2021transferability,
  author    = {Galda, A. and Liu, X. and Lykov, D. and Alexeev, Y. and Safro, I.},
  title     = {Transferability of optimal {QAOA} parameters between random graphs},
  booktitle = {2021 IEEE International Conference on Quantum Computing and Engineering (QCE)},
  publisher = {IEEE},
  pages     = {171--180},
  year      = {2021}
}

@misc{brandao2018fixed,
  author       = {Brandao, F. G. and Broughton, M. and Farhi, E. and Gutmann, S. and Neven, H.},
  title        = {For fixed control parameters the quantum approximate optimization algorithm's objective function value concentrates for typical instances},
  year         = {2018},
  howpublished = {arXiv:1812.04170}
}

@misc{vcepaite2025quantum,
  author       = {{\v C}epait{\.e}, I. and Vaishnav, N. and Zhou, L. and Montanaro, A.},
  title        = {Quantum-enhanced optimization by warm starts},
  year         = {2025},
  howpublished = {arXiv:2508.16309}
}

@article{chang2000heuristics,
  author  = {Chang, T.-J. and Meade, N. and Beasley, J. E. and Sharaiha, Y. M.},
  title   = {Heuristics for cardinality constrained portfolio optimisation},
  journal = {Comput. Oper. Res.},
  volume  = {27},
  pages   = {1271},
  year    = {2000}
}

@misc{knazakis_portfolio,
  author       = {Knazakis, A.},
  title        = {{Portfolio Optimization}},
  year         = {2024},
  howpublished = {\url{https://github.com/AchilleasKn/PortfolioOptimization}},
  note         = {GitHub repository, accessed: 2026-04-07}
}

@article{PhysRevX.8.031016,
  author  = {Albash, T. and Lidar, D. A.},
  title   = {Demonstration of a scaling advantage for a quantum annealer over simulated annealing},
  journal = {Phys. Rev. X},
  volume  = {8},
  pages   = {031016},
  year    = {2018},
  doi     = {10.1103/PhysRevX.8.031016}
}

@article{Kowalsky_2022,
  author  = {Kowalsky, M. and Albash, T. and Hen, I. and Lidar, D. A.},
  title   = {3-regular three-{XORSAT} planted solutions benchmark of classical and quantum heuristic optimizers},
  journal = {Quantum Sci. Technol.},
  volume  = {7},
  pages   = {025008},
  year    = {2022},
  doi     = {10.1088/2058-9565/ac4d1b}
}

@article{mohseni2022ising,
  author  = {Mohseni, N. and McMahon, P. L. and Byrnes, T.},
  title   = {Ising machines as hardware solvers of combinatorial optimization problems},
  journal = {Nat. Rev. Phys.},
  volume  = {4},
  pages   = {363},
  year    = {2022}
}

@misc{leng2023quantum,
  author       = {Leng, J. and Hickman, E. and Li, J. and Wu, X.},
  title        = {Quantum {Hamiltonian} descent},
  year         = {2023},
  howpublished = {arXiv:2303.01471}
}

@misc{cpsatlp,
  author       = {Perron, L. and Didier, F.},
  title        = {{CP-SAT}},
  year         = {2026},
  howpublished = {\url{https://developers.google.com/optimization/cp/cp_solver/}},
  note         = {Accessed: 2026-04-07}
}

@misc{he:24,
  author       = {He, Z. and Shaydulin, R. and Herman, D. and Li, C. and Raymond, R. and Sureshbabu, S. H. and Pistoia, M.},
  title        = {Parameter setting heuristics make the quantum approximate optimization algorithm suitable for the early fault-tolerant era},
  year         = {2024},
  howpublished = {arXiv:2408.09538}
}

@article{wurtz:21,
  author  = {Wurtz, J. and Lykov, D.},
  title   = {Fixed-angle conjectures for the quantum approximate optimization algorithm on regular {MaxCut} graphs},
  journal = {Phys. Rev. A},
  volume  = {104},
  pages   = {052419},
  year    = {2021},
  doi     = {10.1103/PhysRevA.104.052419}
}

@misc{augustino:24,
  author       = {Augustino, B. and Cain, M. and Farhi, E. and Gupta, S. and Gutmann, S. and Ranard, D. and Tang, E. and Kirk, K. V.},
  title        = {Strategies for running the {QAOA} at hundreds of qubits},
  year         = {2024},
  howpublished = {arXiv:2410.03015}
}

@article{andrist:23,
  author  = {Andrist, R. S. and Schuetz, M. J. and Minssen, P. and Yalovetzky, R. and Chakrabarti, S. and Herman, D. and Kumar, N. and Salton, G. and Shaydulin, R. and Sun, Y. and others},
  title   = {Hardness of the maximum-independent-set problem on unit-disk graphs and prospects for quantum speedups},
  journal = {Phys. Rev. Res.},
  volume  = {5},
  pages   = {043277},
  year    = {2023},
  doi     = {10.1103/PhysRevResearch.5.043277}
}

@article{hoeffding1963probability,
  title={Probability inequalities for sums of bounded random variables},
  author={Hoeffding, Wassily},
  journal={Journal of the American statistical association},
  volume={58},
  number={301},
  pages={13--30},
  year={1963},
  publisher={Taylor \& Francis}
}

@article{he2025performance,
  title={Performance of quantum approximate optimization with quantum error detection},
  author={He, Zichang and Amaro, David and Shaydulin, Ruslan and Pistoia, Marco},
  journal={Communications Physics},
  volume={8},
  number={1},
  pages={217},
  year={2025},
  publisher={Nature Publishing Group UK London}
}

@article{perlin2026fault,
  title={Fault-tolerant execution of error-corrected quantum algorithms},
  author={Perlin, Michael A and He, Zichang and Armenakas, Anthony Alexiades and Andres-Martinez, Pablo and Hao, Tianyi and Herman, Dylan and Jin, Yuwei and Mayer, Karl and Self, Chris and Amaro, David and others},
  journal={arXiv:2603.04584},
  year={2026}
}

@article{omanakuttan2025threshold,
  title={Threshold for fault-tolerant quantum advantage with the quantum approximate optimization algorithm},
  author={Omanakuttan, Sivaprasad and He, Zichang and Zhang, Zhiwei and Hao, Tianyi and Babakhani, Arman and Boulebnane, Sami and Chakrabarti, Shouvanik and Herman, Dylan and Sullivan, Joseph and Perlin, Michael A and others},
  journal={arXiv:2504.01897},
  year={2025}
}

@article{jin2025iceberg,
  title={Iceberg beyond the tip: Co-compilation of a quantum error detection code and a quantum algorithm},
  author={Jin, Yuwei and He, Zichang and Hao, Tianyi and Omanakuttan, Sivaprasad and Amaro, David and Tannu, Swamit and Shaydulin, Ruslan and Pistoia, Marco},
  journal={arXiv:2504.21172},
  year={2025}
}

@article{hao2025end,
  title={End-to-end protocol for high-quality quantum approximate optimization algorithm parameters with few shots},
  author={Hao, Tianyi and He, Zichang and Shaydulin, Ruslan and Larson, Jeffrey and Pistoia, Marco},
  journal={Physical Review Research},
  volume={7},
  number={3},
  pages={033179},
  year={2025},
  publisher={APS}
}

@article{mandra2017exponentially,
  title={Exponentially biased ground-state sampling of quantum annealing machines with transverse-field driving Hamiltonians},
  author={Mandra, Salvatore and Zhu, Zheng and Katzgraber, Helmut G},
  journal={Physical review letters},
  volume={118},
  number={7},
  pages={070502},
  year={2017},
  publisher={APS}
}

@article{decross:23,
  title   = {Qubit-Reuse Compilation with Mid-Circuit Measurement and Reset},
  author  = {DeCross, Matthew and Chertkov, Eli and Kohagen, Megan and Foss-Feig, Michael},
  journal = {Phys. Rev. X},
  volume  = {13},
  issue   = {4},
  pages   = {041057},
  year    = {2023},
  month   = {Dec},
  publisher = {American Physical Society},
  doi     = {10.1103/PhysRevX.13.041057},
  url     = {https://link.aps.org/doi/10.1103/PhysRevX.13.041057}
}

@article{montanez:25,
  title   = {Towards a Linear-Ramp {QAOA} Protocol: Evidence of a Scaling Advantage in Solving Some Combinatorial Optimization Problems},
  author  = {Montanez-Barrera, J. A. and Michielsen, Kristel},
  journal = {npj Quantum Information},
  volume  = {11},
  pages   = {131},
  year    = {2025},
  doi     = {10.1038/s41534-025-01082-1},
  url     = {https://doi.org/10.1038/s41534-025-01082-1},
  note    = {arXiv:2405.09169}
}

@misc{dasgupta:26,
  title         = {Quantum Variational Approaches to the Maximum Independent Set Problem at Utility Scale},
  author        = {Dasgupta, Kalyan and Mukherjee, Sumanta and Verma, Dhriti and Sajja, Surya Shravan Kumar and Singh, Abhishek and Phan, Dzung and Kalagnanam, Jayant},
  year          = {2026},
  eprint        = {2606.28866},
  archivePrefix = {arXiv},
  primaryClass  = {quant-ph},
  url           = {https://arxiv.org/abs/2606.28866}
}

@article{sciorilli:25,
	author = {Sciorilli, Marco and Borges, Lucas and Patti, Taylor L. and Garc{\'\i}a-Mart{\'\i}n, Diego and Camilo, Giancarlo and Anandkumar, Anima and Aolita, Leandro},
	date = {2025/01/08},
	doi = {10.1038/s41467-024-55346-z},
	id = {Sciorilli2025},
	isbn = {2041-1723},
	journal = {Nature Communications},
	number = {1},
	pages = {476},
	title = {Towards large-scale quantum optimization solvers with few qubits},
	url = {https://doi.org/10.1038/s41467-024-55346-z},
	volume = {16},
	year = {2025}
}

@article{byun:22,
  title = {Finding the Maximum Independent Sets of Platonic Graphs Using Rydberg Atoms},
  author = {Byun, Andrew and Kim, Minhyuk and Ahn, Jaewook},
  journal = {PRX Quantum},
  volume = {3},
  issue = {3},
  pages = {030305},
  numpages = {10},
  year = {2022},
  month = {Jul},
  publisher = {American Physical Society},
  doi = {10.1103/PRXQuantum.3.030305},
  url = {https://link.aps.org/doi/10.1103/PRXQuantum.3.030305}
}

@article{ransford202698,
  title={A 98-qubit trapped-ion quantum computer with all-to-all connectivity},
  author={Ransford, Anthony and Allman, MS and Arkinstall, Jake and Campora, JP and Cooper, Samuel F and Delaney, Robert D and Dreiling, Joan M and Estey, Brian and Figgatt, Caroline and Hall, Alex and others},
  journal={Nature},
  pages={1--6},
  year={2026},
  publisher={Nature Publishing Group}
}

@article{brady2023iterative,
  title={Iterative Quantum Algorithms for Maximum Independent Set: A Tale of Low-Depth Quantum Algorithms},
  author={Brady, Lucas T. and Hadfield, Stuart},
  journal={arXiv preprint arXiv:2309.13110},
  year={2023}
}

\newpage

\appendix 

\section{Glossary \label{appendix:glossary}}

qReduMIS is a quantum-classical hybrid algorithmic framework and involves several levels of execution. In this paper, we use some terms to distinguish specific execution levels, explained in Table~\ref{tab:glossary}.

\begin{table}[htbp]
\centering
\caption{Glossary of terms for execution at different levels in qReduMIS.}
\label{tab:glossary}
\renewcommand{\arraystretch}{1.2}
\newcommand{\glosshead}[1]{\textbf{#1}}
\newcommand{\glossdef}[1]{\parbox[t]{0.66\linewidth}{\raggedright #1\strut}}
\begin{tabular}{ll}
\toprule
\glosshead{Term} & \glosshead{Definition} \\
\midrule
(classical) trial & \glossdef{An end-to-end iterative hybrid execution of a problem instance. Many trials are performed per instance.} \\
iteration         & \glossdef{One classical-reduction step followed by one QPU call. A trial comprises several iterations.} \\
QPU call          & \glossdef{The execution of a fixed quantum circuit; one QPU call per iteration.} \\
qshot             & \glossdef{A single quantum measurement; many qshots are performed within one QPU call.} \\
\bottomrule
\end{tabular}
\end{table}

\section{The Portfolio Selection Problem (PSP) from the financial perspective \label{appendix:PSP}}

\textbf{Correlation estimation.} One way of estimating the pair-wise correlation is by considering the daily returns of $n$ different assets over a period of $\mathcal{T}$ days, we define the matrix $X_{n \times \mathcal{T}} \in \mathbb{R}^{n \times  \mathcal{T} }$, where the $i$-th row corresponding to the $i$-th asset is a vector containing the daily returns of that asset across the period of $\mathcal{T}$ days. With it, the usual sample covariance estimator is $\Sigma = (1/\mathcal{T}) XX^T$, whose elements are $\sigma_{i,j}$. We consider a \textit{normalized correlation} between assets $i,j$, denoted as $c_{i,j}$, which is defined as 
\begin{equation}
c_{i,j} = \frac{\sigma_{i,j}}{\sqrt{\sigma_{i,i} \cdot \sigma_{j,j}}}.
\end{equation}
It is clear that $-1 \le c_{i,j} \le 1$ because $\Sigma$ is a positive semidefinite matrix. 
Denoting the \textit{correlation matrix} as $C$ whose elements are $c_{i,j}$, it holds that 
\begin{equation}
C = D^{-1/2} \Sigma D^{-1/2},
\end{equation}
where $ D \equiv \mbox{Diag}(\sigma_{1,1}, \sigma_{2,2}, \ldots, \sigma_{n,n})$. 

Having $C$, we can construct a graph $\mathcal{G}=(\mathcal{V}, \mathcal{E})$ with $n=|\mathcal{V}|$ nodes whose  edges $\mathcal{E}$ are based on an effective (thresholded) adjacency matrix $A$ where we set $A_{i,j}=1$ if and only if the absolute value of the correlation between assets $i$ and $j$ (given by $c_{i,j}$) is greater than the threshold parameter $\lambda$, and $A_{i,j}=0$ otherwise. The use of the absolute value, consistent with the problem formulation in the main text, reflects the role of this step as \emph{diversification}, not hedging: two strongly anticorrelated assets are not independent sources of return but mirror images of each other, and treating them symmetrically with strongly positively correlated pairs ensures that the selected basket carries genuinely independent information. This convention also does not affect the bounds in Theorem~\ref{theo:bound_risk}. Accordingly, with the threshold $\lambda$ acting as a correlation filter, pairs of assets are then classified as correlated only if the corresponding correlation coefficient exceeds a given (tunable) threshold $\lambda$, with all other correlations being discarded.
Although simple, this method is known to be robust to noise \citep{macMahon:15}, as the weakest correlations that are most prone to random fluctuations are removed.

\textbf{Risk bounds with PSP-MIS.} Two popular asset-allocation strategies---the uniform allocation and the inverse-variance allocation, both feasible asset allocations whose backtested effectiveness on real-world data is reported in Ref.~\cite{hidaka2023correlation}---admit explicit upper bounds on portfolio risk when restricted to a PSP-MIS solution. These bounds are stated in Theorem~\ref{theo:bound_risk}.

\begin{restatable}{theorem}{mainone}\label{theo:bound_risk}
    Let $S'$ be the solution of PSP which is an MIS of $\mathcal{G}(\mathcal{V}, \mathcal{E})$ whose edges are constructed with the aforementioned threshold $0 \le \lambda \le 1$ on the correlated assets. Let $\bm{w}_u$ be the uniform asset allocation such that $w_i = 1/m$ if $i \in S'$ and $w_i = 0$ otherwise, where $m \equiv |S'|$. Let $\sigma_{i,j}$ be the pair-wise covariance among the $i$ and $j$ assets. Then, it holds 
    $$
    \bm{w}_{u}^T \Sigma \bm{w}_u \le \frac{1-\lambda + \lambda m}{(m)^2}\sum_{i \in S'} \sigma_{i,i}. 
    $$
    Moreover, let $\bm{w}_{v}$ be the inverse-variance allocation asset such that $w_i = \frac{1/\sigma_{i,i}}{\sum_{i \in S'}1/\sigma_{i,i} }$ if $i \in S'$ and $w_i = 0$ otherwise. Then, it holds
    $$
    \bm{w}_{v}^T \Sigma \bm{w}_v \le \frac{1-\lambda + \lambda m}{\sum_{i \in S'} 1/\sigma_{i,i}}.
    $$
\end{restatable}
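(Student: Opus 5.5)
The plan is to use only one consequence of $S'$ being an independent set of $\mathcal{G}$: for any two distinct $i,j\in S'$ there is no edge, so $|c_{i,j}|\le\lambda$. This gives $\sigma_{i,j}\le|\sigma_{i,j}|\le\lambda\sqrt{\sigma_{i,i}\sigma_{j,j}}$. Maximality of $S'$ plays no role. I assume $\sigma_{i,i}>0$ for all assets, which is needed anyway for $c_{i,j}$ and the inverse-variance weights to be defined.

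For either allocation, I split the quadratic form into diagonal and off-diagonal parts:
$$\bm{w}^T\Sigma\bm{w}=\sum_{i\in S'}w_i^2\sigma_{i,i}+\sum_{i\neq j\in S'}w_iw_j\sigma_{i,j}.$$
Because the weights are nonnegative, the covariance bound can be inserted term by term in the off-diagonal sum:
$$\sum_{i\ne j}w_iw_j\sigma_{i,j}\le\lambda\sum_{i\neq j}w_iw_j\sqrt{\sigma_{i,i}\sigma_{j,j}}.$$
It then remains to bound $\sum_{i\neq j}a_ia_j=\big(\sum_i a_i\big)^2-\sum_i a_i^2$ for a suitable nonnegative vector $a$. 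Cauchy--Schwarz, $\big(\sum_i a_i\big)^2\le m\sum_i a_i^2$, gives $\sum_{i\ne j}a_ia_j\le(m-1)\sum_i a_i^2$.

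\emph{Uniform allocation.} Take $a_i=\sqrt{\sigma_{i,i}}$ and pull out the factor $w_iw_j=1/m^2$. The diagonal part is $\frac{1}{m^2}\sum_i\sigma_{i,i}$. The off-diagonal part is at most $\frac{\lambda(m-1)}{m^2}\sum_i\sigma_{i,i}$. Adding the two gives the factor $1+\lambda(m-1)=1-\lambda+\lambda m$, which is the first claim.

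\emph{Inverse-variance allocation.} Let $s=\sum_{i\in S'}1/\sigma_{i,i}$, so that $w_i=1/(\sigma_{i,i}s)$.
The diagonal part is $\frac{1}{s^2}\sum_i\frac{1}{\sigma_{i,i}}=\frac1s$.
In the off-diagonal part, $w_iw_j\sqrt{\sigma_{i,i}\sigma_{j,j}}=\frac{1}{s^2}a_ia_j$ with $a_i=1/\sqrt{\sigma_{i,i}}$, and $\sum_i a_i^2=s$. The same Cauchy--Schwarz step bounds the off-diagonal part by $\frac{\lambda(m-1)s}{s^2}$. The total is $\frac{1-\lambda+\lambda m}{s}$, which is the second claim.

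I do not expect a serious obstacle. The only delicate point is choosing the correct vector $a$ in each case, so that Cauchy--Schwarz produces exactly the diagonal sum appearing in the target bound. One should also note that replacing $\sigma_{i,j}$ by its absolute value is harmless: it is only an upper bound, and that is why the absolute-value convention in the graph construction does not affect the result.
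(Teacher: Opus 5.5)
Your proposal is correct and follows the paper's proof: the same split into diagonal and off-diagonal parts, the same bound $\sigma_{i,j}\le\lambda\sqrt{\sigma_{i,i}\sigma_{j,j}}$ from non-adjacency within $S'$, and the same final reduction to $(m-1)\sum_{i\in S'}\sigma_{i,i}$, respectively $(m-1)\sum_{i\in S'}1/\sigma_{i,i}$. The only cosmetic difference is the last step. The paper applies AM--GM to each term, $\sqrt{\sigma_{i,i}\sigma_{j,j}}\le(\sigma_{i,i}+\sigma_{j,j})/2$ (and likewise for the reciprocals), and then sums over $i\neq j$; this gives exactly your Cauchy--Schwarz consequence $\sum_{i\neq j}a_ia_j\le(m-1)\sum_i a_i^2$.
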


\begin{proof}
    It follows from the construction of the graph asset with threshold $\lambda$ and the fact that $S'$ is an independent set, and therefore $\sigma_{i,j} \le \lambda \sqrt{\sigma_{i,i} \sigma_{j,j}}$ holds for any $i,j \in S'$. Let us assume $\bm{w}_u$. In this case, 
    \begin{eqnarray*}
        \bm{w}_u^T \Sigma \bm{w}_u &=& \sum_{i,j \in S'} w_i w_j \sigma_{i,j} \\
        &=& \sum_{i\in S'} w_i^2 \sigma_{i,i} + \sum_{i\neq j \in S'} w_i w_j \sigma_{i,j} \\
        &\le& \frac{1}{(m)^2} \left( \sum_{i\in S'}\sigma_{i,i} +  \lambda \sum_{i\neq j \in S'} \sqrt{\sigma_{i,i} \sigma_{j,j}}   \right) \\
        &\le& \frac{1}{(m)^2} \left( \sum_{i\in S'}\sigma_{i,i} +  \lambda \sum_{i\neq j \in S'} \frac{\sigma_{i,i} + \sigma_{j,j}}{2}   \right) \\
        &=& \frac{1}{(m)^2} \left( \sum_{i\in S'}\sigma_{i,i} +  \lambda (m - 1)\sum_{i \in S'} \sigma_{i,i}   \right)\\
        &=& \frac{1-\lambda + \lambda m}{(m)^2}\sum_{i \in S'} \sigma_{i,i}, 
    \end{eqnarray*}
    where the first inequality follows from the construction of the market graph, and the second from the fact that the arithmetic mean is at least the geometric mean

    For the inverse variance allocation, we can similarly derive 
    \begin{eqnarray*}
        \bm{w}_v^T \Sigma \bm{w}_v &=&  
        \frac{1}{(\sum_{i\in S'} \frac{1}{\sigma_{i,i}})^2} \left( \sum_{i\in S'}\frac{1}{\sigma_{i,i}} +  \sum_{i\neq j \in S'} \frac{\sigma_{i,j}}{\sigma_{i,i} \sigma_{j,j}}   \right)\\
        &\le& \frac{1}{(\sum_{i\in S'} \frac{1}{\sigma_{i,i}})^2} \left( \sum_{i\in S'}\frac{1}{\sigma_{i,i}} +  \sum_{i\neq j \in S'} \frac{\lambda}{\sqrt{\sigma_{i,i} \sigma_{j,j}}}   \right) \\
        &\le& \frac{1}{(\sum_{i\in S'} \frac{1}{\sigma_{i,i}})^2}  \left( \sum_{i\in S'}\frac{1}{\sigma_{i,i}} +  \frac{\lambda}{2} \sum_{i\neq j \in S'} \frac{1}{\sigma_{i,i}} + \frac{1}{\sigma_{j,j}}  \right) \\
        &=& \frac{1-\lambda + \lambda m}{\sum_{i \in S'} 1/\sigma_{i,i}}. 
    \end{eqnarray*}
\end{proof}

As can be seen, the bounds imply that for sufficiently large MIS, the risk of such allocations is proportional to $\lambda$ times the average risk of assets in the MIS of graphs built with threshold $\lambda$. It is clear that the size of the MIS grows as $\lambda$ gets larger, i.e., the size of MIS is $m$ (the whole assets) when $\lambda = 1$, but so does the average risk of the assets in the MIS. Therefore, $\lambda$ acts as a parameter to balance the size of MIS with the average risk of assets in the MIS. 
%\ms{Can we possibly get rid of $N'$ in favor of $m$? Seems we use both here to denote the size of the MIS. Can we consolidate?}

\section{Optimal Time-to-Solution: Definitions and Empirical Estimation} \label{appendix:empirical_tts}

\textbf{TTS definition.} As generally adopted \cite{PhysRevX.8.031016,Kowalsky_2022,mohseni2022ising,leng2023quantum}, the TTS is defined as \begin{equation} \mathrm{TTS}(t_f) = t_f R(t_f), \quad R(t_f) = \frac{\ln (1 - p_d) }{\ln (1 - p_S (t_f))} \label{eq:TTS_tf_app} \end{equation} where $t_f$ is the runtime for each run, $p_S (t_f)$ is the success probability for the solver to hit the target solution given the runtime $t_f$, and $p_d$ is the desired probability for at least one run to obtain the target solution. The successes of different runs are assumed to be statistically independent, and $R(t_f)$ can be thought of as the effective number of runs needed to obtain at least one target solution. $R(t_f) \approx \ln (1/(1 - p_d)) / p_S(t_f) \propto 1/p_S(t_f)$ when $p_S (t_f)$ is small. We utilize a robust definition of TTS~\cite{PhysRevX.8.031016}, the optimal TTS, which is 
\begin{equation} 
\mathrm{optTTS} = \min_{t_f} \mathrm{TTS}(t_f). 
%\label{eq:def_opttts_app} 
\end{equation}

We comment that the definition of TTS often also conventionally contains a parallelization factor $1/f$ where $f$ is the number of parallel workers that can be deployed to work on independent runs of the same problem instance. In the framework of qReduMIS, if there are $f$ independent QPUs with each being able to run the complete circuit, the parallelization may also be utilized at each QPU call, which would similarly introduce a factor of $1/f$ for TTS. However, in this paper, we assume that no such parallelization is used, and thus $f=1$.

When the solver succeeds in all $M$ runs at runtime $t_f$, the
maximum-likelihood estimator $\hat p_S = 1$ yields $\ln(1-\hat p_S) =
-\infty$ and a meaningless $\mathrm{TTS} = 0$. Following
Ref.~\cite{PhysRevX.8.031016}, we replace the point
estimate by the Jeffreys upper bound
$\hat p_S^{\max} = 1 - 1/(2M)$, which is the smallest failure
probability statistically compatible with $M$ all-success observations.
The resulting $\mathrm{optTTS}$ is therefore an upper bound on
the true $\mathrm{optTTS}$ for such instances and can only make the solver appear
slower, never faster. For $p_d = 0.99$, this cap implies the floor
\begin{equation}
\mathrm{optTTS} \;\ge\; \frac{\ln(1-p_d)}{\ln\!\bigl(1/(2M)\bigr)}\, t_f^{\min}
              \;=\; \frac{2\ln 10}{\ln(2M)}\, t_f^{\min} \ge  \frac{2\ln 10}{\ln(2M)},
\label{eq:floor_opttts}
\end{equation}
where $t_f^{\min}$ is the shortest runtime at which $p_S(t_f) = 1$ is
observed. 

\textbf{Runtime model for qReduMIS.} For the $t$-th trial, the end-to-end run involves $D_t$ iterations with subroutine labels $C_{1,t}$, $Q_{1,t}$, $C_{2,t}$, $Q_{2,t}$, \ldots, $C_{D_t-1,t}$, $Q_{D_t-1,t}$, $C_{D_t,t}$, where $C_{i,t}$ denotes the $i$-th classical subroutine and $Q_{i,t}$ the $i$-th quantum subroutine. In terms of realistic wall-clock times, the cost of the classical subroutines is negligible compared to that of the quantum subroutines. For each $Q_{i,t}$, the number of quantum shots $\mathrm{qshots}_{i,t}$ is fixed, and we also neglect any overhead time associated with circuit compilation on the quantum device.
As the algorithm progresses, the kernel graph is progressively reduced in size, leading to fewer qubits. 
% and shallower quantum circuits in successive iterations \ms{Why shallower depth if (for example) $p$ is fixed?}
Consequently, the wall-clock time per quantum shot decreases over the course of the algorithm. This variation in per-shot cost is governed by the shrinking circuit size, which scales at most polynomially with the original problem size. Since our primary interest lies in characterizing the superpolynomial (e.g., exponential) scaling behavior of the algorithm, this polynomial variation does not affect the asymptotic analysis. We therefore adopt a uniform unit-cost model, treating each quantum shot as a single unit of time regardless of circuit size, and express $t_f$ accordingly as the total number of quantum shots.
The runtime for the $t$-th trial is given by 

\begin{equation}
    \mathrm{runtime}_t = 
    \begin{cases} 
        D_t^\star \cdot \mathrm{qshots}, & \text{if optimal solution is reached} \\
        \infty, & \text{if optimal solution not reached}
    \end{cases}
%\label{eq:runtime_qredumis_app}
\end{equation}
where $D_t^\star$ is the number of QPU calls required to reach the optimal solution. We consider each $C$ and $Q$ component of the algorithm as a block, and we inspect the local and incumbent solutions after each block execution, comparing against the ground truth solution. The optimality of the solution at the $i$-th iteration can be achieved by either the local solution $\mathcal{S}$ or the incumbent solution $\mathcal{W}$. These sets are constructed for the selected nodes across both the classical reduction and the quantum part. At each QPU call, qReduMIS identifies the largest independent set measured and also performs the frozen node(s) selection. While the local solution $\mathcal{S}$ is updated with the (in-set) frozen node(s), the incumbent $\mathcal{W}$ is updated with the largest solution measured by the QPU, if the updated incumbent improves the current one. By doing this, $\mathcal{W}$ is always ahead of $\mathcal{S}$ and the optimal number of QPU calls $D_t^\star$ corresponds to the incumbent finding the optimal solution. 

\textbf{Quantum-shot schedule.} In principle, the per-iteration shot budget $\mathrm{qshots}_{i,t}$ can be scheduled adaptively across iterations of a trial, e.g., guided by Hoeffding-type concentration bounds \cite{hoeffding1963probability}. Different schedules trade off total time cost against per-call statistical accuracy. The impact of the (constant) shot budget on optTTS is illustrated in Fig.~\ref{fig:quantum_shots_qredumis} for a representative instance, which motivates the choice $\mathrm{qshots}_{i,t}=\mathrm{qshots}$ used throughout this work.

\textbf{Empirical estimation of optTTS.} For a given problem instance, obtaining the optTTS from numerical experiments of multiple trials requires statistical estimation of $\mathrm{TTS}(t_f)$ for many values of $t_f$ and an optimization procedure with respect to $t_f$. This may seem computationally impractical, but we estimate it using a protocol that leverages the information from $M$ multiple runs or trials.
The success probability is estimated as 

\begin{equation}
\begin{split}
    p_S(t_f) &= \mathbb{E}_{t} \left[ \mathds{1} \left( \mathrm{runtime}_t \leq t_f \right) \right] \\
    &\approx \frac{\sum_t \mathds{1} \left( \mathrm{runtime}_t \leq t_f \right)}{\sum_t 1} = \frac{N_S(t_f)}{M}
\end{split}
\end{equation}
where $N_S(t_f)$ denotes the number of successful runs within runtime $t_f$.
Typically, the solver adopts an iterative algorithm that continues to improve the solution as $t_f$ increases, so we keep the solver running until the target solution is reached and record the required runtime $t_i$ for the run indexed by $i$. From $M$ runs, we obtain a sorted list of times to target: $t_1 \leq t_2 \leq \cdots \leq t_M$. Within a given runtime $t_f$, the runs with $t_i \leq t_f$ have reached the target, whereas the runs with $t_i > t_f$ have not. With increasing $t_f$, the estimated $p_S(t_f)$ changes its value only at $t_f = t_i$ ($i = 1, 2, \cdots, M$), so it suffices to evaluate $\mathrm{TTS}(t_f)$ only at $t_f = t_i$. We use the convention $N_S(t_i) = i - 1/2$, so $p_S(t_i) = (i - 1/2)/M$. The minimum of $\mathrm{TTS}(t_f)$ is then estimated from the $M$ numerical values of $p_S(t_i)$. The statistical error of optTTS is estimated by a bootstrap procedure of resampling the $M$ runs.

\textbf{OptTTS for standalone QAOA.} For standalone QAOA, the quantum circuit is fixed for a given problem instance, and each quantum shot can be regarded as an independent trial with a single iteration. QAOA can thus be viewed as an annealing-type algorithm with a fixed $t_f = t_{f0} = 1$ (our unit of measure). The optTTS simplifies to:
\begin{align*}
\mathrm{optTTS} &= \ln (1 - p_d) \min_{t_f} \frac{t_f}{\ln (1 - p_S (t_f))} \\
               &= \frac{\ln (1 - p_d)}{\ln (1 - p_S)},
\end{align*}
where $p_S$ is the success probability of a single QAOA quantum shot.

\section{Successfully identifying frozen nodes is easier than finding the optimal solution \label{appendix:runtime}}

We now present a theoretical analysis of the success probability of identifying frozen nodes and compare that with the success probability of finding an optimal solution. We focus on the in-set criterion but this can be extended to out-set criterion in a straightforward manner. An in-set (out-set) frozen node is a node identified as with high (low) probability to be part of a maximum independent set. While in qReduMIS this is done empirically from quantum measurements, we refer to the ground-truth set of in-set (out-set) frozen nodes as the unique set of nodes that belong to at least one (none) maximum independent set. In this section, we refer to them directly as set of frozen nodes.

\begin{restatable}{theorem}{mainthree}\label{theo:prob_frozen}

    Let $\mathcal{F}$ denote the set of (in-set) frozen nodes, which comprises the unique set of nodes that are part of at least one MIS solution for the graph $\mathcal{K}$, $\Pr(\mathcal{F})$ the probability of sampling $r=1$ (in-set) frozen nodes per draw (where $r$ denotes the number of candidate frozen nodes drawn from the restricted candidate list; we use $r$ here to avoid clashing with the correlation threshold $\lambda$ used elsewhere in the paper), and let $P_{\text{MIS}}$ be the probability of sampling the optimal solution. A lower bound for $\Pr(\mathcal{F})$ is the following
    \begin{align*}
        \Pr(\mathcal{F}) & > P_{\text{MIS}} + \sum_{i,n} P_{\text{MIS}_{i} \setminus \{n\}|\text{MIS-1}} \cdot P_{\text{MIS-1}|\text{non-MIS}}  \cdot  \\
        & (1-P_{\text{MIS}}),
    \end{align*}
    where $\text{MIS}-1$ refers to the solutions to the $\text{MIS}-1$ problem and we consider the non-maximal independent set of size $|\text{MIS}|-1$, referred as $\text{MIS}_{i}$. In general, a non-maximal independent set of size $k$ is an independent set that can be extended by adding $|\text{MIS}|-k$ vertices to form a maximum independent set. Specifically, non-maximal independent sets of size $k=|\text{MIS}|-1$ are constructed by removing any node $n$ from each possible $\text{MIS}_{i}$ solution to the MIS problem.

    Moreover, under the assumption of uniform sampling among solutions of the same size, each independent set of a given size is sampled with equal probability. Thus, for the $\text{MIS}-1$ problem, $P_{\text{MIS}_{i} \setminus \{n\}|\text{MIS-1}} = \frac{\mathcal{D}_{\text{MIS}} |\text{MIS}|}{\mathcal{D}_{\text{MIS}-1}}$, where $\mathcal{D}_{\text{MIS}}$ and $\mathcal{D}_{\text{MIS}-1}$ denote the number of (degenerated) maximum independent sets and (degenerated) independent sets of size $|\text{MIS}|-1$, respectively. Thus, in this case, the probability of sampling a frozen node is lower bounded by
    \begin{align}
    \Pr(\mathcal{F}) &> P_{\text{MIS}} + \frac{1}{\mathbb{H}} \cdot P_{\text{MIS-1}|\text{non-MIS}} \cdot  (1-P_{\text{MIS}}),
    \label{eq:prob_frozen_nodes}
    \end{align}
    where $\frac{1}{\mathbb{H}} := \frac{\mathcal{D}_{\text{MIS}} |\text{MIS}|}{\mathcal{D}_{\text{MIS}-1}}$, $\mathbb{H}$ is the hardness parameter introduced by Ebadi \etal{}~\cite{ebadi:22}.

    While the assumption of uniform sampling is highly idealized as quantum annealers, for instance, are known to be biased samplers~\cite{schuetz2025quantum, mandra2017exponentially}, it yields a closed-form, instance-defined value for the coefficient $1/\mathbb{H}$. We remark that the structural bound $\Pr(\mathcal{F}) > P_{\text{MIS}}$ itself does not rely on this assumption.
\end{restatable}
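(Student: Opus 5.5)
The plan is to model one QPU sample as a random independent set $S$ on $\mathcal{K}$. The event "a frozen node is sampled" means a node drawn from $S$ (the $r=1$ draw) lies in $\mathcal{F}$. I will lower bound its probability by a disjoint decomposition over the size of $S$, conditioning first on whether $S$ is an MIS. By total probability,
\begin{align*}
\Pr(\mathcal{F}) = \Pr(\mathcal{F}\mid \text{MIS})\,P_{\text{MIS}} + \Pr(\mathcal{F}\mid \text{non-MIS})\,(1-P_{\text{MIS}}).
\end{align*}
If $S$ is a maximum independent set, every node of $S$ belongs to at least one MIS, so $S\subseteq\mathcal{F}$ and $\Pr(\mathcal{F}\mid\text{MIS})=1$. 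This gives the first term $P_{\text{MIS}}$.

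For the second term I keep only one sub-event of non-MIS outcomes: $S$ has size $|\text{MIS}|-1$ and is of the form $\text{MIS}_i\setminus\{n\}$. Such an $S$ is a subset of an MIS, so all of its nodes again lie in $\mathcal{F}$ and any drawn node is frozen. Enumerating these extendable sets by pairs $(i,n)$ gives
\begin{align*}
\Pr(\mathcal{F}\mid\text{non-MIS}) \ge \sum_{i,n} P_{\text{MIS}_i\setminus\{n\}\mid \text{MIS-1}}\,P_{\text{MIS-1}\mid\text{non-MIS}}.
\end{align*}
One point needs care here: distinct pairs can produce the same set, namely when $\text{MIS}_i\setminus\{n\}=\text{MIS}_j\setminus\{n'\}$. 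To avoid double counting, I will either interpret the sum as running over the distinct resulting sets, or note that the conditional probability in the statement is defined on that union of sets.

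The inequality becomes strict because the discarded contributions are nonnegative. These include non-MIS samples of other sizes that still contain frozen nodes, and size-$(|\text{MIS}|-1)$ non-extendable sets that happen to contain some node of $\mathcal{F}$. At least one of these is generically positive, for example an independent set of size $|\text{MIS}|-2$ inside an MIS whenever $|\text{MIS}|\ge 2$ and the sampler has full support. I expect this to be the main obstacle: strictness requires a nondegeneracy assumption, such as full support of the QAOA distribution or $P_{\text{MIS}}<1$. I would state that assumption explicitly and otherwise fall back to $\ge$. The structural claim $\Pr(\mathcal{F})>P_{\text{MIS}}$ follows in the same way, since some non-MIS sample containing a node of $\mathcal{F}$ has positive probability.

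For the uniform-sampling part, I condition on size $|\text{MIS}|-1$, where each of the $\mathcal{D}_{\text{MIS}-1}$ sets is equally likely. Each MIS yields $|\text{MIS}|$ subsets of size $|\text{MIS}|-1$, so counting with multiplicity there are $\mathcal{D}_{\text{MIS}}|\text{MIS}|$ extendable sets. The aggregate conditional probability is therefore $\mathcal{D}_{\text{MIS}}|\text{MIS}|/\mathcal{D}_{\text{MIS}-1}=1/\mathbb{H}$, using the definition of $\mathbb{H}$ from the main text; this is an upper count if there are coincidences, so I will note that it is exact when the subsets are distinct. Substituting into the first bound gives Eq.~\eqref{eq:prob_frozen_nodes}.
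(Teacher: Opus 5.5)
Your proposal is correct and follows essentially the same route as the paper's proof: total probability over MIS versus non-MIS samples with $\Pr(\mathcal{F}\mid\text{MIS})=1$, discarding everything except the sets $\text{MIS}_i\setminus\{n\}$ in the size-$(|\text{MIS}|-1)$ shell, and evaluating that fraction by counting under uniform sampling. Your two caveats are genuinely needed rather than cosmetic, since the paper asserts strictness and the count $\mathcal{D}_{\text{MIS}}|\text{MIS}|/\mathcal{D}_{\text{MIS}-1}$ without them: on the $5$-cycle every vertex lies in some MIS yet $1/\mathbb{H}=10/5=2$, so Eq.~\eqref{eq:prob_frozen_nodes} would force $\Pr(\mathcal{F})>1$ whenever $P_{\text{MIS}}<1$ and $P_{\text{MIS-1}|\text{non-MIS}}>1/2$; this is why distinctness of the sets $\text{MIS}_i\setminus\{n\}$ (or replacing $1/\mathbb{H}$ by the number of distinct extendable sets divided by $\mathcal{D}_{\text{MIS}-1}$) has to be a hypothesis.
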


\begin{proof}
    
Given a graph $\mathcal{G}$, we define the set of MIS solutions as $\mathcal{M}$ and the set of (in-set) frozen nodes as $\mathcal{F}^{\text{in}}$, which contains the unique set of nodes that belong to all the $\mathcal{D}$ (degenerated) MIS solutions. For ease of notation, we will refer as $\mathcal{F}$. In general, given an independent set measurements $\{\mathcal{I}_{i}\}$, the probability of sampling a (any) frozen node from them is

\begin{align*}
\Pr(\mathcal{F}) = \Pr(\mathcal{F} \mid \text{MIS}) \cdot P_{\text{MIS}} + \Pr(\mathcal{F} \mid \text{non-MIS}) \cdot (1 - P_{\text{MIS}}) 
\end{align*}
where we use the symbol $P$ to refer to the probability of sampling an independent set measurement over a (particular) set of independent set measurements and $\Pr$ to refer to the probability of sampling a frozen node (in $\mathcal{F}$) from a set of independent set measurements. By definition $ \Pr(\mathcal{F} \mid \text{MIS})=1$ and the probability of sampling a frozen node given a non-MIS solution is different from zero. Thus, in the cases where $ P_{\text{MIS}}  < 1$ we can show that

\begin{align*}
\Pr(\mathcal{F}) = P_{\text{MIS}} + \Pr(\mathcal{F} \mid \text{non-MIS}) \cdot (1 - P_{\text{MIS}}) > P_{\text{MIS}}
\end{align*}

To prove this, we show that $\Pr(\mathcal{F} \mid \text{non-MIS}) > 0$. Note that given a solution to the MIS problem, if we remove $i$ nodes from it, this is a solution to the $\text{MIS}-i$ problem. For example, if a set $\mathcal{I}$ is a solution to the MIS problem, then if we remove any node from it $\mathcal{I}\setminus \{n\}$ $\forall$ $n \in \mathcal{I}$, this is a solution to the $\text{MIS}-1$ problem. 

We start by noting that 

\begin{align*}
\Pr(\mathcal{F} \mid \text{non-MIS}) = \sum_{i=1}^{|\text{MIS}|} \Pr(\mathcal{F} \mid \text{MIS-i}) \cdot  P_{\text{MIS-i}|\text{non-MIS}}  \\ 
+ \Pr(\mathcal{F} \mid \text{>MIS}) \cdot  P_{\text{>MIS}|\text{non-MIS}}
\end{align*}
where $P_{\text{MIS-i}|\text{non-MIS}}$ refers to the probability that among all the non-MIS solutions we sample a solution to the MIS-i problem and $ \Pr(\mathcal{F} \mid \text{MIS-i})$ is the probability that given MIS-i solution, we sample a frozen node. The use of $\text{>MIS}$ indicates independent set measurements whose size (i.e., Hamming weight) is larger than the MIS. $P_{\text{>MIS}|\text{non-MIS}}$ refers to the probability of given a non-MIS solution, sample an independent set measurement that exceeds MIS. We take this $P_{\text{>MIS}|\text{non-MIS}}=0$ as ideally we would not expect to measure from the QPU a solution that is not an independent set. It is common practice to postprocess the QPU measurements to remove solutions that break the independence constraint. By splitting that sum into the case of $i=1$ and $i>1$ we obtain 

\begin{align*}
\Pr(\mathcal{F}) = P_{\text{MIS}} + \Pr(\mathcal{F} \mid \text{MIS-1}) \cdot P_{\text{MIS-1}|\text{non-MIS}} \cdot  (1-P_{\text{MIS}})  \\
+ \sum_{i>1}^{|\text{MIS}|} \Pr(\mathcal{F} \mid \text{MIS-i}) \cdot  P_{\text{MIS-i}|\text{non-MIS}} \cdot (1-P_{\text{MIS}})
\end{align*}

How much larger the probability of sampling a frozen node is, in comparison to the probability of sampling the optimal solution, is mostly controlled by the contributions of the probability of sampling each $\text{MIS}-i$, $i=2, \dots, N-1$ independent set measurement that is made by removing nodes from the MIS solutions. 

Next, let us derive an expression for $\Pr(\mathcal{F} \mid \text{MIS-1})$. To do this, we note that the set of $\mathcal{D}_{\text{MIS}-1}$ solutions to the $\text{MIS}-1$ problem consists of two sets: (1) those obtained by removing a single node from a MIS solution, and (2) other independent sets that are not derived in this way. The first type can be described as the sets $\text{MIS} \setminus \{n\}$, $\forall$ $n \in \text{MIS}$, while we refer to the second type collectively as \textit{other}. 

Given a measurement that yields an independent set, the probability of sampling an element from the first type is denoted by $P_{\text{MIS} \setminus \{n\}|\text{MIS-1}}$, and the probability of sampling an element from the second type is denoted by $P_{\text{other}|\text{MIS-1}}$. Using these definitions, we can show that
\begin{align*}
\Pr(\mathcal{F} \mid \text{MIS-1}) = \Pr(\mathcal{F} \mid \text{MIS} \setminus \{n\}) \cdot P_{\text{MIS} \setminus \{n\}|\text{MIS-1}} \\ 
+ \Pr(\mathcal{F} \mid \text{other} ) \cdot  P_{\text{other}|\text{MIS-1}} \\
\end{align*}
where by definition $\Pr(\mathcal{F} \mid \text{MIS} \setminus \{n\})=1$. 

We obtain a lower bound for $\Pr(\mathcal{F})$ by lower bounding this expression by removing the contributions of the $\text{other}$ states (these are discarded rather than estimated, which strengthens the bound on $\Pr(\mathcal{F})$), and we consider the contribution of each of the $i$-th MIS solutions and each of the $n$ nodes as
\begin{align*}
P_{\text{MIS} \setminus \{n\}|\text{MIS-1}} = \sum_{i,n}  P_{\text{MIS}_{i} \setminus \{n\}|\text{MIS-1}}. 
\end{align*}

Then, 

\begin{align*}
\Pr(\mathcal{F}) \;>\;& P_{\text{MIS}} + \\
&\sum_{i,n} P_{\text{MIS}_{i} \setminus \{n\} \mid \text{MIS-1}} \cdot P_{\text{MIS-1} \mid \text{non-MIS}} \cdot (1 - P_{\text{MIS}}).
\end{align*}

Under the assumption of uniform sampling across independent sets of the same size, the probability $P_{\text{MIS} \setminus  \{n\}|\text{MIS-1}}$ is given by the problem instance and it is the proportion of $\text{MIS}-1$ solutions that arise from all possible MIS solutions by deleting a single node from each. It has the following form, for all $i$ and $n$

\begin{align*}
P_{\text{MIS}_{i} \setminus  \{n\}|\text{MIS-1}}  = \frac{\mathcal{D}_{\text{MIS}} \binom{|\text{MIS}|}{1}}{\mathcal{D}_{\text{MIS}-1}} = \frac{\mathcal{D}_{\text{MIS}} |\text{MIS}|}{\mathcal{D}_{\text{MIS}-1}}.
\end{align*}

Thus, in the case of $P_{\text{MIS}} < 1$, how much larger is the probability of sampling a (in-set) frozen node to the probability of sampling an optimal solution is driven by the second term, which is weighted by two terms. The first one is given by the problem instance and it is the fraction of solutions to $\text{MIS}-1$ that are constructed from $\text{MIS}$ solution. Intuitively, the closer this number to 1, the better because it means that all the solutions to $\text{MIS}-1$ are made of (in-set) frozen nodes. Thus, the larger the probability $P_{\text{MIS-1}|\text{non-MIS}}$, the larger will be this contribution. Note that  $P_{\text{MIS-1}|\text{non-MIS}}$ is driven by the quantum algorithm of choice and in practice, it depends on the hardware. 

Note that for ease of reading, we have lower bounded $\Pr(\mathcal{F}) $ and only showed the contributions to this probability with the solutions to the $\text{MIS}-1$. However, as shown before, in the sum there are contributions from all the  $\text{MIS}-i$ solutions, weighted by the probability to sampling these solutions. More generally, using that $\sum_{i=1}^{|\text{MIS}|} \frac{\mathcal{D}_{\text{MIS}} \binom{|MIS|}{i}}{\mathcal{D}_{\text{MIS}-i}} >\frac{\mathcal{D}_{\text{MIS}} |\text{MIS}|}{\mathcal{D}_{\text{MIS}-1}}$, we can write

\begin{align}
\Pr(\mathcal{F}) &> P_{\text{MIS}} + \frac{\mathcal{D}_{\text{MIS}} |\text{MIS}|}{\mathcal{D}_{\text{MIS}-1}} \cdot P_{\text{MIS-1}|\text{non-MIS}} \cdot  (1-P_{\text{MIS}}) \\ 
&= P_{\text{MIS}} + \frac{1}{\mathbb{H}} \cdot P_{\text{MIS-1}|\text{non-MIS}} \cdot  (1-P_{\text{MIS}}).
\end{align}

Note that the coefficient $\frac{\mathcal{D}_{\text{MIS}} |\text{MIS}|}{\mathcal{D}_{\text{MIS}-1}}$ is $\frac{1}{\mathbb{H}}$, where $\mathbb{H}$ is the hardness parameter introduced by Ebadi \etal{}~\cite{ebadi:22}.

\end{proof}

We have lower bounded the $\Pr(\mathcal{F})$ by considering non-maximal independent sets of different sizes. Among these, we focus on those of size $|\text{MIS}|-1$ as these are arguably the states with the highest sampling likelihood after the optimal solutions, when using quantum algorithms intended to find the optimal solution. This leads to an expression where the probability of sampling a frozen node, relative to sampling the optimal solution, is governed by the contribution from non-maximal independent sets of size $|\text{MIS}|-1$ and the fraction of measurements of size $|\text{MIS}|-1$ among all non-MIS measurements. 

In the case of uniform sampling of measurements of same size, we see that this is controlled by the hardness of the problem instance itself. In both SA and QA, the sampling process can be viewed as generating configurations according to a probability distribution that, at equilibrium or in the low-temperature limit, becomes approximately uniform over all independent sets of a fixed size. It has been empirically reported for these algorithms that the probability of sampling a maximum independent set, $P_{\text{MIS}}$, decreases exponentially as the hardness parameter $\mathbb{H}$ increases~\cite{ebadi:22, schuetz2025qredumis}. 

This analysis highlights the effectiveness of using the QPU as a co-processor to sample frozen nodes. Even if the QPU output rarely includes measurements corresponding to maximum independent sets, the ability to sample non-maximal independent sets of size $|\text{MIS}|-1$ ensures that $\Pr(\mathcal{F})$ may remain sufficiently large to provide meaningful information for classical reduction in subsequent iterations. This enables qReduMIS to ultimately identify an optimal solution. Furthermore, since $\Pr(\mathcal{F}) > P_{\text{MIS}}$, the requirements for high-quality measurements from the QPU are relaxed; that is, it is not necessary to obtain extensive statistics for the optimal solution. For instance, in the context of QAOA, this means that robust parameter choices for $\gamma$ and $\beta$ do not need to be perfectly optimized to sample the optimal solution. As a result, the qReduMIS approach reduces the demands on quantum and classical resources required for successful problem solving.

\section{Experimental results \label{appendix:numerical_experiments}}

\textbf{Postprocessing of independent set measurements.} There are multiple techniques, ranging from simple heuristics such as removing the first violating node to more sophisticated approaches like greedily adding non-conflicting nodes after removal. The choice of strategy significantly affects both the size of the recovered IS and the resulting success probability. While the performance of QAOA directly depends on this strategy as it determines the end-to-end success probability of the algorithm, in qReduMIS the end-to-end algorithm's success probability depends on the success probability of sampling correct frozen nodes. As established by Theorem~\ref{theo:prob_frozen}, under the stated assumptions the probability of sampling a frozen node from a given set of independent-set measurements is strictly larger than the probability of sampling a maximum independent set. Thus, by design, the performance of qReduMIS is less susceptible to the postprocessing fixup. For this reason, we opt for the most simple technique for postprocessing, if not said otherwise. In particular, for the hardware experiments, in order to leverage the hardware for larger and deeper circuits, we opt for the most sophisticated one (remove and greedily add with check) with the goal of preventing null success probability in QAOA.

\subsection{Large-scale quantum-informed PSP-MIS on hardware \label{appendix:mkt_graphs_hw}}

\begin{table*}
    \centering
    \begin{tabular}{l|cc|cc|cc}
        \hline
        \multirow{2}{*}{Stock Market} 
            & \multicolumn{2}{c|}{Success Probability} 
            & \multicolumn{2}{c|}{optTTS}
            & \multicolumn{2}{c}{Average Approximation Ratio} \\
        \cline{2-7}
            & QAOA & qReduMIS & QAOA & qReduMIS & QAOA & qReduMIS \\
        \hline
        DAX 100 
            & 0.09\,[0.04, 0.15] & 0.95\,[0.85, 1.00] 
            & 48.83\,[28.34, 112.81] &35.56\,[24.97, 52.84] 
            & 0.83\,[0.81, 0.85] & 1.00\,[0.99, 1.00] \\
        FTSE 100 
            & 0.03\,[0.00, 0.07] & 0.70\,[0.50, 0.90]
            & 151.19\,[63.46, 458.21] & 185.58\,[88.58, 286.41]
            & 0.73\,[0.71, 0.75] & 0.98\,[0.96, 0.99] \\
        S\&P 100 
            & 0.00 &  0.40\,[0.20, 0.60] 
            & $\infty$ & 468.67\,[215.28, 1181.40]
            & 0.74\,[0.73, 0.76] & 0.96\,[0.94, 0.97] \\
        Nikkei 225 
            & 0.00 & 0.95\,[0.85, 1.00]
            & $\infty$ & 88.58\,[49.94, 123.49]
            & 0.80\,[0.80, 0.82] & 0.99\,[0.98, 1.00] \\
        \hline
    \end{tabular}
    \caption{Performance comparison of standalone QAOA and qReduMIS on the first kernel $\mathcal{K}$ for each stock market index. The table reports the success probability ($P_{\text{MIS}}$), optimal time-to-solution (optTTS), and expected approximation ratio (AR), each with 95\% confidence intervals obtained via bootstrap resampling with 10,000 replacements. For standalone QAOA, 100 quantum shots were used; for qReduMIS, 20 classical trials with 10 quantum shots per QPU call. All experiments use $p=2$ QAOA layers on Quantinuum's Helios hardware (98 qubits).}
    \label{tab:performance_results}
\end{table*}

We execute QAOA (with $p=2$) over the kernel $\mathcal{K}$. We postprocessed the $100$ independent set measurements and from them, we compute the figures of merit for standalone QAOA. For qReduMIS, we perform 20 classical trials ($\text{cshots}=20$) for each of the indices. At each iteration of qReduMIS, we performed the classical reduction on a local classical computer, and we submitted the QAOA job (with 10 quantum shots) for the updated kernel $\tilde{\mathcal{K}}$. The independent set measurements are then postprocessed in the same way as those of standalone QAOA, and an in-set frozen node is selected, leading to an updated kernel for the next iteration. 

Table~\ref{tab:performance_results} shows the three figures of merit for the different indexes. These are the values reflected in Fig.~\ref{fig:main_results_hw} in main text with the addition of the optTTS. For vanilla QAOA the S\&P~100 and Nikkei~225 indices yield $P_{\text{MIS}} = 0$, and consequently $\text{optTTS} = \infty$. For the other two indices, the improvement in success probability achieved by qReduMIS is not directly reflected in the $\text{optTTS}$, since the latter also depends on the number of quantum shots performed at each QPU call, which may have been too large for these instances. Reducing the number of quantum shots, provided the end-to-end success probability stays close to the values achieved here, would lower the $\text{optTTS}$ and could reveal an improvement over standalone QAOA. As we discussed in the main text, cross-method comparisons of $\text{optTTS}$ on the same instances are not robust and for this reason in the main text we limit to discuss the other two figures of merit. 

For all the indices no more than five QPU calls (and five classical reductions) were performed. We kicked off 20 classical trials for each of the four kernels corresponding to the indices and after each QPU call and classical reduction, some of these trials finished because the updated kernel became fully reducible. Note that this does not mean that the optimal solution was found. There are some cases where the kernel graphs were fully reducible but the final solution was sub-optimal. This is because at some point in the iterations, there was an incorrect selection of a frozen node, which unblocked the reduction, but was a node that did not belong to any MIS. To prevent running more experiments on the hardware when it was not necessary, we stopped the iterations once the ground truth solution, computed with OR-Tools, was found, as we report the optTTS.

\subsection{Extensive benchmark on PSP-MIS instances \label{appendix:benchmark_extensive}}

This appendix reports two complementary emulator experiments on Quantinuum's H2-1 emulator that use different quantum-shot budgets and should not be confused: (i) the kernel-size scan used to estimate the optTTS scaling (Figs.~\ref{fig:probability_mis_mkt_graphs} and \ref{fig:qpu_to_solution_distribution_emulator}), in which qReduMIS uses $\mathrm{qshots}=5$ per QPU call, and (ii) the hardness-parameter scan at fixed $N=22$ (Fig.~\ref{fig:metrics_hardness}), in which qReduMIS uses $\mathrm{qshots}=10$ per QPU call. The hardware experiments on Helios (Appendix~\ref{appendix:mkt_graphs_hw}) also use $\mathrm{qshots}=10$.

\begin{table}[h!]
\centering
\begin{tabular}{|c|c|}
\hline
\textbf{Interval} & \textbf{Number of Seeds} \\
\hline
4--5   & 11 \\
6--9   & 9  \\
10--13 & 9  \\
14--15 & 7  \\
16--16 & 8  \\
17--18 & 9  \\
19--20 & 8  \\
21--21 & 7  \\
22--22 & 5  \\
\hline
\end{tabular}
\caption{Number of seeds per interval of (first) kernel size utilized for the benchmark on Quantinuum's emulator.}
\label{tab:num_seeds_mkt_emulator}
\end{table}

\textbf{The testbed.} For each problem size $N$ (number of assets), we generate $20$ random instances by sampling $N$ assets without replacement from the NIKKEI 225 index, using the data from Chang \etal~\cite{chang2000heuristics}. For every instance we set the correlation sensitivity as in Eq.~\ref{eq:average_corr} to obtain the graph $\mathcal{G}$. The size of the (first) kernel is not a controllable parameter: it is fully determined by the input instance. The reduction technique, described in the main text, is based on the exposed-node removal. Thus, the number of nodes it eliminates depends on the density of the input graph and on the distribution of the exposed corner nodes. As a result, for a fixed number of nodes in the input graph, the size and structure of the (first) kernel graph vary from instance to instance. Fig.~\ref{fig:market_testbed} shows the resulting dispersion of the size of the first kernel graph $\mathcal{K}$ as a function of the number of nodes in the input graph $\mathcal{G}$.

\begin{figure}
    \centering
    \includegraphics[width=\linewidth]{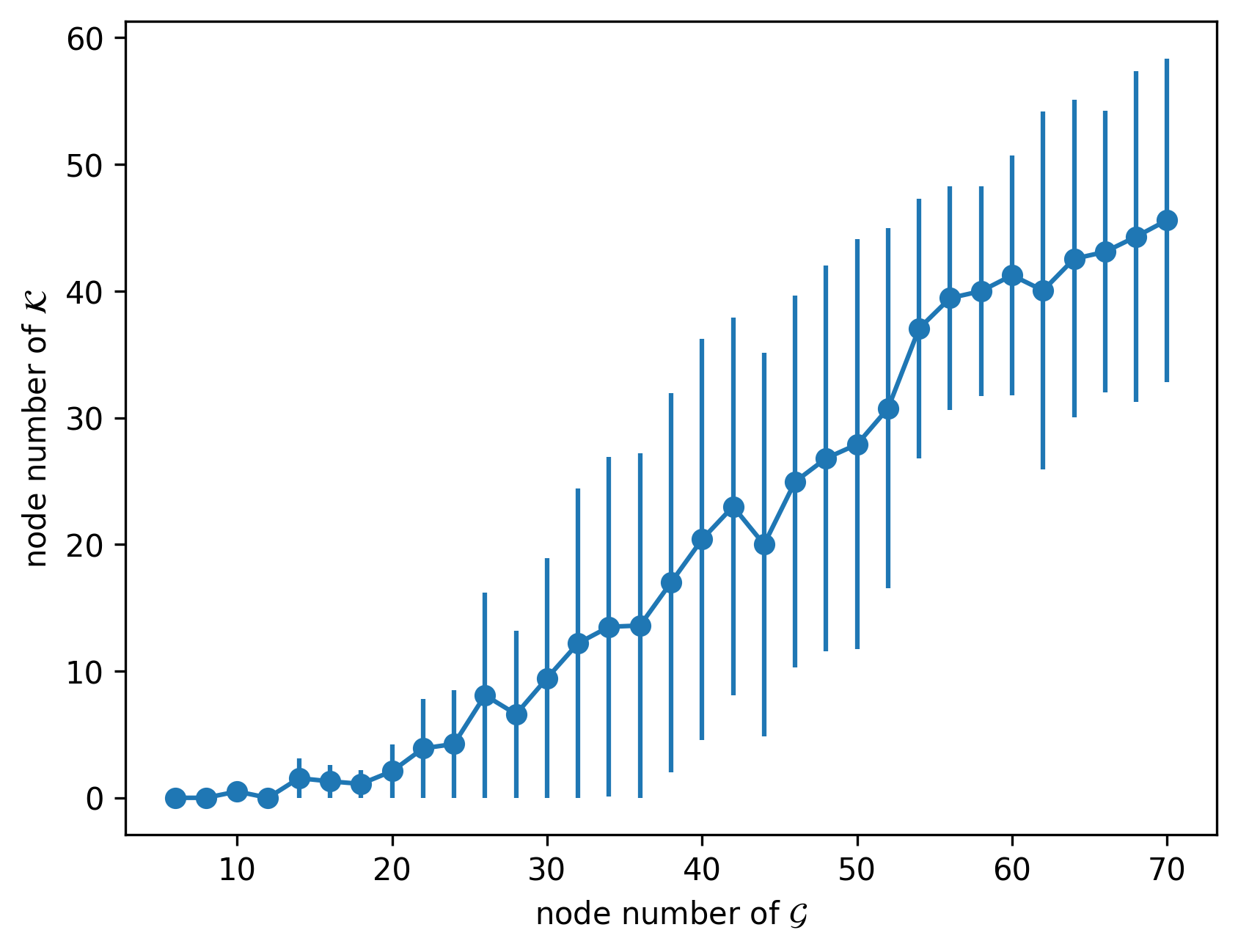}
    \caption{Size of the first kernel graphs $\mathcal{K}$ as a function of the size of the input graphs $\mathcal{G}$ across $20$ random instances. To make the market graphs, the correlation sensitivity used was the average correlation among the set of assets.}
    \label{fig:market_testbed}
\end{figure}

We limit to input graphs whose (first) kernel sizes have up to $22$ nodes. Considering the dispersion in kernel sizes, to select the problem instances from the ones shown in Fig.~\ref{fig:market_testbed}, we consider intervals of kernel sizes such that they have a similar number of problem instances per interval. Fig.~\ref{fig:testbed_emulator} shows the instances considered for the extensive benchmark: the top panel reports the distribution of kernel sizes per interval, and the bottom panel the distribution of kernel density. Both the input-graph and kernel-size intervals are visible in the figure. We consolidate the number of seeds per interval of (first) kernel size in Table~\ref{tab:num_seeds_mkt_emulator}.

\begin{figure*}
    \centering
    \includegraphics[width=0.7\linewidth]{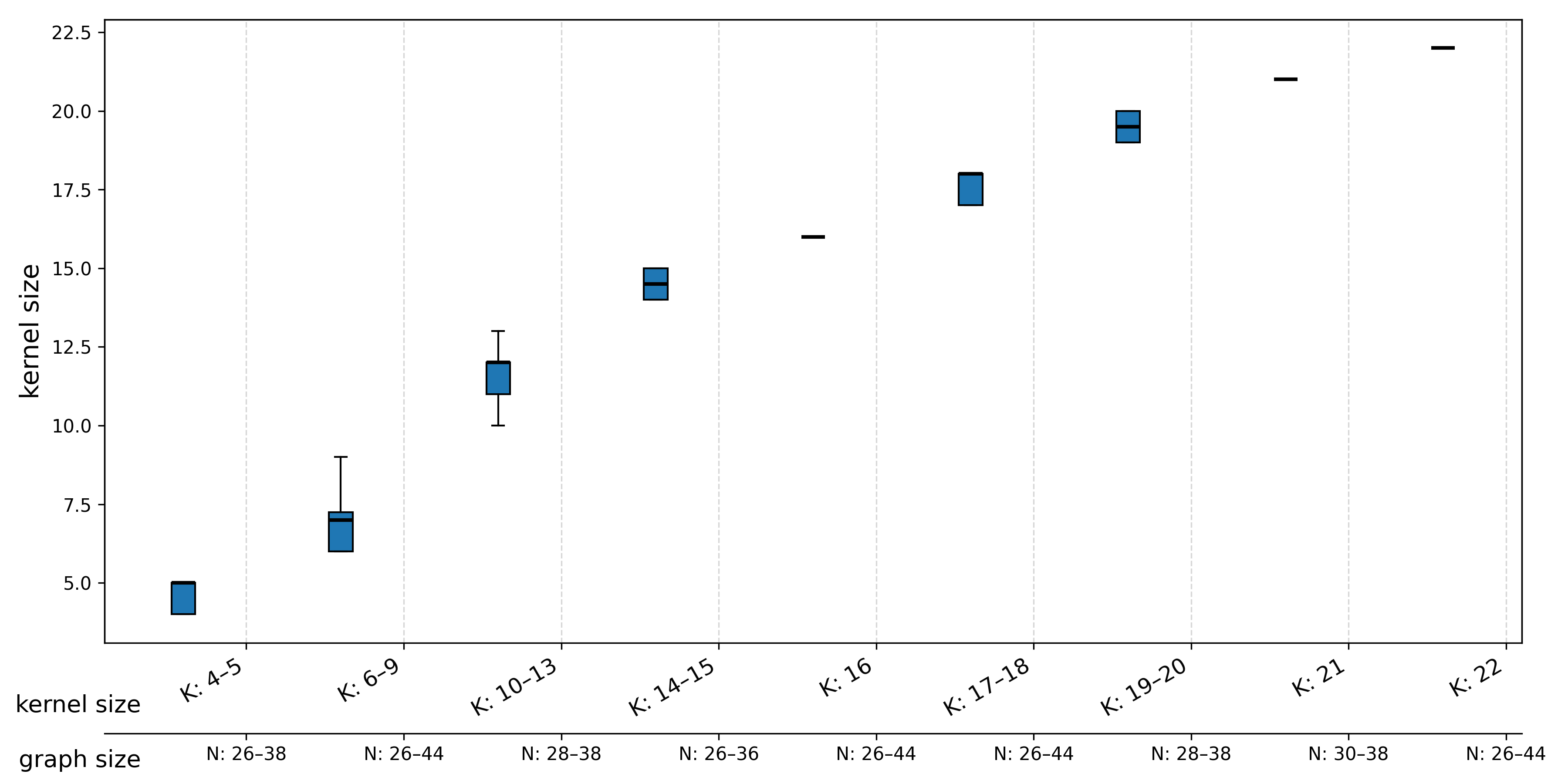}
    \includegraphics[width=0.7\linewidth]{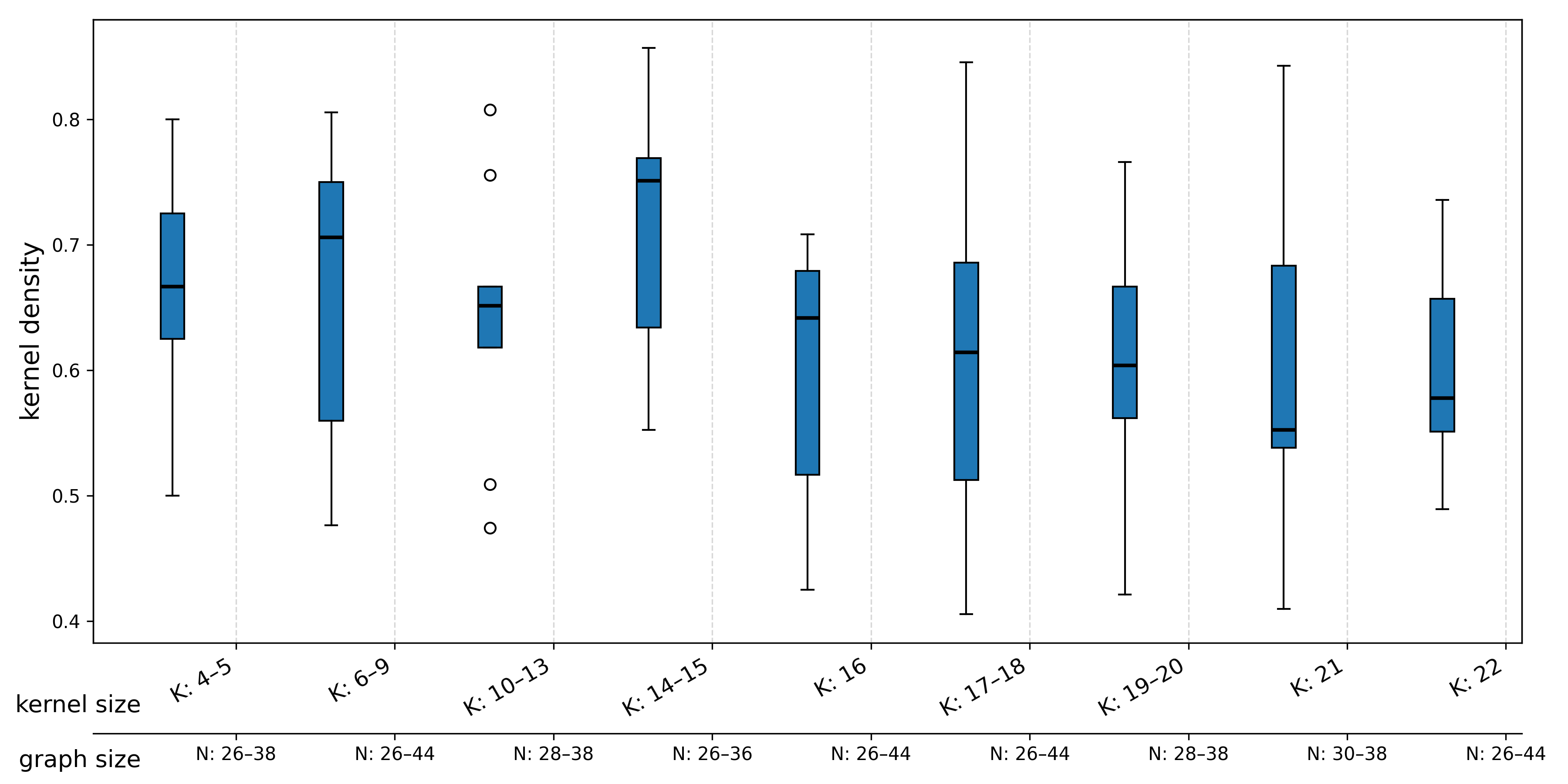}
    \caption{Characteristics of the (first) kernel graphs in the testbed. \textbf{Top:} The distribution of the (first) kernel size across the intervals in the testbed. \textbf{Lower:} the density distribution of the kernels. The boxplot of these distributions is shown as a function of the kernel-size interval, and the interval of input-graph sizes corresponding to each kernel-size bin is also indicated.}
    \label{fig:testbed_emulator}
\end{figure*}

\textbf{Scaling of TTS.} We compare the exponential scaling rates of the two solvers---standalone QAOA and qReduMIS---as a function of the kernel size~$K$ in Fig.~\ref{fig:tts_qredumis_vs_qaoa} in main text. We modeled the bootstrap-median of the optTTS for each problem instance as an exponential function of~$K$, the size of its (first) kernel graph. 

\begin{equation}
    \text{optTTS} \;\sim\; 2^{\,\beta_m\, K},
\end{equation}
or equivalently, in log-space,
\begin{equation}
    \log_2 \text{optTTS} \;=\; \alpha_m + \beta_m\, K + \varepsilon,
\end{equation}
where $\beta_m$ is the exponential scaling exponent, $\alpha_m$ is an
intercept capturing baseline overhead, and $\varepsilon$ represents
zero-mean residual noise. The parameters $\hat{\alpha}_m$ and
$\hat{\beta}_m$ were estimated via ordinary least-squares (OLS)
regression on all instances with finite, strictly positive
bootstrap-median TTS values. Note that for QAOA with $p=2$, some individual seeds yield infinite optTTS values, but the bootstrap-median is infinite only at $K=22$ (refer to Fig.~\ref{fig:heatmap_inf_optTTS}).

\begin{table}[t]
\centering
\caption{Ordinary least-squares fits of 
$\log_2 \widetilde{T} = \alpha + \beta\,K + \varepsilon$ for each
method and QAOA depth~$p$.}
\label{tab:individual-fits}
\begin{tabular}{llcccc}
\toprule
Method   & $p$ & $\hat{\beta} \pm \mathrm{SE}$ & $R^{2}$ 
         & $p$-value ($\hat{\beta}$) \\
\midrule
QAOA     & 2 & $0.5235 \pm 0.0328$ & 0.80 & ${<}\,10^{-10}$  \\
qReduMIS & 2 & $0.1642 \pm 0.0179 $ & 0.54 & ${<}\,10^{-10}$  \\
\midrule
QAOA     & 6 & $0.2894 \pm 0.0210$ & 0.73  & ${<}\,10^{-10}$ \\
qReduMIS & 6 & \multicolumn{4}{c}{\textit{No significant exponential fit}} \\
\bottomrule
\end{tabular}
\end{table}

Table~\ref{tab:individual-fits} summarizes the OLS regression results
for each method and QAOA depth combination. For $p=2$, both QAOA and qReduMIS exhibited statistically significant exponential scaling in the kernel size~$K$. However, the scaling exponents differed markedly. The two-sample $t$-test on the slope difference yielded
\begin{multline}
    \Delta\hat{\beta}
    = \hat{\beta}_{\text{QAOA}} - \hat{\beta}_{\text{qReduMIS}}
    = 0.359 \pm 0.037, \\
    t = 9.61,
    \qquad
    p < 10^{-10},
\end{multline}
decisively rejecting the null hypothesis of equal scaling exponents at any conventional significance level.

The qReduMIS exponent is $3.2{\times}$ smaller than that of
standalone QAOA. In practical terms, for every unit increase in kernel
size~$K$, the optTTS of QAOA grows by a factor of
$2^{0.524} \approx 1.44$, while that of qReduMIS grows by only
$2^{0.164} \approx 1.12$. This difference compounds multiplicatively:
at $K = 20$, the model predicts an optTTS ratio of
$2^{(0.524 - 0.164) \times 20} = 2^{7.20} \;\approx\; 147.$ The better scaling for qReduMIS is very much expected, given that a substantial part of the overall workload is done by classical reduction with \textit{poly} runtime that helps reduce the overall (exponential) runtime.

Although the $R^2$ values are modest, reflecting the high instance-to-instance variability inherent to NP-hard combinatorial problems, the statistically significant slope difference confirms that the scaling advantage of qReduMIS over standalone QAOA is robust and not an artifact of noise. Moreover, for a single value of kernel size, we can consider different graph instances with different densities (as shown in the lower panel in Fig.~\ref{fig:testbed_emulator}) and the density is another metric driving hardness. 

At circuit depth $p = 6$, QAOA still exhibited significant exponential
scaling, whereas qReduMIS showed \emph{no statistically significant exponential
growth} over the tested range $K \in [4, 22]$. This represents the strongest qualitative evidence
for the advantage of the qReduMIS approach: at sufficient circuit depth, the combination of graph reduction and QAOA eliminates detectable exponential scaling entirely within the experimentally accessible regime.

\begin{figure}
    \centering
    \includegraphics[width=\linewidth]{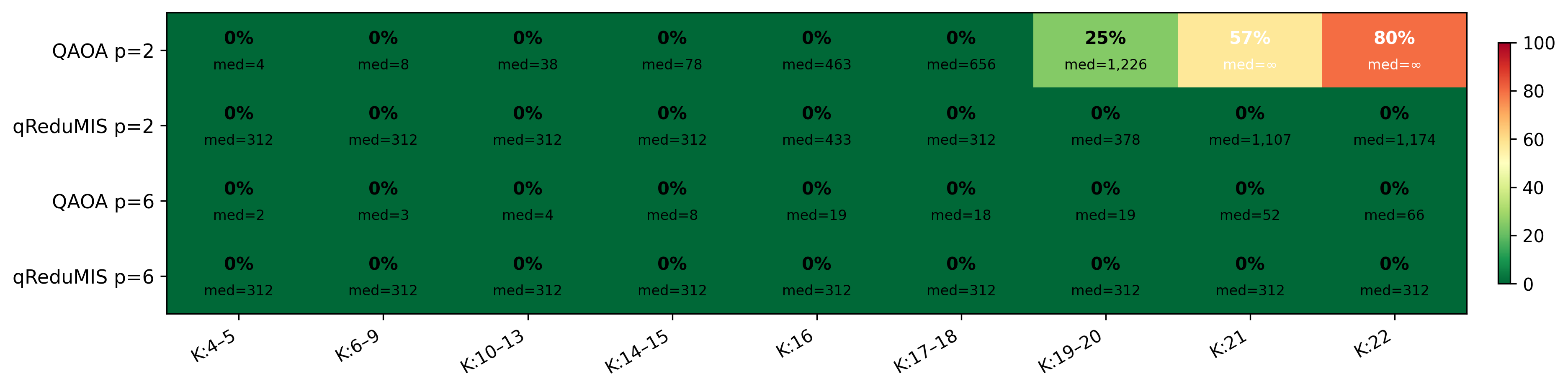}
    \caption{Heatmap showing the fraction of instances ($\%$) with $optTTS = \infty$ per interval of (first) kernel sizes for the different methods considered. qReduMIS refers to the qReduMIS powered by QAOA with the number of layers $p$ indicated. The median optTTS per interval is reported. Noticeably for QAOA with $p=2$ and kernels with 22 nodes, the median optTTS is $\infty$, as the fraction of trials whose success probability is zero is $80\%$.} 
    \label{fig:heatmap_inf_optTTS}
\end{figure}

\begin{figure*}[!t]
    \centering
    \includegraphics[width=\linewidth]{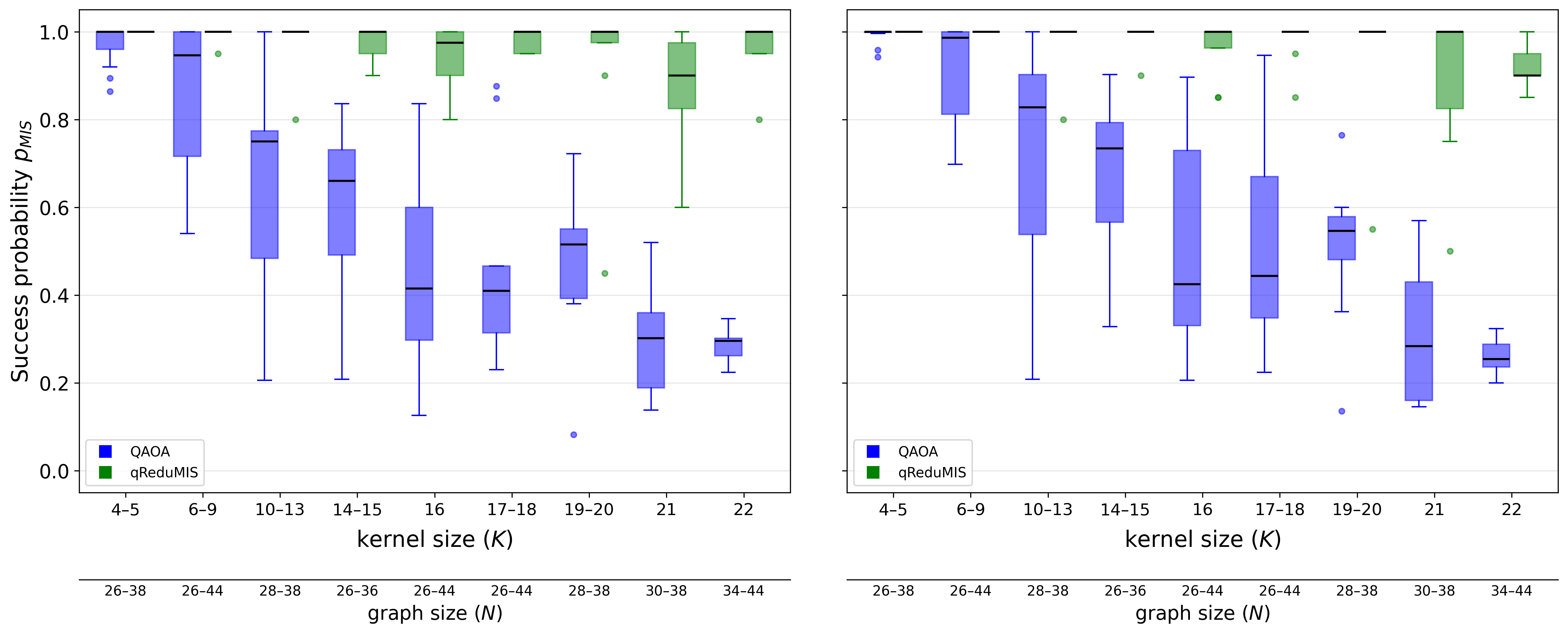}
    \caption{Success probability $P_{\text{MIS}}$ as a function of the first 
    kernel size $|\mathcal{K}|$ for standalone QAOA (blue) and qReduMIS (green) 
    on PSP-MIS instances from the Nikkei 225 index. \textbf{Left:} $p=2$ 
    layers; \textbf{Right:} $p=6$ layers. Box plots show the distribution 
    across problem instances within each kernel-size interval; the secondary 
    $x$-axis reports the corresponding input graph size $N$. Standalone QAOA 
    exhibits clear suppression of $P_{\text{MIS}}$ with increasing kernel size, 
    dropping to 0.078 at $|\mathcal{K}|=22$ for both $p=2$ and $p=6$. In 
    contrast, qReduMIS with $p=6$ achieves a much higher $P_{\text{MIS}}$ across all 
    problem sizes. QAOA uses 500 quantum shots; qReduMIS uses 20 classical 
    trials with 5 quantum shots per QPU call.}
    \label{fig:probability_mis_mkt_graphs}
\end{figure*}

\textbf{Performance with success probability.} For completeness, we
include the comparison of the end-to-end success probability
$P_{\text{MIS}}$ of standalone QAOA and qReduMIS powered by QAOA. The
success probability measures the fraction of independent trials that
return the optimal solution, without accounting for the computational
cost per trial. Since qReduMIS expends a larger budget of quantum shots
per trial than standalone QAOA, the optTTS metric reported in the main
text provides a more balanced comparison by incorporating this cost.
Nevertheless, the success probability remains an informative figure of
merit in its own right: it directly quantifies algorithmic reliability,
and its practical relevance may increase as hardware evolves to support
parallel circuit execution---in which case the wall-clock cost per trial
would be governed by the number of sequential QPU calls (which remains
consistently low for qReduMIS) rather than the total shot count.
Additionally, the quantum-shot schedule is a tunable hyperparameter, and
future adaptive strategies may further reduce the per-trial budget,
making the success probability an increasingly representative indicator
of end-to-end performance.

The success probability is shown in Fig.~\ref{fig:probability_mis_mkt_graphs} as a function of intervals of (first) kernel size $|\mathcal{K}_{1}|$ (hereafter denoted simply $\mathcal{K}$). We aggregate results into intervals of $|\mathcal{K}|$, ensuring a comparable number of samples (between six and ten seeds) in each interval. The exact number of seeds per interval is provided in Table~\ref{tab:num_seeds_mkt_emulator}. The x-axis also indicates the interval of input-graph sizes $N$ corresponding to each interval of (first) kernel size. Note that the box plots already capture the instance-to-instance variability across the 73 graphs within each kernel-size bin, making per-instance bootstrap confidence intervals redundant for this aggregate view. As expected, increasing the number of QAOA layers improves the performance of both standalone QAOA and qReduMIS. However, standalone QAOA exhibits a clear suppression in success probability as the kernel size grows, dropping to $0.078$ at a kernel size of $22$ for both $p=2$ and $p=6$. qReduMIS shows a milder decline for $p=2$, while for $p=6$, it achieves a success probability of exactly $1$ across all problem sizes, causing the box plots to collapse into flat bars. The dispersion and outliers within each box arise because each box aggregates instances spanning a range of kernel sizes and densities, both of which drive the hardness of the MIS problem. The distributions of these quantities are shown in Fig.~\ref{fig:testbed_emulator}, where notably, the kernel density exhibits considerable variation within the intervals.

\textbf{QPU calls.} To estimate the optTTS we are interested in the actual number of QPU calls to reach the optimal solution. We refer to it as \textit{QPU calls to solution} or $D^*$ in the main text. In most cases, the optimal solution was found in fewer iterations 
than the total number of QPU calls performed. This is because the intrinsic stopping criterion of qReduMIS is running out of nodes as nodes are removed in each iteration and it may be that the optimal solution is either measured in the first QPU call or constructed recursively through the previous iterations. The distributions in Fig.~\ref{fig:qpu_to_solution_distribution_emulator} correspond to the average QPU calls to solution performed per problem instance corresponding to the (first) kernel size. As expected, when increasing the number of layers in QAOA from two to six, the QPU calls are reduced and concentrated around one. The difference becomes significant for the larger problem sizes considered. Although for the larger sizes the QAOA with $p=2$ does not sample the optimal solution within the $\mathrm{qshots}=5$ measurements, as the success probability of standalone QAOA is very low, and zero in some cases, incurring another QPU call over the updated kernel suffices to find the optimal solution. 
For both cases, given the problem instances considered, at most two QPU calls suffice to find the optimal solution with high probability, and this is even in the cases that standalone QAOA over the kernel exhibits low success probability. 

\begin{figure*}
    \centering
    \includegraphics[width=\linewidth]{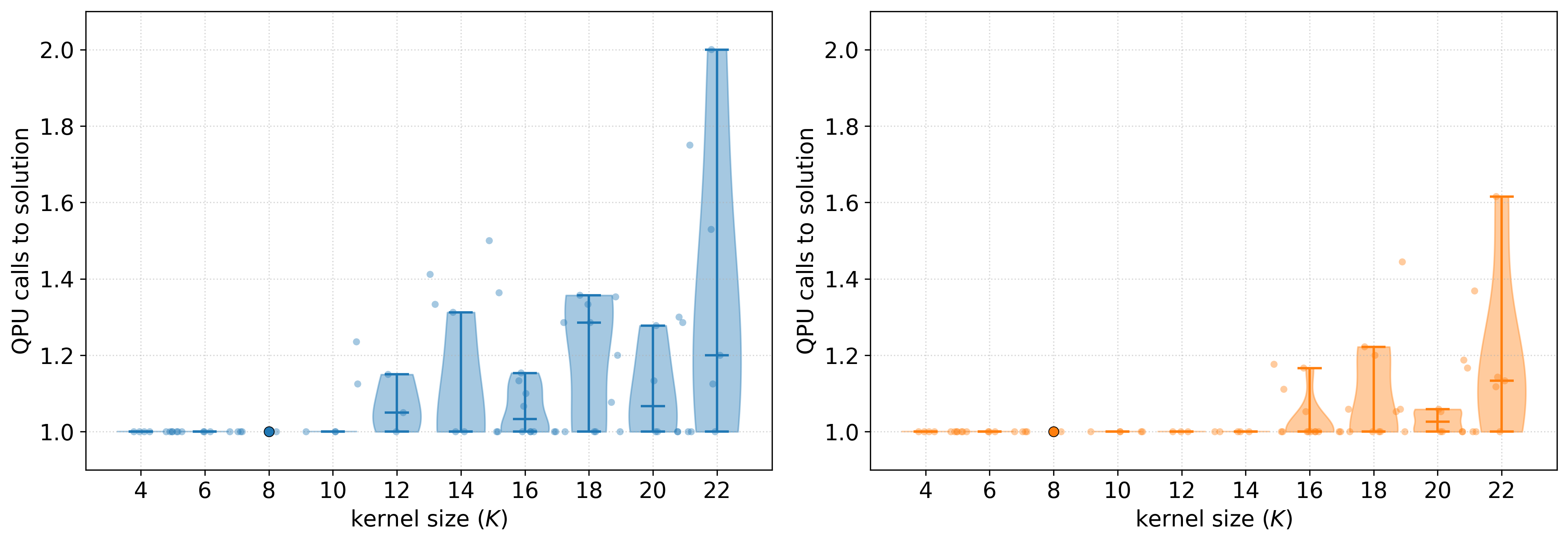}
    \caption{Distribution of QPU calls to solution ($D^*$) performed by qReduMIS as a function of the kernel size $K$, for $p=2$ (left) and $p=6$ (right). The distribution is over the average QPU calls to solution per problem instance. qReduMIS is executed for $\text{cshot}=20$ trials, and the average number of QPU calls to solution is reported and shown for each kernel size.}
    \label{fig:qpu_to_solution_distribution_emulator}
\end{figure*}

\textbf{QPU plays a crucial role in qReduMIS.} We investigate how the number of quantum shots ($\mathrm{qshots}$) used in each QPU call of qReduMIS impacts performance. For this, we consider the performance on one Nikkei 225 market-graph instance of size $30$ for different values of $\mathrm{qshots}$ using the H2-1 emulator. The figures of merit are displayed in Fig.~\ref{fig:quantum_shots_qredumis}. Note that the success probability $P_{\text{MIS}}$ is the end-to-end algorithm success probability (Eq.~\ref{eq:prob_qredumis}), given by whether the algorithm finds the optimal solution of the given graph or not. As expected, it increases with the number of $\mathrm{qshots}$ (left panel), as well as the average AR (right panel). Despite this improvement in $p_{\mathrm{MIS}}$, the optTTS also grows with $\mathrm{qshots}$ (centre panel). The empirical curve lies below the linear reference $\mathrm{optTTS}\propto\mathrm{qshots}$, indicating that its growth is sublinear.

Thus, the minimum optTTS is attained at the smallest value of $\mathrm{qshots}$ for which the success probability remains non-negligible. For reference, the success probability of standalone QAOA over the kernel graph is $0.4\%$. However, with qReduMIS, by taking only five quantum shots (in each QPU call) the end-to-end success probability is boosted to $0.35$. Note that the probability of sampling the optimal solution from the independent set measurements of QAOA with only five samples is $\sim 2\%$. Nevertheless, as already discussed, qReduMIS does not require to sample the optimal solution in the QPU call to output an optimal solution, as we see in this example. While in practice qReduMIS performed more than one QPU call, it was observed that the optimal solution was found with only one QPU call followed by a classical reduction. 

\begin{figure*}
    \centering
    \includegraphics[width=1.0\linewidth]{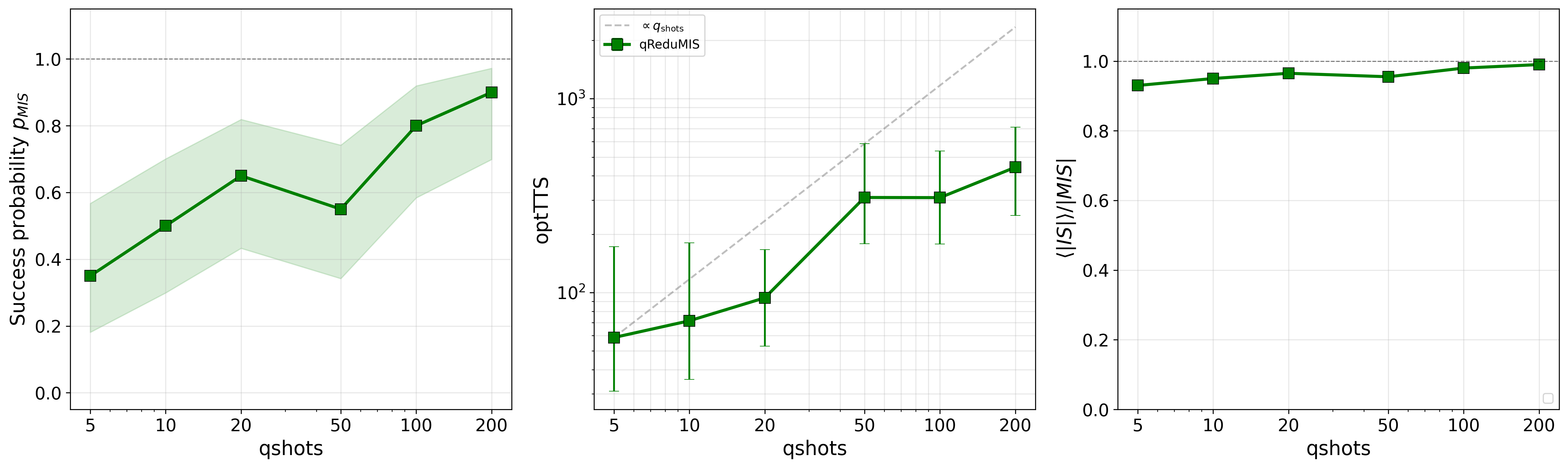}
    \caption{%
  Effect of the number of quantum shots ($\mathrm{qshots}$) on qReduMIS performance
  for a Nikkei 225 market-graph instance ($N=30$, QAOA depth $p=2$,
  hardness $\mathbb{H}=3.375$).
  \textbf{Left:}~Success probability $P_{\mathrm{MIS}}$ (fraction of classical shots
  finding the ground-truth MIS), with 95\% Wilson confidence intervals (shaded). The horizontal dashed line at $1$ indicates the best possible value. 
  \textbf{Centre:}~Optimal time-to-solution
  $\mathrm{optTTS}$; the dashed line shows the linear reference
  $\mathrm{optTTS}\propto \mathrm{qshots}$. The experimental curve is below this reference line indicating that the increase in optTTS is slower than linear. 
  \textbf{Right:}~Average approximation ratio
  $\langle |\text{IS}| \rangle / |\text{MIS}|$, showing that solution quality is significantly high across the different values of $\mathrm{qshots}$. The horizontal dashed line at $1$ indicates the best possible value. 
}
    \label{fig:quantum_shots_qredumis}
\end{figure*}

\begin{figure}
    \centering
    \includegraphics[width=\linewidth]{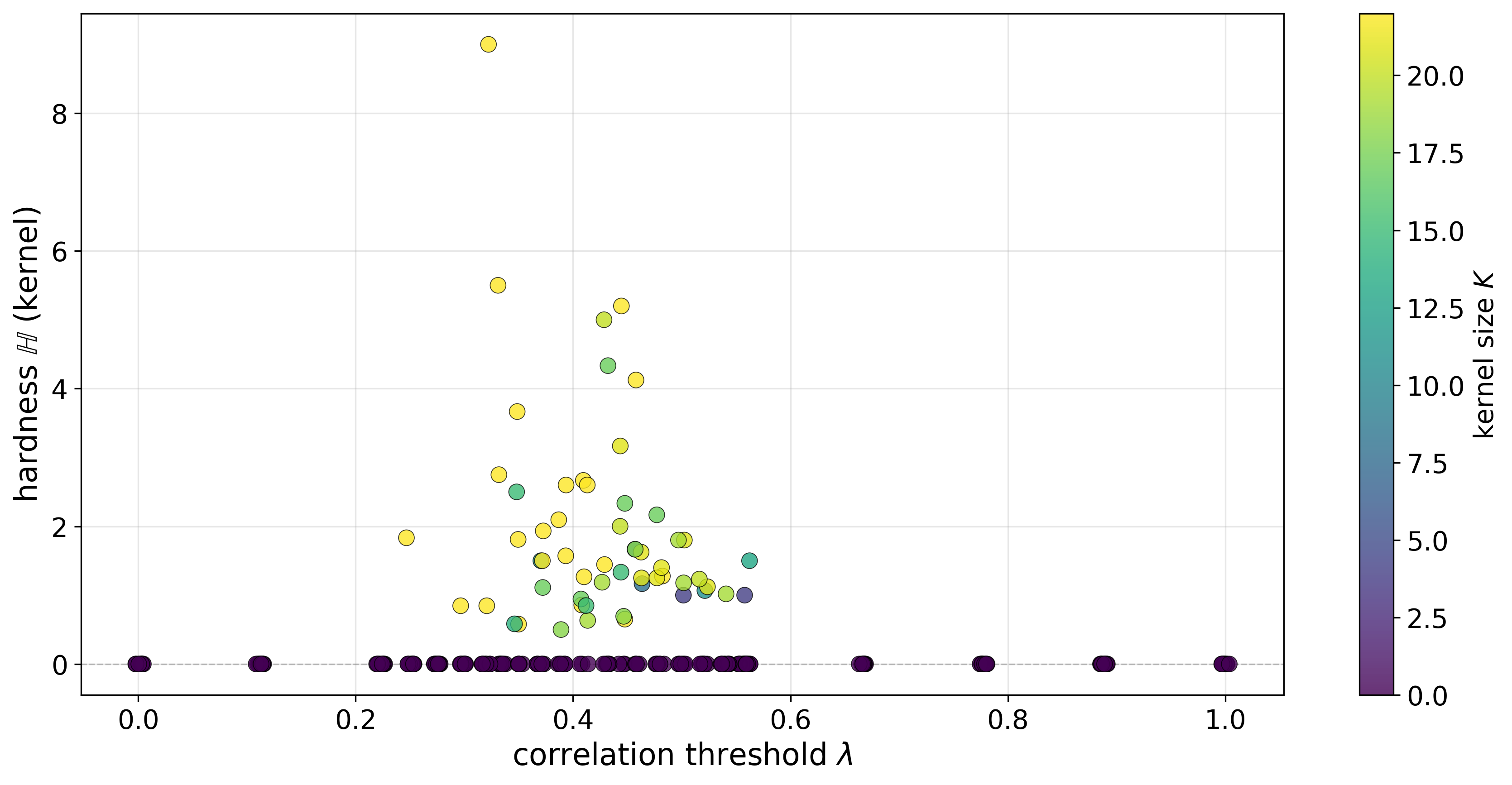}
    \caption{Hardness of the (first) kernel graph as a function of the correlation threshold $\lambda$ utilized to construct the given market graph. The heatmap refers to the size of the (first) kernel obtained after the first reduction over the given market graph.}
    \label{fig:hardness_vs_threshold}
\end{figure}

\textbf{Performance with respect to the conductance-like hardness.} It has been shown for the Quantum Annealing Algorithm (QAA) \cite{ebadi:22, schuetz2025qredumis} and Simulated Annealing (SA) \cite{ebadi:22, schuetz2025qredumis, andrist:23} that algorithmic performance is exponentially suppressed in a conductance-like hardness parameter $\HP$, defined as $\HP = D_{|\mathrm{MIS}|-1} / (|\mathrm{MIS}| \cdot D_{|\mathrm{MIS}|})$, where $D_{\alpha}$ denotes the degeneracy of independent sets of size $\alpha$. In our framework for constructing market graphs, for a given problem size (number of assets), this hardness parameter is directly governed by the correlation threshold $\lambda$ used to convert the weighted graph into an unweighted one, which in turn determines the graph density. Importantly, the density of the kernel graphs--the graphs actually input to the QPU--depends on the classical reduction, whose effectiveness is not, in principle, governed by the notion of hardness.

We evaluate the performance of qReduMIS powered by QAOA as a function of the hardness parameter $\HP$ and benchmark it against standalone QAOA. To construct graphs of varying hardness, we fix the number of nodes (assets) and adjust the threshold (or correlation sensitivity) $\lambda$ to generate different graph densities. Note that the instances considered in Fig.~\ref{fig:tts_qredumis_vs_qaoa} exhibit similar hardness values. This similarity arises because we set $\lambda = \frac{1}{N^2}\sum_{i,j} |C_{ij}|$, where $C$ is the correlation matrix for the set of $N$ assets randomly selected from the Nikkei 225 index. Although different assets are considered in each instance, the ratio of $\lambda$ to the correlations within each asset universe remains nearly constant. Furthermore, the range of problem sizes considered is relatively narrow. As a result, the hardness across instances does not change significantly: for the majority of input graphs, $\HP$ remains below two, with only seven instances exceeding this value. The same behavior is observed for the kernel graphs.

We fix the number of assets to 22, as the largest size we solve with the Quantinuum's H2-1 emulator, and we randomly select from the Nikkei 225 index to construct 10 instances of 22 assets. We sampled
$25$ correlation sensitivity $\lambda$ values in total: ten uniformly spaced values
$\lambda \in [0.001, 1.0]$ spanning the full range, plus fifteen additional values in the range $\lambda \in [0.25, 0.56]$, concentrated in the intermediate regime where the classical reduction yields a non-trivial kernel ($K>0$). Thus, we consider $250$ instances in total. Of these, $57$ have a non-empty kernel after
classical reduction and a well-defined hardness parameter $\mathbb{H}>0$; these
are the instances reported in Fig.~\ref{fig:metrics_hardness}. We compare the performance of qReduMIS powered by QAOA with $p=6$ against running standalone QAOA with same number of layers, both acting over the (first) kernel graph. In qReduMIS we set up the hyper-parameter of quantum shots in each QPU call to $\mathrm{qshots}=10$ and for the experiments with QAOA we utilized 500 quantum shots to estimate the figures of merit. 

Fig.~\ref{fig:metrics_hardness} presents the three performance metrics: success probability $P_{\mathrm{MIS}}$, average approximation ratio $\langle|IS|\rangle / |\text{MIS}|$, and optTTS, as a function of the hardness parameter $\HP$. The figure reveals a clear degradation in performance with
increasing hardness for both solvers, yet qReduMIS consistently outperforms
standalone QAOA across success probability and average approximation ratio. In the low-hardness regime
($\mathbb{H}\lesssim 1$), both methods achieve high success probability
and average approximation ratio, reflecting the low hardness of these instances. However, while standalone QAOA achieves average approximation ratio values below $1$ but above $0.8$ in the low-hardness regime, qReduMIS attains exactly unit value for these instances. As $\HP$ grows beyond $1$, success probabilities and average approximation ratios decline for both approaches, but qReduMIS degrades slowly with hardness, maintaining both metrics appreciably above the corresponding standalone QAOA values. Regarding optTTS, standalone QAOA's optTTS rises steeply with hardness, whereas qReduMIS exhibits only a modest deviation from its otherwise near-constant optTTS. As discussed previously, qReduMIS with $p=6$ achieves a steady optTTS for the majority of instances, as they are solved with a single QPU call. It is only for the harder instances that additional QPU calls become necessary, leading to an increase in optTTS. Nevertheless, within the regime of hardness and problem sizes considered here, the number of such harder instances remains insufficient to extract a robust scaling exponent.

Note that there is a small region of values of hardness parameter for which we do not have results. This is because we could not generate instances whose (first) kernels are characterized by those values of hardness. As discussed, for a fixed problem size, the spectrum of hardness parameter values is limited, and we have not seen a clear relation between the correlation threshold $\lambda$ and hardness nor with the graph density. However, an easy--hard--easy transition of hardness is observed as a function of the density, which is also represented as a function of the threshold in Fig.~\ref{fig:hardness_vs_threshold}. We observe fully reducible instances for both low and large correlation thresholds, making the graphs either fully dense or very sparse. For values of threshold in $[0.2, 0.6]$, we observe non-fully reducible instances corresponding to different sizes of kernels but no clear monotonic relation.

\begin{figure*}
    \centering
    \includegraphics[width=\linewidth]{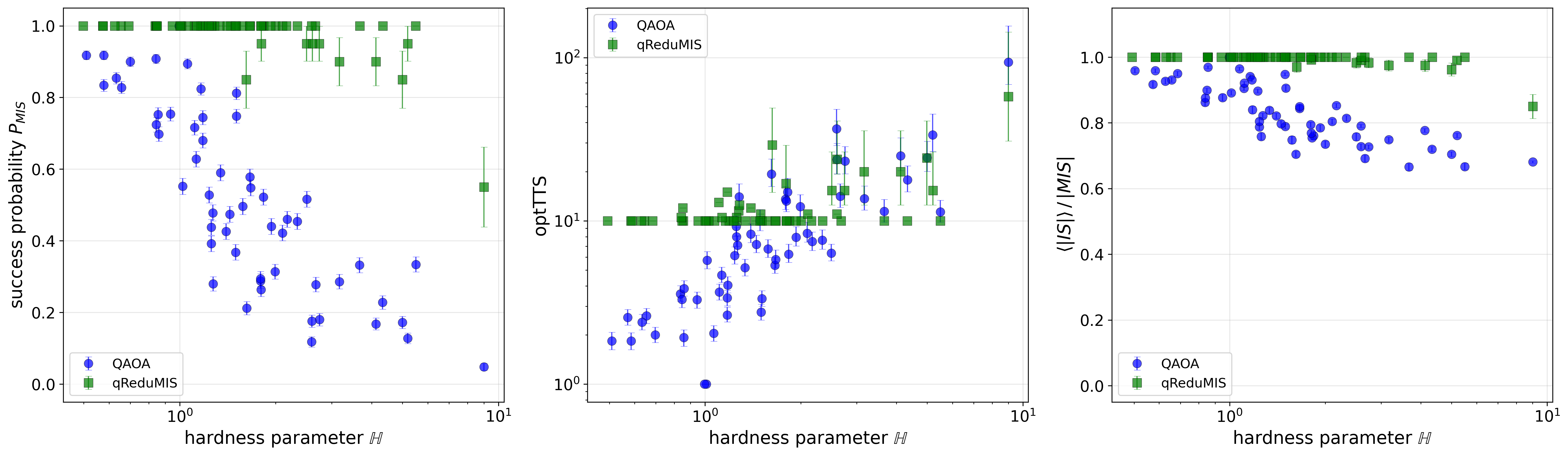}
    \caption{Performance metrics as a function of the hardness parameter
$\mathbb{H}$ for $N=22$ Nikkei 225
market graphs with non-trivial kernel ($K>0$). Each point represents one
problem instance ($57$ per method); error bars indicate $95\%$ bootstrap confidence intervals obtained from
$1000$ binomial resamples over the $500$ QAOA shots and $20$ qReduMIS classical shots,
respectively.
\textbf{(Left)}~Success probability $P_{\mathrm{MIS}}$.
\textbf{(Middle)}~Optimal time-to-solution
$\mathrm{optTTS}$
(log scale).
\textbf{(Right)}~Average approximation ratio
$\langle|IS|\rangle/|\text{MIS}|$, measuring the expected independent-set size
normalised by the MIS size.
Instances are generated from $25$ correlation thresholds ($10$ seeds each);
higher $\mathbb{H}$ indicates a harder landscape.}
    \label{fig:metrics_hardness}
\end{figure*}

\textbf{Benchmark against a classical solver.}
To contextualize the performance of qReduMIS, we benchmark against
simulated annealing (SA), a well-established classical metaheuristic
for combinatorial optimization. We use an optimized SA implementation
following Ref.~\cite{andrist:23}. For a fair comparison with QAOA and
qReduMIS powered by QAOA at circuit depth~$p$, the number of SA steps
is set to match~$p$; 500 replicas are performed with a linear
temperature schedule, and the same bootstrap methodology used for
qReduMIS and QAOA is applied.

Fig.~\ref{fig:fig_with_sa} reports the SA results on the same Nikkei
market-graph instances discussed in the main text
(Fig.~\ref{fig:tts_qredumis_vs_qaoa}). The $y$-axis shows, on a
logarithmic scale, the bootstrap median of the optimal
time-to-solution (optTTS), estimated from 500 independent SA runs per
instance. The majority of data points cluster at an optTTS floor of
$\approx 2/3$, which is the minimum value the estimator can report
when SA succeeds on all 500 runs. It corresponds to
evaluating Eq.~\ref{eq:floor_opttts} at $M=500$, which yields
$2\ln 10 / \ln(10^{3}) = 2/3$. A minority of instances exhibit a
success probability below~1, indicating that SA fails on some
replicas. Although these tend to be the larger instances, some
smaller instances display the same behaviour, because problem size
alone does not determine hardness: graph density is also a driver of difficulty for SA and, more broadly, for
Markov-chain Monte Carlo solvers. As expected, increasing the number
of SA steps improves performance and causes the optTTS distribution to
concentrate further around the $2/3$ floor.

Despite the few instances that depart from this floor, the fraction
of non-trivial optTTS values is too small to support a meaningful
scaling analysis or a fair head-to-head comparison with qReduMIS at
the current instance sizes. We expect that benchmarking on larger
instances will yield enough variation to enable such an analysis. We
note that the qReduMIS results presented here are limited to kernel
sizes of up to 22 nodes, owing to the computational cost of
classically simulating quantum circuits at larger scales.

\begin{figure*}
    \centering
    \includegraphics[width=\linewidth]{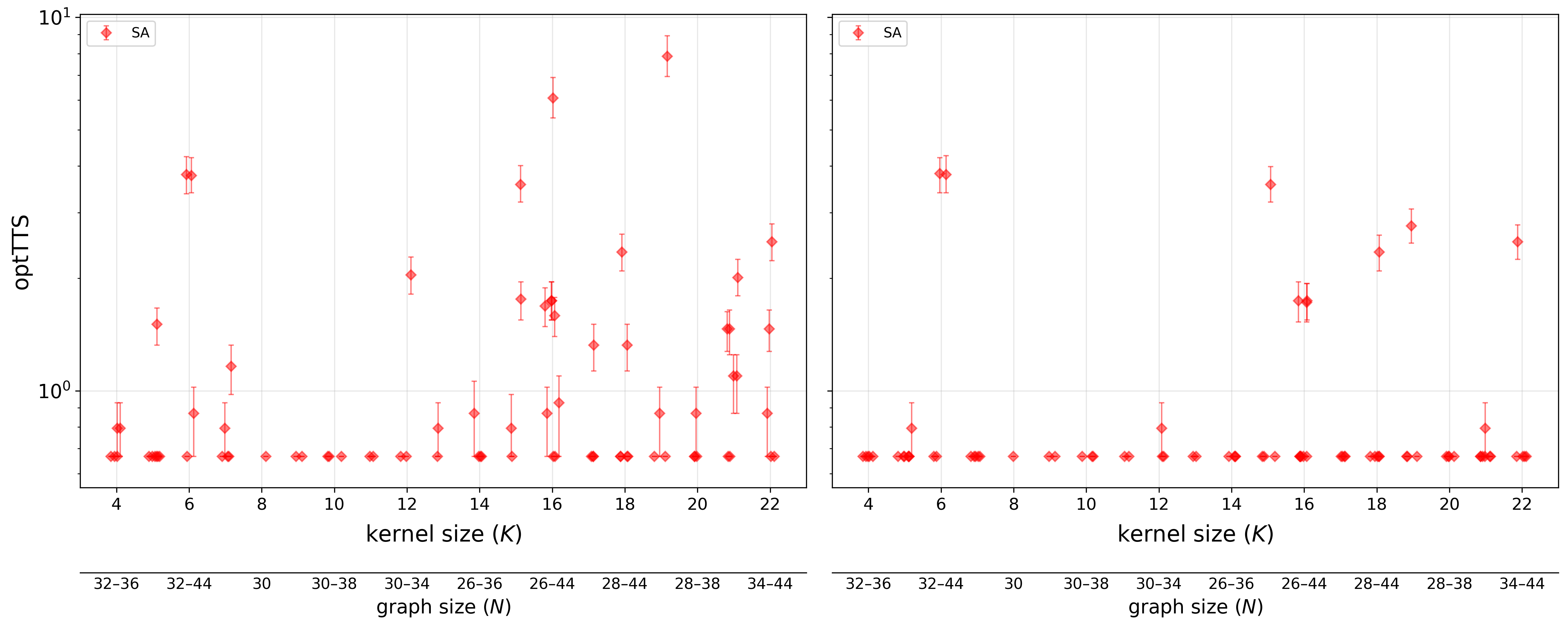}
    \caption{Optimal time-to-solution (optTTS) as a function of the first 
    kernel size $|\mathcal{K}|$ for simulated annealing on PSP-MIS instances from the Nikkei 225 index. 
    \textbf{Left:} $\text{steps}=2$; \textbf{Right:} $\text{steps}=6$. Error bars denote 95\% 
    bootstrap confidence intervals on the median optTTS per problem instance. 
    SA achieves an optTTS concentrated around $1$, meaning that after running for only one replica, the optimal solution is found. There are some instances, as the (first) kernels become larger whose optTTS is significantly increased. However, this does not suffice to perform a scaling analysis of the optTTS with kernel size.}
    \label{fig:fig_with_sa}
\end{figure*}

\subsection{Quantum-informed MIS of 3-regular graphs \label{appendix:3regular}}

We also benchmark qReduMIS (powered by QAOA) on \textit{unstructured} instances. Specifically, we consider random 3-regular graphs, which have been extensively studied in the QAOA literature \cite{farhi2025lower, wurtz:21, harrigan2021quantum}. For MIS in particular, there are recent theoretical results characterizing QAOA performance on large-girth 3-regular graphs \cite{farhi2025lower}.

The reduction strategy used in qReduMIS is (clique-based) isolated-vertex removal. While this reduction is effective on \textit{structured (real-world)} instances, synthetic random $d$-regular graphs are typically \textit{unstructured} and therefore largely immune to this specific reduction. However, within the qReduMIS framework the algorithm proceeds recursively and in the quantum component when using the in-set criterion, frozen nodes are identified from QPU measurements and removed (together with their neighbors) and the resulting kernel graphs do not necessarily remain $d$-regular. This can create opportunities for the reduction rules to take effect in later recursion levels. For this reason, random 3-regular graphs provide a useful testbed to evaluate the generality and robustness of the proposed qReduMIS approach.

\begin{figure*}[!t]
    \centering
    \begin{minipage}[t]{0.49\textwidth}
        \centering
        \includegraphics[width=\linewidth]{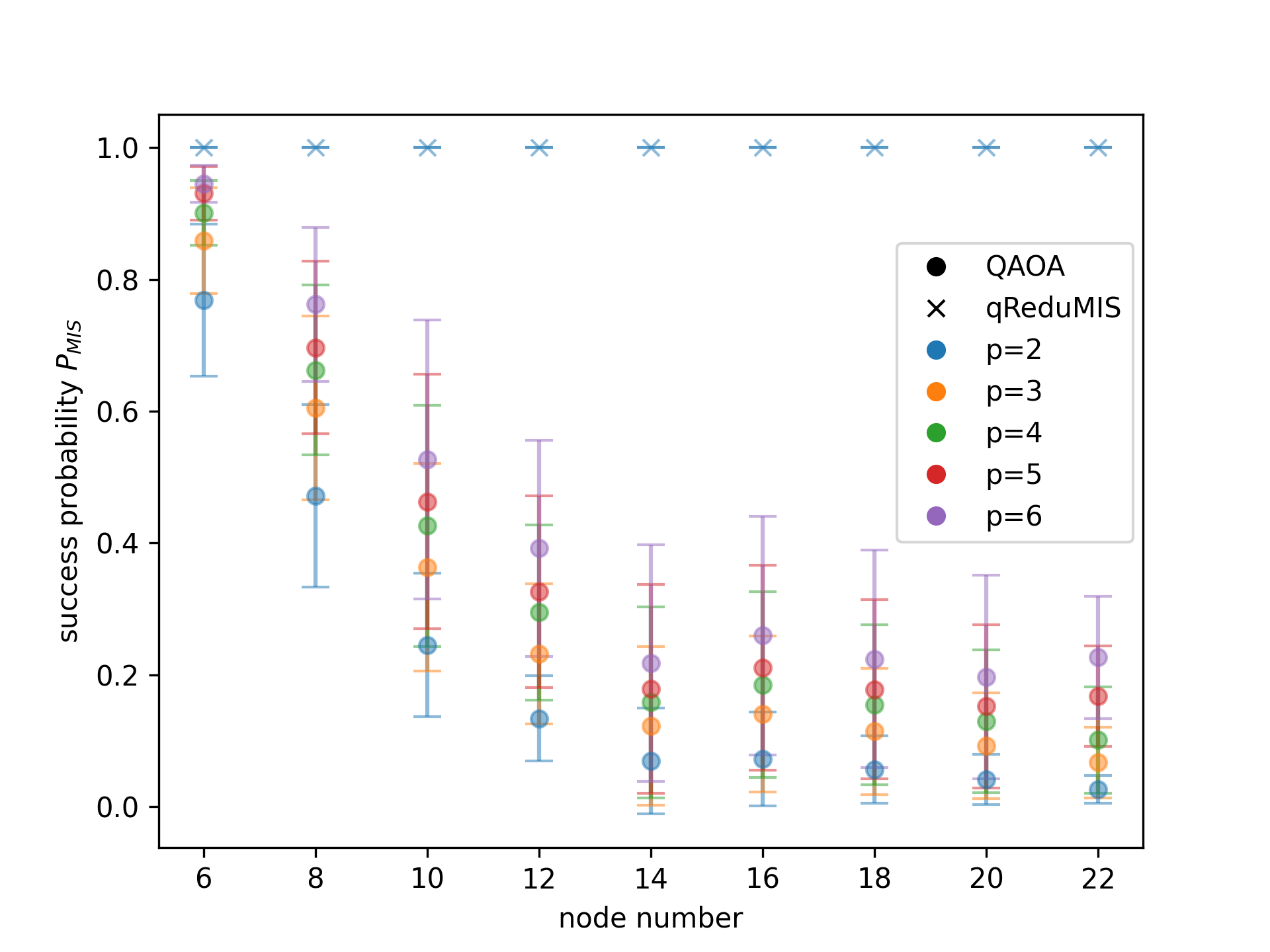}
    \end{minipage}\hfill
    \begin{minipage}[t]{0.49\textwidth}
        \centering
        \includegraphics[width=\linewidth]{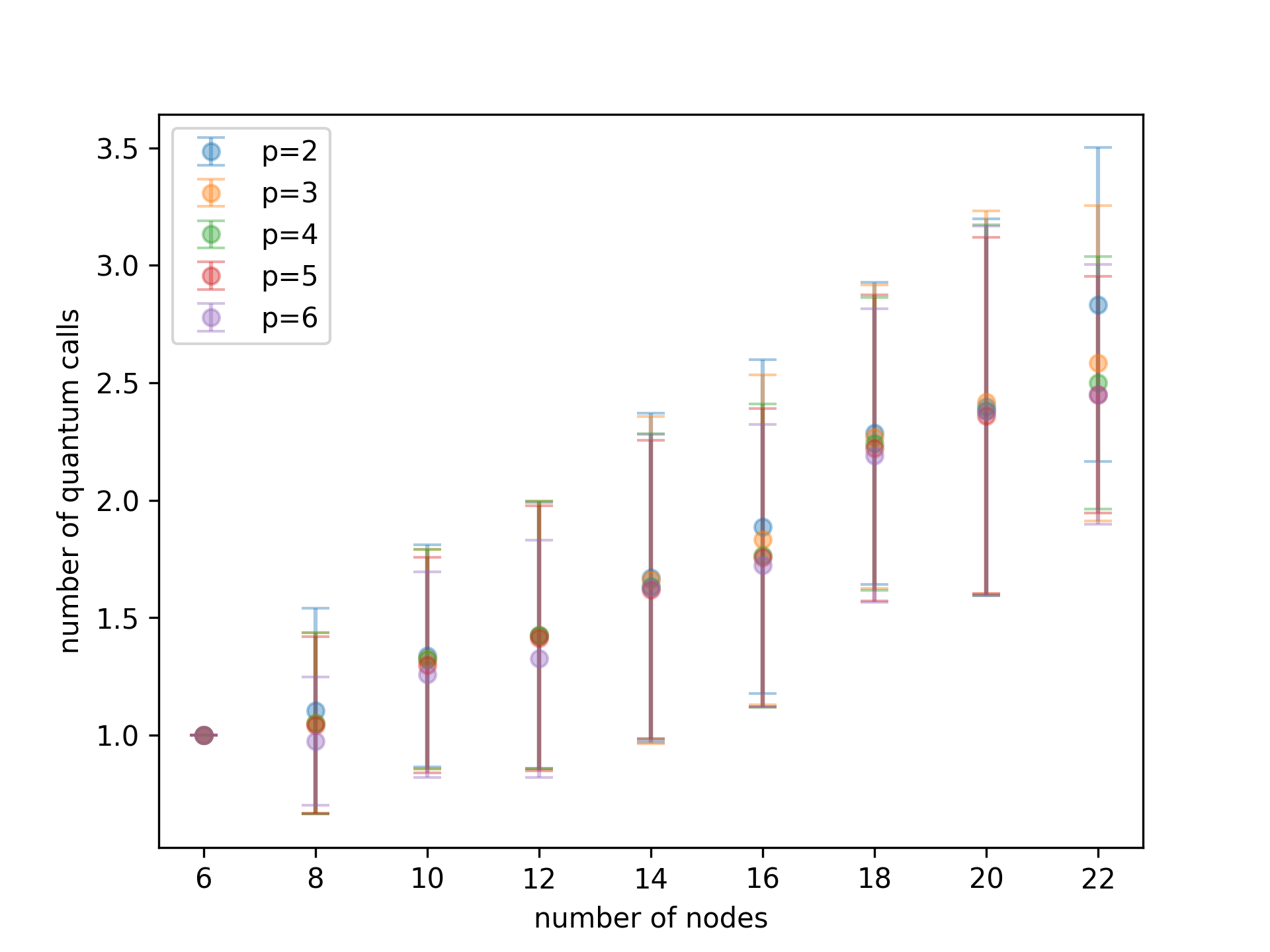}
    \end{minipage}
    \caption{Performance of QAOA and qReduMIS powered by QAOA tackling random 3-regular graphs of different sizes. QAOA, in both cases, is run with different numbers of layers $p$ indicated by color. \textbf{Left:} success probability as a function of the number of nodes; for qReduMIS, $P_{\text{MIS}}=1$ for every $p$ considered, causing the markers to overlap. \textbf{Right:} number of quantum calls of qReduMIS as a function of the number of nodes. Error bars correspond to the standard deviation; for all instances at least one quantum call is performed, so error bars for values below $1$ have no physical meaning.}
    \label{fig:3regular_benchmark_simulator}
\end{figure*}

For different problem sizes, we generate 20 random 3-regular instances (different seeds). We then benchmark standalone QAOA against qReduMIS powered by QAOA for several circuit depths (layers) $p$ using the noiseless statevector simulator. Fig.~\ref{fig:3regular_benchmark_simulator} (top) shows the success probability versus the number of nodes for different QAOA depths $p$ (colors), both for standalone QAOA and when QAOA is used inside qReduMIS. We observe that the success probability of standalone QAOA decays approximately exponentially with problem size. Increasing $p$ mitigates this decay and improves performance, but the overall exponential trend remains. In contrast, qReduMIS powered by QAOA achieves a success probability of $1$ across all tested instances and all considered values of $p$, as indicated by the single set of markers. Notably, even at small size ($N=6$), the best standalone QAOA success probability (across seeds) reaches $0.96$ at $p=6$, whereas qReduMIS attains perfect success already with a substantially smaller depth (e.g., $p=2$). Fig.~\ref{fig:3regular_benchmark_simulator} (bottom) reports the number of quantum calls made by qReduMIS. This quantity remains bounded and increases with problem size, consistent with the added recursion required to progressively identify and eliminate frozen nodes. Noticeably, for all the problem instances, the number of quantum calls is larger than or equal to one. This is because the first reduction does not do anything to the 3-regular graph (i.e., reduction factor is zero) but after the first quantum call, with the in-set criterion, one node is selected and is removed together with its neighbors. This allows the possibility of producing an exposed corner node, thus unblocking the next classical reduction. 

\begin{figure*}[!t]
    \centering
    \includegraphics[width=0.8\linewidth]{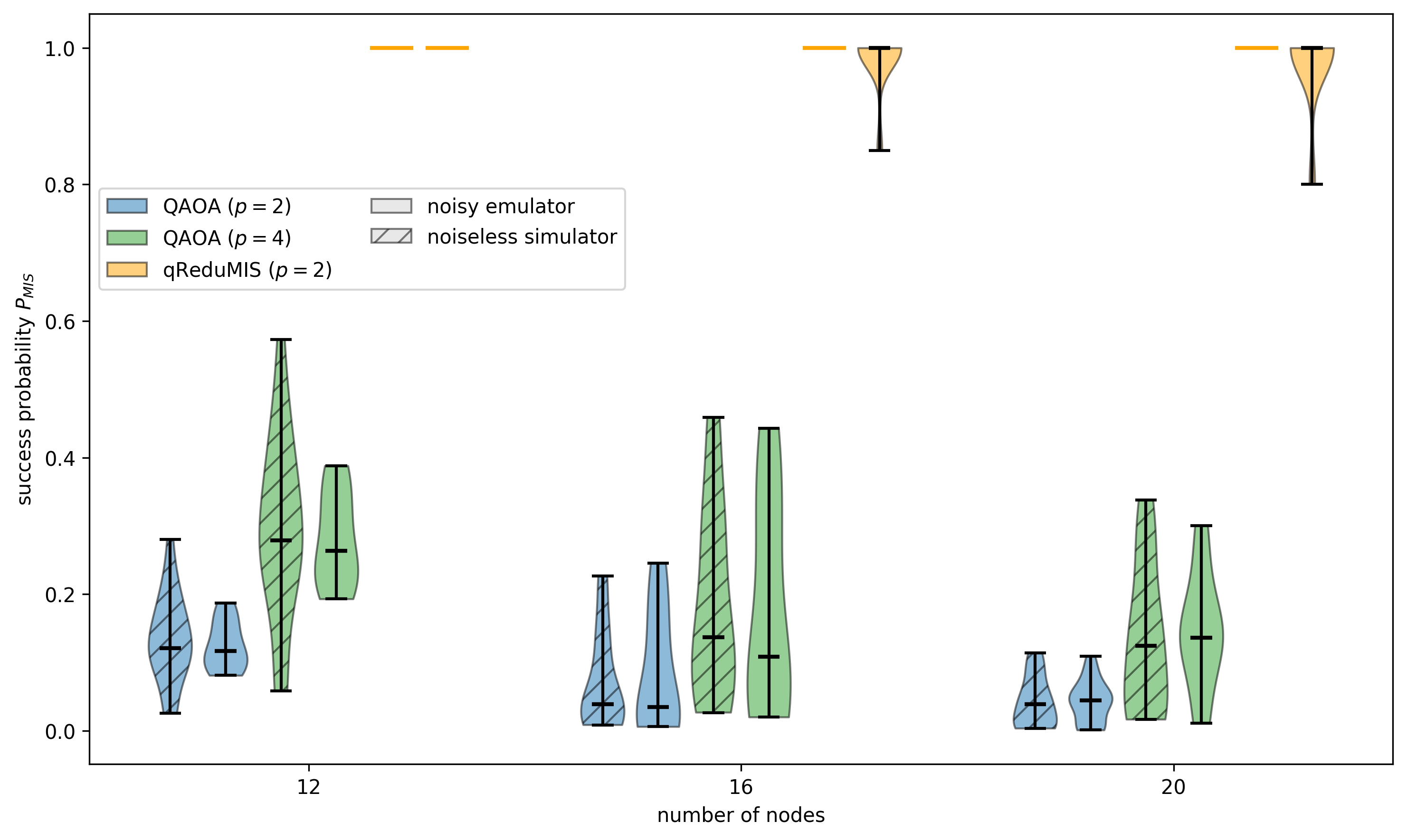}
    \caption{Performance of QAOA and qReduMIS powered by QAOA on random 3-regular graphs of different sizes. Violin plots show the distribution of $P_{\text{MIS}}$ for QAOA with layer number $p=2$ and $p=4$ (in blue and green) and for qReduMIS powered by QAOA with $p=2$ (orange), under both noiseless (solid fill) and noisy (hatched fill) conditions. Interestingly, qReduMIS achieves $P_{\text{MIS}}=1$ without dispersion in noiseless simulation. QAOA noisy results correspond to $1000$ quantum shots, while QAOA within qReduMIS corresponds to $250$ per iteration.}
    \label{fig:emulator_3regular}
\end{figure*}

\begin{figure}[h!]
    \centering
    \includegraphics[width=\linewidth]{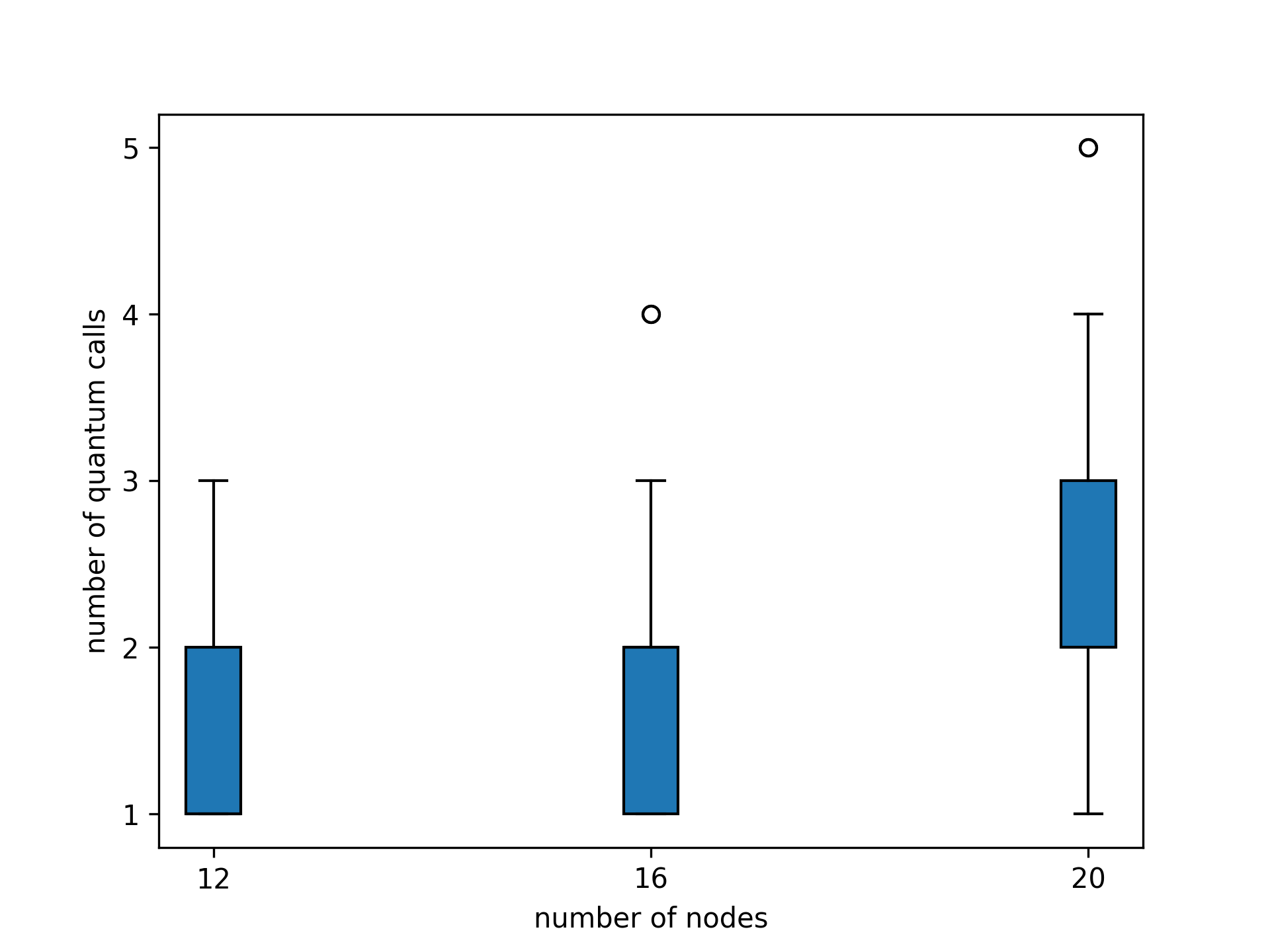}
    \caption{The number of quantum calls of qReduMIS powered by QAOA with $p=2$ on Quantinuum's noisy emulator as a function of the problem size $N$. Each QPU call performed 250 quantum shots. }
    \label{fig:calls_3regular}
\end{figure}

We perform these experiments on the (noisy) Quantinuum H2-1 emulator for selected problem sizes and report the success probabilities in Fig.~\ref{fig:emulator_3regular}, alongside noiseless simulation results for comparison. Standalone QAOA is run with 1000 shots for both $p=2$ and $p=4$. In contrast, qReduMIS is powered by QAOA at depth $p=2$, using 250 shots per iteration. When noise is included, qReduMIS performance can degrade, as reflected by reduced success probabilities for some instances at $N=16$ and $N=20$ (see the violin plots). Nevertheless, qReduMIS substantially outperforms standalone QAOA. Interestingly, the median success probability remains close to its noiseless value, suggesting that noise primarily affects the tails of the distribution rather than the typical-case performance. A similar behavior is observed for standalone QAOA: the noisy and noiseless distributions are comparable, and noise does not uniformly reduce performance. Consistent with previous results, standalone QAOA exhibits improved (median) performance as the number of layers increases. The distribution of the number of QPU calls made by qReduMIS is shown in Fig.~\ref{fig:calls_3regular}. We find that at least 99\% of the instances in our testbed require no more than 4 QPU calls, corresponding to a total of 1,000 shots (since each call uses 250 shots).\\

\subsection{Technical details about the numerical simulations \label{appendix:info_numerical_sim}}

The hardware used for simulations with the Quantinuum emulator was based on AMD CPUs. Depending on the experiment, we utilized up to 56 CPU cores and 60 GB of RAM to parallelize the runs of both qReduMIS and QAOA.

\textbf{Specifications of qReduMIS implementation used in the benchmarks.} We utilize the semi-greedy implementation of qReduMIS with the in-set criterion to select frozen nodes from the QPU output \cite{schuetz2025qredumis}. We pick $r=1$ candidate node from a restricted candidate list (RCL) of size $K_{\mathrm{RCL}}=2$ (we use $r$ here to avoid clashing with the correlation threshold $\lambda$). That is, we sample nodes with the highest probability of being in the set from measurement results $\{\mathcal{I}_{n}\}$ whose Hamming weight corresponds to the largest or second-largest IS size. We make a subset of the top $M=4$ nodes with highest frequency among the nodes corresponding to the measurements result in the RCL. We then randomly select one node from these $M$ and we add that node to the set of selected nodes $\mathcal{S}$. As we use the in-set criterion, the selected node and its neighbors compose $\mathcal{Q}$ and we proceed to remove them from the kernel graph and output a new kernel graph that is the input for the next qReduMIS iteration.

\end{document}